\newtheorem{Proposition}{Proposition}[section]
 \date{}
\title{\textbf{\Large Log-symmetric quantile regression models}}
 \author{\normalsize
{Helton Saulo}$^{1}$\thanks{Corresponding author. Helton Saulo. Department of Statistics, Universidade de Bras\'ilia, Bras\'ilia, Brazil. Email: heltonsaulo@gmail.com}\,\,, {Alan Dasilva}$^{1}$,\, {V\'ictor Leiva}$^{2}$\,\,\text{and}\,\,{Luis S\'{a}nchez}$^{3}$   \\
   {\small $^{1}$Department of Statistics, Universidade de Bras\'{i}lia, Bras\'{i}lia, Brazil}\\[-0.15cm]
  {\small $^{2}$Department of Statistics, Pontificia Universidad Cat\'{o}lica de  Chile, Santiago, Chile}\\[-0.15cm]
  {\small $^{3}$Department of Mathematics and Statistics, Universidad de La Frontera, Temuco, Chile}\\[-0.15cm]
 }
\begin{document}
	

\maketitle

\begin{abstract}

Regression models based on the log-symmetric family of distributions are particularly useful when the response is strictly positive and asymmetric. In this paper, we propose a class of quantile regression models based on reparameterized log-symmetric distributions, which have a quantile parameter. Two Monte Carlo simulation studies are carried out using the \texttt{R} software. The first one analyzes the performance of the maximum likelihood estimators, the information criteria AIC, BIC and AICc, and the generalized Cox-Snell and random quantile residuals. The second one evaluates the performance of the size and power of the Wald, likelihood ratio, score and gradient tests. A real box office data set is finally analyzed to illustrate the proposed approach.
 
\paragraph{Keywords}
Log-symmetric distributions; Quantile regression; Monte Carlo simulation; Hypothesis Tests.
\end{abstract}


\section{Introduction}\label{sec:01}

{Regression models with a continuous strictly positive and asymmetric dependent variable (response) have been widely applied in different fields, such as economics, environmental science, reliability and survival analysis, among others; see, for example, \cite{Lemonte-2009}, \cite{Leiva-2015a}, \cite{vslm:18}, \cite{vanegas2017log}, \cite{slgs:2020a,slgs:2020b} and \cite{slgs:2020c}.} Some distributions that have been used for regression are the log-normal, gamma, Weibull, Birnbaum-Saunders; see details on these models in \cite{jkb:94,jkb:95}. In a broader perspective, classes of distributions offer more flexibility for a regression model, since several distributions are obtained as special cases; the exponential family \citep{nm:83} is a good example. 

The class of log-symmetric distributions studied by \cite{vanegasp:16a} comprise several distributions that are generally used in the modeling of continuous, strictly positive and asymmetric data. This class also accommodates the possibility to model bimodal and/or light- and heavy-tailed data. Log-symmetric distributions are based on symmetric distributions \citep{fa:90}, that is, they arise
when the distribution of the logarithm of the random variable (RV) is symmetrical. Thus, if $Y$ is a symmetric random variable, then $T=\exp(Y)$ follows a log-symmetric distribution. Regression models based on log-symmetric distributions were studied by 
\cite{vanegas2015,vp:16a,vanegas2017log}, where it is allowed both median and skewness (or the relative dispersion) to be modeled. Other works addressing applications of log-symmetric models, mainly in the area of economics, can be seen {in \cite{saulo2017log}} and \cite{vslm:18}.

Quantile regression models play an important role in the regression literature. The advantage behind the use of these models lies in their flexibility to show different effects of the independent variables (covariates) on the response along the quantiles of the response. Moreover, quantile regression models are more robust to outliers; see \cite{k:05} and \cite{hn:07}.

The main objective of this paper is to propose a new class of quantile regression models based on the log-symmetric distributions. A reparameterization  of the log-symmetric distributions is proposed by inserting {the quantile function} as one of the parameters. We carry out two Monte Carlo simulation studies using the \texttt{R} software. The first study evaluates the performance of the maximum likelihood estimators, the generalized Cox-Snell (GCS) \citep{cs:68} and randomized quantile (RQ) residuals \citep{ds:96}, and the Akaike \citep{akaike1974}, Bayesian \citep{schwarz1978} and corrected Akaike \citep{bozdogan1987} information criteria. The second simulation study analyzes the performance of of the likelihood ratio, score, Wald and gradient tests.

The rest of this paper proceeds as follows. In Section \ref{sec:2}, we describe preliminary aspects regarding the family of log-symmetric distributions as well as the proposed reparameterized log-symmetric distribution. In Section \ref{sec:3}, we introduce the log-symmetric quantile regression model. In this section, we also describe the score function, Fisher information matrix, hypothesis tests based on the maximum likelihood estimators, model selection criteria and residuals. In Section \ref{sec:4}, we carry out two Monte Carlo simulation studies. In Section \ref{sec:5}, we apply the proposed model to a real box office data set, and finally in Section \ref{sec:6}, we provide some concluding remarks.

\section{Log-symmetric distributions}\label{sec:2}

{In this section, we describe briefly the classical log-symmetric distributions. We then introduce a quantile-based reparameterization of this distribution, and subsequently present some of its properties.}

\subsection{Classical log-symmetric distributions}

A random variable $Y$, strictly continuous with support on $(0,\infty)$, follows a log-symmetric distribution with scale parameter $\lambda > 0$ and power parameter $\phi > 0$, denoted by $Y \sim LS(\lambda,\phi,g)$, if its probability density function (PDF) and cumulative distribution function (CDF) are respectively given by 
\begin{equation}\label{eq:lg-pdf}
f_{Y}(y;\lambda,\phi)
=
\dfrac{\xi_{nc}}{\sqrt{\phi}\,y}\,
g\!\left(\dfrac{1}{\phi}\left[\log(y)-\log(\lambda)\right]^2\right), 
\quad y>0,
\end{equation}
and
\begin{equation}\label{eq:lg-cdf}
 F_{Y}(y;\lambda,\phi)=G\!\left(\dfrac{1}{\phi}\left[\log(y)-\log(\lambda)\right]^2 \right), \quad y>0,
\end{equation}
where $g(\cdot)$ is the density generator kernel possibly associated with an additional parameter $\vartheta$ (or parameter vector $\bm{\vartheta}$), $\xi_{nc}$ is a normalizing constant and $G(\omega) = \xi_{nc} \int_{-\infty}^{\omega} g(z^2) \textrm{d}z$, $\omega\in\mathbb{R}$; see \cite{vanegasp:16a}. The parameters $\lambda >0 $ and $\phi > 0$ represent the median and skewness (or relative dispersion), respectively, of the distribution of $Y$. The transformation $V = \log(Y)$ leads to an random variable $V$ following a symmetric distribution with location parameter $\mu = \log(\lambda) \in \mathbb{R}$, denoted $V \sim S(\mu, \phi, g)$, with PDF given by
\begin{equation}\label{eq:sym-pdf}
f_{V}(v;\mu,\phi)
=
\dfrac{\xi_{nc}}{\sqrt{\phi}}\,
g\!\left(\dfrac{1}{\phi}\left[v-\mu\right]^2\right),
\quad v\in\mathbb{R}.
\end{equation}
Note that $G$ in \eqref{eq:lg-cdf} is the CDF of a symmetric random variable $V \sim S(0, 1, g)$. The {100$q$-th quantile} of $Y \sim \textrm{LS}(\lambda,\phi,g)$ is given by
\begin{equation}\label{eq:lg-qf}
 Q_{Y}(q;\lambda,\phi)=\lambda\exp\big(\sqrt{\phi}\,z_{q}\big),
\end{equation}
where $z_{q}=G^{-1}(q)$ is the {100$q$-th quantile} of $V\sim\textrm{S}(\mu,\phi,g)$. Some log-symmetric distributions obtained from different $g$ function are presented in Table~\ref{tab:gfuncions}, {where $l$ is a real constant and  $\textrm{IGF}(x,l)= \int_{0}^{1}\exp(-xt)t^{l-1}\textrm{d} t$, for $l>0$ and $x \geq 0$, is the incomplete gamma function}. More details on the class of log-symmetric distributions, as well as other log-symmetric distributions can be seen in \cite{vanegasp:16a}.
\begin{table}[!ht] 
\centering
\small
\caption{Density generator $g(u)$ for some log-symmetric distributions.}\label{tab:gfuncions}
\vspace{0.15cm}
\begin{tabular}{llllll}
\hline {Distribution}                     &&             $g(u)$                                                              \\ 
\hline
Log-normal($\lambda,\phi$)                  && $\exp\left( -\frac{1}{2}u\right)$   
\\[1.5ex]
Log-Student-$t$($\lambda,\phi,\vartheta$)        && $\left(1+\frac{u}{\vartheta} \right)^{-\frac{\vartheta+1}{2}}$, $\vartheta>0$    
\\[1.5ex]  
Log-power-exponential($\lambda,\phi,\vartheta$)     && $\exp\left( -\frac{1}{2}u^{\frac{1}{1+\vartheta}}\right)$, $-1<{\vartheta}\leq{1}$   
\\[1.5ex]   
Log-hyperbolic($\lambda,\phi,\vartheta$) && $ \exp(-\vartheta\sqrt{1+u})$, $\vartheta>0$
\\[1.5ex]
Log-slash($\lambda,\phi,\vartheta$) && $\textrm{IGF}\left(\vartheta+\frac{1}{2} ,\frac{u}{2}\right)$, $\vartheta>0$
\\[1.5ex]
Log-contaminated-normal($\lambda,\phi,{\bm\vartheta}=(\vartheta_1,\vartheta_2)^\top$) &&
$
\sqrt{\vartheta_2}\exp\left(-\frac{1}{2}\vartheta_2 u\right)+\frac{(1-\vartheta_1)}{\vartheta_1}
\exp\left(-\frac{1}{2} u\right)$,
\\
&& $0<\vartheta_1,\vartheta_2<1$
\\[1.5ex]
Extended Birnbaum-Saunders($\lambda,\phi,\vartheta$)         && $\cosh(u^{1/2})\exp\left(-\frac{2}{\vartheta^2}\sinh^2(u^{1/2}) \right) $, $\vartheta>0$ 
\\[1.5ex] 
Extended Birnbaum-Saunders-$t$($\lambda,\phi,{\bm\vartheta}=(\vartheta_{1},\vartheta_{2})^{\top}$) && $\cosh(u^{1/2})\left(\vartheta_{2}\vartheta_{1}^2+4\sinh^2(u^{1/2})\right)^{-\frac{\vartheta_{2}+1}{2}} $,\\  
&& $\vartheta_{1},\vartheta_{2}>0$
\\\hline
\end{tabular}
\end{table}

\subsection{Quantile-based log-symmetric distributions}

Consider a fixed number $q \in (0,1)$ and the one-to-one transformation $(\lambda,\phi) \mapsto (Q,\phi)$, where $Q$ is the {100$q$-th quantile} of $Y$ given in \eqref{eq:lg-qf}. We introduce a reparametrization of the log-symmetric distribution based on $Q$, by writing $\lambda = Q/ \exp(\sqrt{\phi} z_p)$ in \eqref{eq:lg-pdf} and \eqref{eq:lg-cdf}. Then, we obtain the PDF and CDF of $Y$ as
\begin{equation}\label{eq:quant:cd}
 F_{Y}(y;Q,\phi)
=
G\!\left(\frac{1}{\phi} \left[ \log(y)-\log(\lambda) \right]^2 \right)=
G\!\left(\frac{1}{\phi} \left[ \log(y)-\log(Q)+\sqrt{\phi}\,z_{q} \right]^2 \right),\quad y>0,
\end{equation}
and
\begin{equation}\label{eq:quant:ft}
f_{Y}(y;Q,\phi)
=
\dfrac{\xi_{nc}}{\sqrt{\phi}\,y}\,
g\!\left(\frac{1}{\phi} \left[ \log(y)-\log(Q)+\sqrt{\phi}\,z_{q} \right]^2 \right), 
\quad y>0,
\end{equation}
respectively. Let us denote $Y\sim\textrm{QLS}(Q,\phi,g)$. Now, $V=\log(Y)\sim\textrm{QS}(\Psi,\phi,g)$, that is, $V$ has a symmetric distribution with PDF 
$f_{V}(v;\Psi,\phi)
=
{\xi_{nc}}/{\sqrt{\phi}}\,
g\!\left({\dfrac{1}{\phi}} \left[v-\Psi+\sqrt{\phi}\,z_{q}\right]^2\right),
\quad v\in\mathbb{R}$,
where $\Psi=\log(Q)$. Next, we present some properties of the log-symmetric distribution reparameterized by the quantile.

\begin{Proposition}\label{prop:01}
Let $Y\sim\textrm{QLS}(Q,\phi,g)$ . Then, $cY\sim\textrm{QLS}(cQ,\phi,g)$, with $c>0$.
\end{Proposition}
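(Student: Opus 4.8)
The plan is to apply the standard change-of-variables formula to the monotone transformation $W = cY$ and to verify that the resulting density is exactly that of a $\textrm{QLS}(cQ,\phi,g)$ random variable. Since $c>0$, the map $y \mapsto cy$ is a strictly increasing bijection from $(0,\infty)$ onto $(0,\infty)$, with inverse $y = w/c$ and Jacobian $\mathrm{d}y/\mathrm{d}w = 1/c$, so the transformation technique applies cleanly.

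First I would write $f_W(w) = f_Y(w/c)\,(1/c)$ for $w>0$, using the PDF in \eqref{eq:quant:ft}. The factor $1/(w/c)$ coming from the $1/y$ term in \eqref{eq:quant:ft} combines with the Jacobian $1/c$ to yield exactly $1/w$, so the leading factor $\xi_{nc}/(\sqrt{\phi}\,w)$ of the target density reappears untouched; in particular the normalizing constant $\xi_{nc}$ and the power parameter $\phi$ are unchanged.

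Next I would simplify the argument of the density generator $g$. Writing $\log(w/c) = \log(w) - \log(c)$, the bracket inside $g$ becomes $\log(w) - \log(c) - \log(Q) + \sqrt{\phi}\,z_{q} = \log(w) - \log(cQ) + \sqrt{\phi}\,z_{q}$, since $\log(c) + \log(Q) = \log(cQ)$. Substituting this back shows that $f_W$ coincides with the $\textrm{QLS}(cQ,\phi,g)$ density evaluated at $w$, which is the claim.

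There is essentially no serious obstacle here; the only point requiring care is the bookkeeping ensuring that the $1/y$ factor and the Jacobian cancel, leaving the normalizing constant and $\phi$ intact while only the quantile shifts from $Q$ to $cQ$. An equally short alternative would be to observe that $\log(cY) = \log(c) + \log(Y)$, and since $\log(Y) \sim \textrm{QS}(\log(Q),\phi,g)$ is symmetric and symmetric laws are closed under location shifts, $\log(cY) \sim \textrm{QS}(\log(cQ),\phi,g)$, which is equivalent to the assertion.
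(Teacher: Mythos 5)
Your argument is correct and is essentially the same computation as the paper's, the only cosmetic difference being that the paper works directly with the CDF via $P(cY\leq y)=P(Y\leq y/c)$ and equation \eqref{eq:quant:cd}, which sidesteps the Jacobian bookkeeping entirely, whereas you work with the PDF; the key algebraic step, $\log(y)-\log(c)-\log(Q)=\log(y)-\log(cQ)$, is identical in both. Your closing remark about the equivalent log-scale location-shift argument is also valid and matches the spirit of the $\textrm{QS}$ representation given in the paper.
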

\begin{proof} If $Y\sim\textrm{QLS}(Q,\phi,g)$ with $c>0$, then 
\begin{eqnarray*}
 P(cY\leq{y})=P\left(Y\leq\frac{y}{c}\right)&=&
G\!\left(\frac{1}{\phi} \left[ \log(y)-\log(c)-\log(Q)+\sqrt{\phi}\,z_{q} \right]^2 \right)\\
&=&G\!\left(\frac{1}{\phi} \left[ \log(y)-\log(cQ)+\sqrt{\phi}\,z_{q} \right]^2 \right).
\end{eqnarray*}
\end{proof}

\begin{Proposition}\label{prop:02}
Let $Y\sim\textrm{QLS}(Q,\phi,g)$. Then, $Y^c\sim\textrm{QLS}(Q^c,c^2\phi,g)$, with $c>0$.
\end{Proposition}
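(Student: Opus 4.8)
The plan is to proceed exactly as in the proof of Proposition~\ref{prop:01}, working with the CDF rather than the density. Since $c>0$, the map $t\mapsto t^{c}$ is strictly increasing on $(0,\infty)$ and hence preserves inequalities, so for $w>0$ I would first write
\[
P(Y^{c}\le w)=P\big(Y\le w^{1/c}\big)
=G\!\left(\frac{1}{\phi}\left[\log\big(w^{1/c}\big)-\log(Q)+\sqrt{\phi}\,z_{q}\right]^{2}\right),
\]
invoking the CDF in \eqref{eq:quant:cd}. The monotonicity of the power transformation is the only probabilistic input, and it justifies the first equality; everything after this is algebraic.

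The next step is to manipulate the argument of $G$ into the canonical form. Using $\log(w^{1/c})=\tfrac{1}{c}\log(w)$, the bracket becomes $\tfrac{1}{c}\log(w)-\log(Q)+\sqrt{\phi}\,z_{q}$. I would then pull a factor $\tfrac{1}{c}$ out of the bracket by writing $\log(Q)=\tfrac{1}{c}\log(Q^{c})$ and $\sqrt{\phi}=\tfrac{1}{c}\sqrt{c^{2}\phi}$, both valid for $c>0$, so that the bracket equals $\tfrac{1}{c}\big[\log(w)-\log(Q^{c})+\sqrt{c^{2}\phi}\,z_{q}\big]$. Squaring introduces a factor $\tfrac{1}{c^{2}}$, which combines with the prefactor $\tfrac{1}{\phi}$ to produce $\tfrac{1}{c^{2}\phi}$, yielding
\[
P(Y^{c}\le w)
=G\!\left(\frac{1}{c^{2}\phi}\left[\log(w)-\log(Q^{c})+\sqrt{c^{2}\phi}\,z_{q}\right]^{2}\right).
\]
This is precisely the CDF of a $\textrm{QLS}(Q^{c},c^{2}\phi,g)$ random variable in the form \eqref{eq:quant:cd}, which establishes the claim.

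As for difficulty, there is essentially no obstacle here: the statement is a direct reflection of the scale/location structure of the underlying symmetric distribution. The single point requiring care is the bookkeeping of the power parameter, namely recognizing that $\sqrt{c^{2}\phi}=c\sqrt{\phi}$ and that the \emph{same} $z_{q}=G^{-1}(q)$ appears on both sides, so that the quantile level $q$ is genuinely preserved while the new power parameter comes out as $c^{2}\phi$ and not $c\phi$. An equally short alternative would be to note $\log(Y^{c})=c\log(Y)$ and appeal to the fact that an affine transformation of a symmetric random variable $V\sim\textrm{S}(\mu,\phi,g)$ is again symmetric with appropriately rescaled parameters; but the direct CDF computation above is cleaner and self-contained.
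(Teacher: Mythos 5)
Your proposal is correct and follows essentially the same route as the paper's own proof: both compute $P(Y^{c}\le y)=P(Y\le y^{1/c})$ via the CDF in \eqref{eq:quant:cd} and then rewrite the argument of $G$ as $\tfrac{1}{c^{2}\phi}\bigl[\log(y)-\log(Q^{c})+\sqrt{c^{2}\phi}\,z_{q}\bigr]^{2}$. The extra detail you supply on extracting the factor $\tfrac{1}{c}$ is exactly the algebra the paper leaves implicit.
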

\begin{proof} If $Y\sim\textrm{QLS}(Q,\phi,g)$ with $c > 0$,
Then,
\begin{eqnarray*}
 P(Y^c\leq{y})=P\left(Y\leq{y^{1/c}}\right)&=&
G\!\left(\frac{1}{\phi} \left[\frac{1}{c} \log(y)-\log(Q)+\sqrt{\phi}\,z_{q} \right]^2 \right)\\
&=&G\!\left(\frac{1}{c^2\phi} \left[ \log(y)-\log(Q^c)+\sqrt{c^2\phi}\,z_{q} \right]^2 \right).
\end{eqnarray*}
\end{proof}

\section{Log-symmetric quantile regression model}\label{sec:3}

Let $Y_1, \ldots, Y_n$, be a set of $n$ independent random variables such that $Y_i \sim\textrm{QLS}(Q_{i},\phi_{i},g)$, and ${\bf y}=(y_1,\ldots,y_n)^{\top}$ be the observations on $Y_1, \ldots, Y_n$. We then define
\begin{equation}\label{eq:logsymreg:quant:01}
Y_{i}=Q_{i}\,\epsilon_{i}^{\sqrt{\phi_i}},  \quad \epsilon_{i} \sim \textrm{QLS}(1, 1, g),
\end{equation}
which implies
\begin{eqnarray}\label{eq:logsymreg:quant:02}\nonumber
V_i
=
\log(Y_{i})
&=& 
\log(Q_{i})+\sqrt{\phi_i}\log(\epsilon_{i})\\ 
&=&\Psi_{i}+\sqrt{\phi_i}\varepsilon_{i} 
\quad i = 1, \ldots, n,
\end{eqnarray}
such that $V_{i} \sim \textrm{QS}(\Psi_{i},\phi_i,g)$, $\varepsilon_{i}=\log(\epsilon_{i})\sim \textrm{QS}(0,1,g)$,
$\log(Q_{i})=\Psi_{i}=\bm{x}_i^\top\bm \beta$ and
$\log(\phi_i) =\bm{w}^{\top}_{i}\bm{\tau}$, where $\bm{\beta} =(\beta_0,\ldots,\beta_{k})^\top$ and
$\bm{\tau}=(\tau_0,\ldots,{\tau_{l}})^\top$ are the vectors of unknown parameters to be estimated, 
${\bm{x}}^{\top}_{i}= (1,x_{i1},\ldots, x_{ik})^\top$ and
${\bm{w}}^{\top}_{i} = (1,w_{i1}, \ldots, w_{il})^\top$ 
are the sets of $k$ and $l$ covariates, respectively,

The estimation of the model parameters is performed using the maximum likelihood method. The log-likelihood function (without {the constant term}) for $\bm\theta=(\bm\beta^{\top},\bm\tau^{\top})^{\top}$ is given by  
\begin{equation}\label{mleequation}
\ell(\bm \theta)=\sum_{i=1}^{n} \left\{ \log(g(z^2_i))-\dfrac{1}{2}\log(\phi_i) \right \},
\end{equation}
where $z_i=[\log(y_i)-\log(Q_i)+\sqrt{\phi_i}\,z_{q}]/ \sqrt{\phi_i}$, for $i=1,\ldots,n$, with $z_{q}$ as given in \eqref{eq:lg-qf}. To obtain the maximum likelihood estimates, one must maximize the log-likelihood function in \eqref{mleequation} by equating the score vector $\dot{\bm \ell}(\bm{\theta})$, which contains the first derivatives of the log-likelihood function, to zero, providing the likelihood equations. The score vector $\dot{\bm \ell}(\bm{\theta})$ can be divided into two components
\begin{equation}\label{eq:score}
\dot{\bm \ell}(\bm{\theta}) = (\dot{\bm \ell}_{\bm \eta}(\bm{\theta}), \dot{\bm \ell}_{\bm \phi}(\bm{\theta}))^{\top},
\end{equation}
where $\dot{\bm \ell}_{\bm \eta}(\bm{\theta}) = (\dot{\bm \ell}_{ \beta_0}(\bm{\theta}), \cdots, \dot{\bm \ell}_{ \beta_k}(\bm{\theta}))^\top$ and $\dot{\bm \ell}_{\bm \phi}(\bm{\theta}) = (\dot{\bm \ell}_{ \tau_0}(\bm{\theta}), \cdots, \dot{\bm \ell}_{ \tau_l}(\bm{\theta}))^{\top}$, with 
{
\begin{equation*}
\dot{\bm \ell}_{\beta_j}(\bm{\theta}) = \sum_{i = 1}^{n} v(z_i) z_i \frac{x_{ij}}{\sqrt{\phi_i}},
\quad j = 0, \ldots, k,
\end{equation*}
\begin{equation}
\dot{\bm \ell}_{\tau_j}(\bm{\theta}) = \frac{1}{2} \sum_{i = 1}^{n} {\dfrac{w_{ij}}{\phi_i}} \left[  v(z_i) z_i (z_i - z_p) - 1 \right]
, \quad j = 0, \ldots, l,  \nonumber
\end{equation}} 
$v(z_i) = - 2 {g'(z_i^2)} / {g(z_i^2)}$. They must be solved by an iterative procedure for non-linear optimization

The solution of the system of likelihood equations cannot be obtained analytically, which makes it necessary to use an iterative procedure for non-linear optimization, such as the Broyden-Fletcher-Goldfarb-Shanno (BFGS) quasi-Newton method; see \cite{nw:06} and \cite{lange2010}. {The BFGS method is implemented in the \texttt{R} software 
\citep{rcodeteam:20} by the \texttt{maxLik} package.} Inference for $\bm \theta$ can be based on the asymptotic distribution of $\widehat{\bm \theta}$. Under some regularity conditions \citep{eh:78}, we have
{
\begin{equation}\label{norm:asym}
\sqrt{n}\,[\widehat{{\bm \theta}} -{\bm \theta}] \xrightarrow{D} \textrm{N}_{2+k+l}(\bm{0}, \mathcal{ I}({\bm \theta})^{-1}),
\end{equation}}
as $n \to \infty$, where {$\xrightarrow{D}$} denotes ``convergence in distribution'', $\bm{0}$ is a $(2+k+l) \times 1$ vector of zeros and $\mathcal{ I}({\bm \theta})$ is the expected Fisher information matrix, which can be obtained by computing $\mathcal{I}(\bm{\theta}) = \mathrm{E}[ - \partial^2 \ell(\bm{\theta})/\partial\bm{\theta}\partial\bm{\theta}^\top]$. The diagonal elements of $\mathcal{I}(\bm{\theta})$ are given by {$\mathcal{ I}_{\bm \beta \bm \beta}({\bm \theta})$ and $ \mathcal{ I}_{\bm \tau \bm \tau}({\bm \theta})$}, where
{
\begin{equation*}
\mathcal{ I}_{\bm \beta \bm \beta}({\bm \theta}) = \sum_{i=1}^{n} \frac{x_{ij}^2}{{\phi_i}} d_g(\zeta), \quad j = 0, \ldots, k,
\end{equation*}
and
\begin{equation}
\mathcal{ I}_{\bm \tau \bm \tau}({\bm \theta}) = \frac{1}{4} \sum_{i = 1}^{n} {\left(\dfrac{w_{ij}}{\phi_i}\right)^2} f_g(\zeta), \quad j = 0, \ldots, l,
\nonumber
\end{equation}
with $d_g(\zeta) = \mathrm{E}[v^2(Z) Z^2]$ and $f_g(\zeta) = \mathrm{E}[(v(Z)Z(Z - z_p) - 1)^2]$, where $Z = [\log(Y)-\log(Q)+\sqrt{\phi}\,z_{q}]/ \sqrt{\phi} \sim \textrm{S}(0,1,g)$.}

Note that in \eqref{mleequation} the extra parameter $(\vartheta)$ is assumed to be fixed. The motivation to leave it fixed comes from the works of \cite{l:97} and \cite{kanoetal93}. The former work shows that robustness against outliers under the Student-$t$ model holds only if the degrees of freedom parameter is fixed instead of estimated by the maximum likelihood method. The latter work reports difficulties in estimating the extra parameter of the power-exponential; see \cite{vanegasp:16a}. Thus, two steps are executed to estimate the extra parameter: Step 1) consider a grid of values for $\vartheta$, $\vartheta_1,\vartheta_2,\ldots,\vartheta_k$ say, and compute the maximum likelihood estimates of $\bm\theta$ for each value of $\vartheta_i$, $i=1,\ldots,k$. Also compute the value of the log-likelihood function; Step 2) the final estimates of $\vartheta$ of $\bm\theta$ are those that maximize the log-likelihood function.

\subsection{Hypothesis testing}
\label{sec:hyptest}

Suppose that our interest lies in testing the hypothesis $H_0: {\bm \theta} = {\bm \theta_0}$ against the alternative hypothesis $H_1: {\bm \theta} \neq {\bm \theta_0}$, where $\bm\theta=(\bm\beta^{\top},\bm\tau^{\top})^{\top}$ is a vector of parameters of interest.
In the literature on hypothesis testing for regression models, the Wald \citep{w:47}, score \citep{rao:48} and likelihood ratio \citep{wk:38} tests are widely used to test these hypotheses. 
The Wald, score and likelihood ratio statistics, denoted by $S_W$, $S_R$ e $S_{LR}$ are given respectively by:
\begin{eqnarray}
S_W & = & (\widehat{\bm \theta} - {\bm \theta_0})^{\top} \mathcal{ J}(\widehat{\bm \theta}) (\widehat{\bm \theta} - {\bm \theta_0}),
\nonumber \\
S_R & = & \dot{\bm \ell}(\bm{\theta_0})^{\top} \mathcal{ I}({\bm \theta_0})^{-1} \dot{\bm \ell}(\bm{\theta_0}), \nonumber \\
S_{LR} & = & -2 \left[  \ell(\bm \theta_0) - \ell(\widehat{\bm \theta}) \right], \nonumber
\end{eqnarray}
where $\mathcal{ J}(\bm{\theta}) = - \left\lbrace \partial^2 \ell(\bm{\theta}) / \partial \bm{\theta} \partial \bm{\theta}^\top \right\rbrace $ is the observed Fisher information. \cite{rao2005} suggests a modification to the $S_R$ test by replacing $\mathcal{I}(\bm{\theta_0})$ with 
$\mathcal{J}(\bm{\theta_0})$, which simplifies the calculation of the score statistic mainly under conditions in which the computation of $\mathcal{I}(\bm{\theta})$ is very complicated.

A prominent test with properties similar to those previously presented was developed by \cite{te:02}, the gradient test. It is based on a modification of the $S_R$ statistic. The advantage of the gradient statistic, denoted by $S_T$, is its {easy} of calculation, since just like the $S_{LR}$ statistic, there is no need to obtain the Fisher information matrix. The statistic $S_T$ is given by
\begin{equation*}\label{eq:gradient}
S_T = \dot{\bm \ell}(\bm{\theta_0})^{\top} (\widehat{\bm \theta} -\bm \theta_0).
\end{equation*}


The statistics $S_W$, $S_{LR}$, $S_R$ and $S_T$ are asymptotically equivalent, that is, these statistics, under $H_0$ and as $n\rightarrow\infty$, converge in distribution to $\chi^2_r$, where $r$ is the number of parameters set in $H_0$. We reject $H_0$ at nominal level $\alpha$ if the test statistic is larger than $C_{r,1-\alpha}$, the upper $\alpha$ quantile of the $\chi^2_r$ distribution.

\subsection{Information criteria}
In this work, the Akaike (AIC), Bayesian (BIC) and corrected Akaike (AICc) information criteria will be used to select the best models.
The AIC \citep{akaike1974} starts from the assumption of the existence of an unknown ``true model'', in such a way that, for a group of candidate models, the best model is the one that presents the smaller divergence in relation to the ``true model''. By taking the log-likelihood function at $\bm{\widehat{\theta}}$, $\ell(\bm{\widehat{\theta}})$, as an estimate of the divergence, penalized by the number  parameters $p$ of $\bm{\widehat{\theta}}$, the AIC is given by
\begin{equation}\label{eq:aic}
\mathrm{AIC} = -2 \ell(\bm{\widehat{\theta}}) + 2p.
\end{equation}
\cite{sakamoto1986} and \cite{sugiura1978} suggest that AIC can perform poorly in the face of a large number of parameters, which, in turn, can lead to a wrong conclusion regarding the best model. With the aim to circumvent the bias caused by this issue, \cite{bozdogan1987} proposed a correction to the AIC, denoted by AICc, which is given by
\begin{equation}\label{eq:aicc}
\mathrm{AICc} = \mathrm{AIC} + \frac{2p(p + 1)}{n - p -1}, 
\end{equation}
where $n$ is the sample size.

The BIC was proposed by  \cite{schwarz1978} and assumes the existence of a ``true model''. This criterion is defined as the statistic that maximizes the probability of choosing the true model, among the postulated models. In addition to penalizing the number of parameters $p$, the BIC also takes into account the sample size $n$ and is given by
\begin{equation}\label{eq:bic}
\mathrm{BIC} = -2 \ell(\bm{\widehat{\theta}}) + p\log(n). \nonumber
\end{equation}
For the three information criteria presented, the best model, among the candidate models, is the one with the lowest value of the information criterion.

\subsection{Residuals}
Residual analysis is an important tool to assess goodness of fit and departures from the assumptions of the regression model. We consider two types of residuals, the generalized Cox-Snell (GCS) and randomized quantile (RQ) residuals. The GCS \citep{cs:68} residual is given by
\begin{equation}\label{eq:res-cs}
\widehat{r_i}^{\text{GCS}} = - \log (  1 - F_Y(y_i;\widehat{Q}_i,\widehat{\phi}_i) ), \quad i = 1,2,\ldots,n, 
\end{equation}
where $F_Y(y_i;\widehat{Q}_i,\widehat{\phi}_i)$ is given in \eqref{eq:quant:cd}. {If the model is correctly specified, the GCS residual is asymptotically standard exponential distributed, EXP(1) in short; see \cite{b:10}}. The RQ residual is given by
\begin{equation}\label{eq:res-qr}
\widehat{r_i}^{\text{{RQ}}} = \Phi^{-1} ( F_Y(y_i;\widehat{Q}_i,\widehat{\phi}_i) ), \quad i = 1,2,\ldots,n, 
\end{equation}
where $\Phi(\cdot)^{-1}$ is the inverse function of the standard normal CDF; {see \cite{ds:96}}. {The RQ residual follows approximately a standard normal distribution, that is, N(0,1)}, if the model is correctly specified whatever the specification of the model is. For both residuals, the distribution assumption can be assessed through graphical techniques and descriptive statistics.

\section{Monte Carlo simulation studies}\label{sec:4}

In this section we will present the results of two Monte Carlo simulation studies for the log-symmetric quantile regression models. In the first simulation, the performance of the maximum likelihood estimators, information criteria and residuals are evaluated. In the second simulation, it is evaluated the performance of the Wald ($S_W$), score ($S_R$), likelihood ratio ($S_{LR}$) and gradient ($ S_{T}$) tests. In both simulations, the scenario considers the following setting: sample sizes $ n\in \{50, 100, 200\}$, quantiles $q\in\{0.25, 0.5, 0.75\}$, with 5,000 Monte Carlo replications. The following log-symmetric distributions are considered: 
log-normal (log-NO),
log-Student-$t$ (log-$t$),
log-power-exponential (log-PE),
log-hyperbolic (log-HP),
log-slash (log-SL),
log-contaminated-normal (log-CN),
extended Birnbaum-Saunders (EBS), and 
extended Birnbaum-Saunders-$t$ (EBS-$t$). The values of the extra parameters for the distributions are
$\vartheta=3$ (log-$t$), 
$\vartheta=0.3$ (log-PE),
$\vartheta=2$ (log-HP), 
$\vartheta=4$ (log-SL),
$\vartheta_1=0.1$,$\vartheta_2=0.2$ (log-CN),
$\vartheta=0.5$ (EBS), and 
$\vartheta_1=0.5$,$\vartheta_2=3$ (EBS-$t$).

\subsection{Study 1: performance of estimators, information criteria and residuals} 

The data generating model is given by
$$Y_i = (\beta_0 + \beta_1 x_i) \epsilon_i^{\sqrt{\tau_0 + \tau_1 w_i}},\quad i = 1, 2, \ldots, n,$$ 
with $\beta_0 = 1.5$, $\beta_1 = 0.5$, $\tau_0 = 1$, $\tau_1 = 0.5$, and 
$\epsilon_i\sim\text{QLS}(1,1,g)$. The covariate values for $x_i$ and $w_i$ were obtained as Bernoulli(0.5) and Uniform(0,1) random draws, respectively. The performance and recovery of the maximum likelihood estimators are assessed by the estimates of the bias and mean squared error (MSE), which are given, respectively, by
$$\textrm{Bias}(\widehat{\varphi}) = \frac{1}{M} \sum_{r = 1}^{M} (\widehat{\varphi}^{(r)} - \varphi), \quad \mathrm{MSE}(\widehat{\varphi}) = \frac{1}{M} \sum_{r = 1}^{M} (\widehat{\varphi}^{(r)} - \varphi)^2,$$
where $\varphi$ and $\widehat{\varphi}^{(r)}$ are the true value of the parameter and its respective $r$th estimate, while $M$ is the number of Monte Carlo replicas. Furthermore, we also compute the coverage probability (CP) of 95\% confidence interval (CI) based on the asymptotic normality \eqref{norm:asym}. The CP is computed as
$$\mathrm{CP}(\widehat{\varphi}) = \frac{1}{M} \sum_{r = 1}^{M} I(\varphi \in [\widehat{\varphi}_L^{(r)},\widehat{\varphi}_U^{(r)}]),$$
where $I(\cdot)$ is an indicator function such that $\varphi$ belongs to the $r$th interval $[\widehat{\varphi}_L^{(r)},\widehat{\varphi}_U^{(r)}]$, with 
$\widehat{\varphi}_L^{(r)}$ and $\widehat{\varphi}_U^{(r)}$ being the $r$th upper and lower limit estimates of the 95\% CI.

In addition to analyzing the performance of the maximum likelihood estimators, the AIC, BIC and AICc as well as the GCS and RQ residuals were also evaluated for the proposed quantile regression models. The idea behind the evaluation of the information criteria is in the ability of the AIC, BIC and AICc to choose the correct log-symmetric distribution. For example, if we generate samples from the log-normal quantile regression model, we should compute, for each Monte Carlo replica, the AIC, BIC and AICc based on the log-normal distribution and all other distributions. Then, the success rate with which these criteria choose the correct log-symmetric distribution is computed. This simulation exercise provides a way to assess the discriminatory capacity of the information criteria. The analysis regarding the performance of the GCS and RQ residuals is based on the empirical distribution of residuals. In this case, for each Monte Carlo replica, the GCS and RQ residuals associated with a log-symmetric quantile regression model are computed as well as their corresponding descriptive statistics, such as mean, median, standard deviation (SD), coefficient of skewness (CS) and coefficient of kurtosis (CK). Then, the averages of each descriptive statistics are computed. Such averages are expected to be close to theoretical values. The steps of the first Monte Carlo simulation study are described in Algorithm \ref{alg:simulation}.

\begin{algorithm}
	\floatname{algorithm}{Algorithm}
	\caption{Steps for Study 1.}\label{alg:simulation}
	\begin{algorithmic}[1]
		\State Choose a log-symmetric quantile regression model based on some density generator $g(\cdot)$ from Table~\ref{tab:gfuncions} and set the values of the model parameters.
		\State Generate 5,000 samples based on the chosen model.
		\State {Estimate the model parameters using the maximum likelihood method for each sample.}
		\State Compute the bias, MSE and CP.   
		\State Compute, for each replica, the GCS and RQ residuals and the respective descriptive statistics: 
		mean, median, SD, CS and CK. Compute the means of the descriptive statistics obtained from the 5,000 replications.
		\State Compute, for each replica, the AIC, BIC and AICc values for the chosen model and for different models in Table~\ref{tab:gfuncions}. Compute the percentage that the AIC, BIC and AICc choose the correct model, that is, the model based on the data generating process.
	\end{algorithmic}
\end{algorithm}

Tables \ref{tab:01}-\ref{tab:08} contain the Monte Carlo simulation results based on the log-NO, log-$t$, log-PE, log-HP, log-SL, log-CN, EBS e EBS-$t$ distributions. The following sample statistics for the maximum likelihood estimates are reported: bias, MSE and CP. 
A look at the results in \ref{tab:01}-\ref{tab:08} allows us to conclude that, as expected, in general, as the sample size increases, the bias and MSE decrease, for all the distributions considered. Moreover, we observe that the CP approaches the 95\% nominal level as the sample size increases. For example, for samples of size $n = 50$, the CP is approximately 93\%, and for $n=200$, the CP is approximately 95 \%. With the results presented in these tables, we can also observe that the MSE of the estimator of $\tau_1$ is always larger for the regression model with $q=0.5$ than with $q = 0.25$ and $q = 0.75$, for all distributions analyzed. Finally, we note that the MSEs of the estimators of $\beta_1$ and $\tau_1$ are always larger than the estimators associated with the other coefficients.

\begin{table}[H]
\footnotesize
	\centering
	\caption{\small Bias, MSE and CP from simulated data in the log-NO quantile regression model.}
		\adjustbox{max height=\dimexpr\textheight-3.5cm\relax,
		max width=\textwidth}{
	\begin{tabular}{cccccccccccccc}
	\toprule
	\multirow{2}{*}{$q$}& & \multicolumn{3}{c}{$n = 50$} & &\multicolumn{3}{c}{$n = 100$} & &\multicolumn{3}{c}{$n = 200$}\\
 	\cline{3-5} \cline{7-9} \cline{11-13}
	& & Bias & MSE & CP & & Bias & MSE & CP & & Bias & MSE & CP\\
	\hline
	\multirow{4}{*}{0.25}& $\beta_0$ & 0.04494 & 0.21187 & 0.9378 & & 0.02548 & 0.07570 & 0.9408 & & 0.00639 & 0.03417 & 0.9490\\
	& $\beta_1$ & -0.00517 & 0.31041 & 0.9366 & & -0.01253 & 0.14261 & 0.9400 & & 0.00313 & 0.07205 & 0.9446\\
	& $\tau_0$ & -0.08922 & 0.19829 & 0.9376 & & -0.04883 & 0.07609 & 0.9418 & & -0.02215 & 0.03902 & 0.9502\\
	& $\tau_1$ & 0.01857 & 0.47784 & 0.9430 & & 0.01845 & 0.20340 & 0.9422 & & 0.00481 & 0.10433 & 0.9516\\
	\hline
	\multirow{4}{*}{0.5}& $\beta_0$ & 0.00802 & 0.19609 & 0.9330 & &  0.00623 & 0.06855 & 0.9398 & & -0.00258 & 0.03038 & 0.9498\\
& $\beta_1$ & -0.00565 & 0.31264 & 0.9358 & & -0.01274 & 0.14346 & 0.9384 & & 0.00266 & 0.07201 & 0.9480\\
& $\tau_0$ & -0.09039 & 0.23368 & 0.9332 & & -0.04718 & 0.08827 & 0.9386 & & -0.02093 & 0.04420 & 0.9522\\
& $\tau_1$ & 0.02074 & 0.59772 & 0.9374 & & 0.01492 & 0.25472 & 0.9432 & & 0.00232 & 0.12644 & 0.9540\\
	\hline
	\multirow{4}{*}{0.75}& $\beta_0$ & -0.03438 & 0.13609 & 0.9366 && -0.01331 & 0.07714 & 0.9396 & & -0.01170 & 0.03487 & 0.9424\\
& $\beta_1$ & 0.00155 & 0.31954 & 0.9336 & & -0.01247 & 0.14316 & 0.9398 & & 0.00253 & 0.07186 & 0.9488\\
& $\tau_0$ & -0.10746 & 0.19714 & 0.9360 & & -0.04470 & 0.07654 & 0.9384 & & -0.01996 & 0.03862 & 0.9512\\
& $\tau_1$ & 0.05292 & 0.47134 & 0.9426 & & 0.00976 & 0.20408 & 0.9424 & & 0.00033 & 0.10397 & 0.9496\\
	\bottomrule
	\end{tabular}}
\label{tab:01}
\end{table}

\begin{table}[H]
\footnotesize
	\centering
	\caption{\small Bias, MSE and CP from simulated data in the log-$t$ quantile regression model ($\vartheta=3$).}
	\adjustbox{max height=\dimexpr\textheight-3.5cm\relax,
		max width=\textwidth}{
		\begin{tabular}{cccccccccccccc}
			\toprule
			\multirow{2}{*}{$q$}& & \multicolumn{3}{c}{$n = 50$} & &\multicolumn{3}{c}{$n = 100$} & &\multicolumn{3}{c}{$n = 200$}\\
			\cline{3-5} \cline{7-9} \cline{11-13}
			& & Bias & MSE & CP & & Bias & MSE & CP & & Bias & MSE & CP\\
			\hline
			\multirow{4}{*}{0.25}& $\beta_0$ & 0.05723 & 0.35426 & 0.9226 && 0.02137 & 0.13608 & 0.9390 & &  0.01011 & 0.06332 & 0.9438\\
			& $\beta_1$ & -0.01033 & 0.48476 & 0.9340 & & 0.00568 & 0.21711 & 0.9390 & & -0.00399 & 0.10860 & 0.9496\\
			& $\tau_0$ & -0.09946 & 0.36506 & 0.9332 & & -0.05036 & 0.14223 & 0.9354 & & -0.01374 & 0.07637 & 0.9452\\
			& $\tau_1$ & 0.03662 & 0.84097 & 0.9414 & & 0.01999 & 0.36923 & 0.9360 & & -0.00971 & 0.19696 & 0.9446\\
			\hline
			\multirow{4}{*}{0.5} & $\beta_0$ & -0.00009 & 0.33458 & 0.9278 & & -0.00367 & 0.11682 & 0.9366 & & -0.00144 & 0.05375 & 0.9476\\
			& $\beta_1$ & 0.00682 & 0.51008 & 0.9288 & & 0.00620 & 0.21809 & 0.9392 & & -0.00339 & 0.10864 & 0.9478\\
			& $\tau_0$ & -0.08664 & 0.43086 & 0.9302 & & -0.05151 & 0.17611 & 0.9370 & & -0.01108 & 0.09616 & 0.9408\\
			& $\tau_1$ & 0.02129 & 1.16259 & 0.9358 & & 0.02734 & 0.51431 & 0.9414 & & -0.01276 & 0.27128 & 0.9450\\
			\hline
			\multirow{4}{*}{0.75} & $\beta_0$ & -0.04706 & 0.37799 & 0.9250 & & -0.02775 & 0.13861 & 0.9364 & & -0.01245 & 0.06492 & 0.9430\\
			& $\beta_1$ & 0.00673 & 0.50690 & 0.9298 & & 0.00665 & 0.21735 & 0.9396 & & -0.00313 & 0.10819 & 0.9498\\
			& $\tau_0$ & -0.09495 & 0.34629 & 0.9320 & & -0.05336 & 0.13944 & 0.9388 & & -0.01596 & 0.07606 & 0.9476\\
			& $\tau_1$ & 0.02797 & 0.83592 & 0.9356 & & 0.02692 & 0.35996 & 0.9440 & & -0.00512 & 0.19132 & 0.9482\\
			\bottomrule
	\end{tabular}}
\label{tab:02}
\end{table}

\begin{table}[H]
\footnotesize
	\centering
	\caption{\small Bias, MSE and CP from simulated data in the log-PE quantile regression model ($\vartheta=0.3$).}
	\adjustbox{max height=\dimexpr\textheight-3.5cm\relax,
		max width=\textwidth}{
		\begin{tabular}{cccccccccccccc}
			\toprule
			\multirow{2}{*}{$q$}& & \multicolumn{3}{c}{$n = 50$} & &\multicolumn{3}{c}{$n = 100$} & &\multicolumn{3}{c}{$n = 200$}\\
			\cline{3-5} \cline{7-9} \cline{11-13}
			& & Bias & MSE & CP & & Bias & MSE & CP & & Bias & MSE & CP\\
			\hline
			\multirow{4}{*}{0.25}& $\beta_0$ & 0.03967 & 0.37253 & 0.9258 && 0.02905 & 0.13351 & 0.9380 && 0.01162 & 0.07233 & 0.9388\\
			& $\beta_1$ & 0.00865 & 0.53300 & 0.9272 && 0.00046 & 0.22886 & 0.9434 && 0.00072 & 0.11441 & 0.9438\\
			& $\tau_0$ & -0.09984 & 0.23753 & 0.9344 && -0.04151 & 0.10144 & 0.9440 && -0.02117 & 0.04064 & 0.9500\\
			& $\tau_1$ & 0.03592 & 0.58700 & 0.9376 && 0.00376 & 0.27762 & 0.9430 && 0.00477 & 0.11494 & 0.9472\\
			\hline
			\multirow{4}{*}{0.5} & $\beta_0$ & -0.00349 & 0.20361 & 0.9296 && 0.00462 & 0.11724 & 0.9416 && 0.00002 & 0.06459 & 0.9392\\
			& $\beta_1$ & -0.00374 & 0.45932 & 0.9378 && 0.00065 & 0.22971 & 0.9406 && 0.00077 & 0.11494 & 0.9430\\
			& $\tau_0$ & -0.08881 & 0.29855 & 0.9330 && -0.04069 & 0.11968 & 0.9464 && -0.02006 & 0.04712 & 0.9472\\
			& $\tau_1$ & 0.01916 & 0.96609 & 0.9314 &&  0.00210 & 0.35393 & 0.9464 && 0.00228 & 0.14724 & 0.9422\\
			\hline
			\multirow{4}{*}{0.75} & $\beta_0$ & -0.05048 & 0.23064 & 0.9310 && -0.01718 & 0.12940 & 0.9330 && -0.00948 & 0.06530 & 0.9428\\
			& $\beta_1$ & -0.00121 & 0.45677 & 0.9412 && -0.00203 & 0.22307 & 0.9438 && -0.00268 & 0.11625 & 0.9378\\
			& $\tau_0$ & -0.07954 & 0.19601 & 0.9342 && -0.03317 & 0.09536 & 0.9364 && -0.01860 & 0.04493 & 0.9480\\
			& $\tau_1$ & 0.00446 & 0.54319 & 0.9334 && -0.01392 & 0.30320 & 0.9406 && -0.00088 & 0.13713 & 0.9478\\
			\bottomrule
	\end{tabular}}
\label{tab:03}
\end{table}

\begin{table}[H]
\footnotesize
	\centering
	\caption{\small Bias, MSE and CP from simulated data in the log-HP quantile regression model ($\vartheta=2$).}
	\adjustbox{max height=\dimexpr\textheight-3.5cm\relax,
		max width=\textwidth}{
		\begin{tabular}{cccccccccccccc}
			\toprule
			\multirow{2}{*}{$q$}& & \multicolumn{3}{c}{$n = 50$} & &\multicolumn{3}{c}{$n = 100$} & &\multicolumn{3}{c}{$n = 200$}\\
			\cline{3-5} \cline{7-9} \cline{11-13}
			& & Bias & MSE & CP & & Bias & MSE & CP & & Bias & MSE & CP\\
			\hline
			\multirow{4}{*}{0.25}& $\beta_0$ & 0.04341 & 0.12272 & 0.9244 && 0.01370 & 0.06625 & 0.9460 &&  0.00484 & 0.01944 & 0.9510\\
			& $\beta_1$ & -0.01192 & 0.25651 & 0.9228 &&  0.00355 & 0.11393 & 0.9446 && 0.00100 & 0.03268 & 0.9546\\
			& $\tau_0$ & -0.09152 & 0.26537 & 0.9386 && -0.04850 & 0.09725 & 0.9414 && -0.01914 & 0.04941 & 0.9456\\
			& $\tau_1$ & 0.02235 & 0.78818 & 0.9368 && 0.01685 & 0.27689 & 0.9466 && 0.00534 & 0.14225 & 0.9464\\
			\hline
			\multirow{4}{*}{0.5} & $\beta_0$ &0.00800 & 0.11653 & 0.9250 && -0.00464 & 0.03579 & 0.9358 && 0.00132 & 0.02724 & 0.9534\\
			& $\beta_1$ & -0.00928 & 0.24365 & 0.9308 && 0.00624 & 0.07025 & 0.9374 && -0.00139 & 0.05645 & 0.9488\\
			& $\tau_0$ & -0.09105 & 0.23209 & 0.9356 && -0.04459 & 0.11275 & 0.9422 && -0.02084 & 0.05383 & 0.9480\\
			& $\tau_1$ & 0.02449 & 0.60907 & 0.9388 && 0.01181 & 0.31179 & 0.9470 && 0.00564 & 0.16596 & 0.9520\\
			\hline
			\multirow{4}{*}{0.75} & $\beta_0$ & -0.02912 & 0.14252 & 0.9330 && -0.01710 & 0.04095 & 0.9362 && -0.00667 & 0.03108 & 0.9514\\
			& $\beta_1$ &-0.00218 & 0.22892 & 0.9400 && 0.00600 & 0.07011 & 0.9384 && -0.00112 & 0.05636 & 0.9492\\
			& $\tau_0$ & -0.08729 & 0.20085 & 0.9366 && -0.04657 & 0.09803 & 0.9408 && -0.01984 & 0.04618 & 0.9464\\
			& $\tau_1$ & 0.01938 & 0.56486 & 0.9388 && 0.01505 & 0.24663 & 0.9444 && 0.00288 & 0.13103 & 0.9528\\
			\bottomrule
	\end{tabular}}
\label{tab:04}
\end{table}

\begin{table}[H]
\footnotesize
	\centering
	\caption{\small Bias, MSE and CP from simulated data in the log-SL quantile regression model ($\vartheta=4$).}
	\adjustbox{max height=\dimexpr\textheight-3.5cm\relax,
		max width=\textwidth}{
		\begin{tabular}{cccccccccccccc}
			\toprule
			\multirow{2}{*}{$q$}& & \multicolumn{3}{c}{$n = 50$} & &\multicolumn{3}{c}{$n = 100$} & &\multicolumn{3}{c}{$n = 200$}\\
			\cline{3-5} \cline{7-9} \cline{11-13}
			& & Bias & MSE & CP & & Bias & MSE & CP & & Bias & MSE & CP\\
			\hline
			\multirow{4}{*}{0.25}& $\beta_0$ & 0.04513 & 0.18733 & 0.9360 && 0.02660 & 0.09976 & 0.9406 && 0.00591 & 0.05687 & 0.9448\\
			& $\beta_1$ & -0.01673 & 0.38210 & 0.9398 && -0.01562 & 0.18943 & 0.9394 && 0.00751 & 0.10493 & 0.9444\\
			& $\tau_0$ & -0.08512 & 0.17147 & 0.9306 && -0.04243 & 0.08096 & 0.9410 && -0.01660 & 0.04662 & 0.9408\\
			& $\tau_1$ & 0.01686 & 0.46471 & 0.9350 && 0.01067 & 0.23064 & 0.9442 && 0.00140 & 0.12478 & 0.9434\\
			\hline
			\multirow{4}{*}{0.5} & $\beta_0$ & 0.00024 & 0.15159 & 0.9354 && -0.00169 & 0.11140 & 0.9456 && -0.00082 & 0.04713 & 0.9510\\
			& $\beta_1$ & -0.00578 & 0.39991 & 0.9350 && 0.00150 & 0.20521 & 0.9466 && 0.00134 & 0.09196 & 0.9462\\
			& $\tau_0$ & -0.09047 & 0.19744 & 0.9228 && -0.04150 & 0.11226 & 0.9404 && -0.01601 & 0.04652 & 0.9490\\
			& $\tau_1$ & 0.02633 & 0.57069 & 0.9280 &&  0.01002 & 0.32505 & 0.9428 && -0.00529 & 0.15561 & 0.9528\\
			\hline
			\multirow{4}{*}{0.75} & $\beta_0$ & -0.04315 & 0.18198 & 0.9320 && -0.01470 & 0.11162 & 0.9434 && -0.00857 & 0.05156 & 0.9452\\
			& $\beta_1$ & 0.00021 & 0.40340 & 0.9322 && -0.01007 & 0.18708 & 0.9460 && -0.00455 & 0.09277 & 0.9464\\
			& $\tau_0$ & -0.09574 & 0.18798 & 0.9344 && -0.03966 & 0.08969 & 0.9430 && -0.02439 & 0.04361 & 0.9416\\
			& $\tau_1$ & 0.03542 & 0.46394 & 0.9414 && 0.00483 & 0.25233 & 0.9408 && 0.01075 & 0.11783 & 0.9452\\
			\bottomrule
	\end{tabular}}
\label{tab:05}
\end{table}

\begin{table}[H]
\footnotesize
	\centering
	\caption{\small Bias, MSE and CP from simulated data in the log-CN quantile regression model ($\vartheta_1=0.1,\vartheta_1=0.2$).}
	\adjustbox{max height=\dimexpr\textheight-3.5cm\relax,
		max width=\textwidth}{
		\begin{tabular}{cccccccccccccc}
			\toprule
			\multirow{2}{*}{$q$}& & \multicolumn{3}{c}{$n = 50$} & &\multicolumn{3}{c}{$n = 100$} & &\multicolumn{3}{c}{$n = 200$}\\
			\cline{3-5} \cline{7-9} \cline{11-13}
			& & Bias & MSE & CP & & Bias & MSE & CP & & Bias & MSE & CP\\
			\hline
			\multirow{4}{*}{0.25}& $\beta_0$ & 0.03428 & 0.29328 & 0.9298 && 0.01734 & 0.10322 & 0.9394 && 0.00958 & 0.04582 & 0.9476\\
			& $\beta_1$ & -0.00047 & 0.40343 & 0.9352 && 0.00098 & 0.17773 & 0.9370 && -0.00548 & 0.08839 & 0.9462\\
			& $\tau_0$ & -0.08920 & 0.19994 & 0.9344 && -0.04330 & 0.09862 & 0.9404 && -0.01747 & 0.05315 & 0.9486\\
			& $\tau_1$ & 0.03943 & 0.52512 & 0.9388 && 0.01604 & 0.26394 & 0.9446 && 0.00265 & 0.14103 & 0.9508\\
			\hline
			\multirow{4}{*}{0.5} & $\beta_0$ & 0.00152 & 0.14127 & 0.9348 && -0.00149 & 0.08951 & 0.9416 && 0.00158 & 0.03971 & 0.9514\\
			& $\beta_1$ & -0.00866 & 0.37150 & 0.9310 &&  0.00227 & 0.17737 & 0.9384 && -0.00565 & 0.08850 & 0.9454\\
			& $\tau_0$ & -0.07597 & 0.26109 & 0.9344 && -0.04236 & 0.11553 & 0.9362 && -0.01608 & 0.06270 & 0.9518\\
			& $\tau_1$ & 0.01602 & 0.81035 & 0.9380 && 0.01602 & 0.34129 & 0.9388 && 0.00080 & 0.18008 & 0.9462\\
			\hline
			\multirow{4}{*}{0.75} & $\beta_0$ & -0.03343 & 0.17552 & 0.9368 && -0.02022 & 0.10162 & 0.9374 && -0.00631 & 0.04599 & 0.9492\\
			& $\beta_1$ & -0.00868 & 0.36991 & 0.9374 &&  0.00286 & 0.17763 & 0.9372 && -0.00573 & 0.08824 & 0.9458\\
			& $\tau_0$ & -0.07384 & 0.24572 & 0.9282 && -0.04242 & 0.09721 & 0.9408 && -0.01600 & 0.05257 & 0.9516\\
			& $\tau_1$ & 0.00432 & 0.66144 & 0.9344 && 0.01407 & 0.26288 & 0.9384 && -0.00014 & 0.13954 & 0.9476\\
			\bottomrule
	\end{tabular}}
\label{tab:06}
\end{table}

\begin{table}[H]
\footnotesize
	\centering
	\caption{\small Bias, MSE and CP from simulated data in the EBS quantile regression model ($\vartheta=0.5$).}
	\adjustbox{max height=\dimexpr\textheight-3.5cm\relax,
		max width=\textwidth}{
		\begin{tabular}{cccccccccccccc}
			\toprule
			\multirow{2}{*}{$q$}& & \multicolumn{3}{c}{$n = 50$} & &\multicolumn{3}{c}{$n = 100$} & &\multicolumn{3}{c}{$n = 200$}\\
			\cline{3-5} \cline{7-9} \cline{11-13}
			& & Bias & MSE & CP & & Bias & MSE & CP & & Bias & MSE & CP\\
			\hline
			\multirow{4}{*}{0.25}& $\beta_0$ & 0.01121 & 0.00881 & 0.9350 && 0.00491 & 0.00430 & 0.9414 && 0.00248 & 0.00226 & 0.9418\\
			& $\beta_1$ & -0.00343 & 0.01558 & 0.9380 && -0.00037 & 0.00810 & 0.9488 && 0.00062 & 0.00413 & 0.9422\\
			& $\tau_0$ & -0.09553 & 0.14441 & 0.9276 && -0.05220 & 0.07498 & 0.9378 && -0.02107 & 0.03415 & 0.9466\\
			& $\tau_1$ & 0.01635 & 0.45149 & 0.9368 && 0.02227 & 0.20758 & 0.9460 && 0.00437 & 0.10093 & 0.9470\\
			\hline
			\multirow{4}{*}{0.5} & $\beta_0$ & -0.00116 & 0.00698 & 0.9342 && -0.00022 & 0.00383 & 0.9404 && 0.00015 & 0.00210 & 0.9444\\
			& $\beta_1$ & 0.00448 & 0.01744 & 0.9310 && 0.00009 & 0.00805 & 0.9452 && 0.00098 & 0.00414 & 0.9446\\
			& $\tau_0$ & -0.09376 & 0.19738 & 0.9332 && -0.05998 & 0.08609 & 0.9440 && -0.02061 & 0.03796 & 0.9482\\
			& $\tau_1$ & 0.01882 & 0.73957 & 0.9316 && 0.03481 & 0.24082 & 0.9472 && 0.00267 & 0.12472 & 0.9474\\
			\hline
			\multirow{4}{*}{0.75} & $\beta_0$ & -0.01071 & 0.01067 & 0.9334 && -0.00507 & 0.00457 & 0.9374 && -0.00294 & 0.00236 & 0.9418\\
			& $\beta_1$ & 0.00300 & 0.01839 & 0.9264 && 0.00085 & 0.00849 & 0.9400 && 0.00081 & 0.00420 & 0.9428\\
			& $\tau_0$ & -0.09893 & 0.23746 & 0.9292 && -0.04283 & 0.08525 & 0.9330 && -0.02446 & 0.02988 & 0.9416\\
			& $\tau_1$ & 0.02173 & 0.60652 & 0.9346 && 0.01077 & 0.22118 & 0.9382 && 0.00422 & 0.08813 & 0.9404\\
			\bottomrule
	\end{tabular}}
\label{tab:07}
\end{table}

\begin{table}[H]
\footnotesize
	\centering
	\caption{\small Bias, MSE and CP from simulated data in the EBS-$t$ quantile regression model ($\vartheta_1=0.5,\vartheta_1=3$).}
	\adjustbox{max height=\dimexpr\textheight-3.5cm\relax,
		max width=\textwidth}{
		\begin{tabular}{cccccccccccccc}
			\toprule
			\multirow{2}{*}{$q$}& & \multicolumn{3}{c}{$n = 50$} & &\multicolumn{3}{c}{$n = 100$} & &\multicolumn{3}{c}{$n = 200$}\\
			\cline{3-5} \cline{7-9} \cline{11-13}
			& & Bias & MSE & CP & & Bias & MSE & CP & & Bias & MSE & CP\\
			\hline
			\multirow{4}{*}{0.25}& $\beta_0$ & 0.01005 & 0.01464 & 0.9318 && 0.00464 & 0.00826 & 0.9382 && 0.00212 & 0.00375 & 0.9474\\
			& $\beta_1$ & 0.00289 & 0.02661 & 0.9328 && 0.00164 & 0.01367 & 0.9390 && 0.00096 & 0.00662 & 0.9470\\
			& $\tau_0$ & -0.09564 & 0.23770 & 0.9374 && -0.04201 & 0.13874 & 0.9462 && -0.01412 & 0.06065 & 0.9464\\
			& $\tau_1$ & 0.03524 & 0.72409 & 0.9428 && 0.00279 & 0.41689 & 0.9496 && -0.00590 & 0.16126 & 0.9524\\
			\hline
			\multirow{4}{*}{0.5} & $\beta_0$ & -0.00034 & 0.01362 & 0.9308 && 0.00017 & 0.00773 & 0.9420 && 0.00100 & 0.00334 & 0.9476\\
			& $\beta_1$ & -0.00056 & 0.02825 & 0.9296 && -0.00122 & 0.01328 & 0.9426 && -0.00125 & 0.00654 & 0.9446\\
			& $\tau_0$ & -0.08863 & 0.37166 & 0.9340 && -0.03873 & 0.15450 & 0.9360 && -0.02320 & 0.07412 & 0.9516\\
			& $\tau_1$ & 0.01687 & 1.01367 & 0.9300 && 0.00324 & 0.44647 & 0.9404 && 0.00502 & 0.21397 & 0.9514\\
			\hline
			\multirow{4}{*}{0.75} & $\beta_0$ & -0.01242 & 0.01326 & 0.9346 && -0.00528 & 0.00779 & 0.9412 && -0.00219 & 0.00388 & 0.9408\\
			& $\beta_1$ & 0.00672 & 0.02939 & 0.9320 && 0.00133 & 0.01348 & 0.9406 && -0.00126 & 0.00669 & 0.9422\\
			& $\tau_0$ & -0.09278 & 0.32558 & 0.9374 && -0.03276 & 0.13614 & 0.9410 && -0.01992 & 0.05917 & 0.9468\\
			& $\tau_1$ & 0.03374 & 0.91462 & 0.9380 && -0.00390 & 0.33330 & 0.9426 && -0.00047 & 0.16003 & 0.9482\\
			\bottomrule
	\end{tabular}}
\label{tab:08}
\end{table}


Table \ref{tab:11} presents the simulation results for the AIC, BIC and AICc. This table reports the success rates according to these criteria, that is, the percentage of times that the criteria correctly chose the correct model (correct distribution). From this table, we observe that the success rates tend to increase with the increase in the sample size $n$, as expected, and that these rates are larger for the log-NO and EBS distributions.

	\begin{table}[H]
	\footnotesize
		\centering
		\caption{\small {Success rates from simulated data in the log-symmetric quantile regression}.}
		\adjustbox{max height=\dimexpr\textheight-3.5cm\relax,
			max width=\textwidth}{
			\begin{tabular}{llcccccccccccc}
				\toprule
				& & \multicolumn{3}{c}{$n = 50$} & &\multicolumn{3}{c}{$n = 100$} & &\multicolumn{3}{c}{$n = 200$}\\
				\cline{3-5} \cline{7-9} \cline{11-13}
				$q$& Model & AIC & BIC & AICc & & AIC & BIC & AICc & & AIC & BIC & AICc\\
				\hline
				\multirow{8}{*}{0.25}& Log-NO & 0.7058 & 0.7058 & 0.7058 && 0.7108 & 0.7108 & 0.7108 && 0.732 & 0.732 & 0.732 \\
				& Log-$t(3)$ & 0.3464 & 0.3464 & 0.3464 && 0.5226 & 0.5226 & 0.5226 && 0.6484 & 0.6484 & 0.6484 \\ 
				& Log-PE(0.3) & 0.2526 & 0.2526 & 0.2526 && 0.3402 & 0.3402 & 0.3402 && 0.455 & 0.455 & 0.455 \\
				& Log-HP(2) & 0.0630 & 0.0630 & 0.0630 && 0.1200 & 0.1200 & 0.1200 && 0.2272 & 0.2272 & 0.2272 \\
				& Log-SL(4) & 0.0348 & 0.0348 & 0.0348 && 0.0934 & 0.0934 & 0.0934 && 0.2024 & 0.2024 & 0.2024 \\
				& Log-CN(0.1,0.2) & 0.2300 & 0.2300& 0.2300 && 0.3542 & 0.3542 & 0.3542 && 0.5100 & 0.5100 & 0.5100 \\
				& EBS(0.5) & 0.7386 & 0.7386 & 0.7386 && 0.7206 & 0.7206 & 0.7206 && 0.7230 & 0.7230 & 0.7230 \\
				& EBS-$t(0.5,3)$ & 0.2068 & 0.2068 &	0.2068 && 0.3242 & 0.3242 & 0.3242 && 0.4790 & 0.4790 & 0.4790 \\
				\hline
				\multirow{8}{*}{0.5}& Log-NO & 0.7208 & 0.7208 & 	0.7208 && 0.7242 & 0.7242 &	0.7242 && 0.7366 & 0.7366 & 0.7366 \\
				& Log-$t(3)$ & 0.4008 & 0.4008 & 0.4008 && 0.5112 & 0.5112 & 0.5112 && 0.6400 & 0.6400 & 0.6400 \\
				& Log-PE(0.3) & 0.2474 & 0.2474 & 0.2474 && 0.3514 & 0.3514 & 0.3514 && 0.4430 & 0.4430 & 0.4430 \\
				& Log-HP(2) & 0.0618 & 0.0618 & 0.0618 && 0.1206 & 0.1206 & 0.1206 && 0.2210 & 0.2210 & 0.2210 \\
				& Log-SL(4) & 0.0308 & 0.0308 & 0.0308 && 0.0928 & 0.0928 & 0.0928 && 0.2038 & 0.2038 & 0.2038 \\
				& Log-CN(0.1,0.2) & 0.2280 & 0.2280 &	0.2280 && 0.3562 & 0.3562 & 0.3562 && 0.5084 &	0.5084 & 0.5084 \\
				& EBS(0.5) & 0.7462 & 0.7462 & 0.7462 && 0.7298 & 0.7298 & 0.7298 && 0.7280 & 0.7280 & 0.7280 \\
				& EBS-$t(0.5,3)$ & 0.1990 & 0.1990 & 0.1990 && 0.3248 & 0.3248 & 0.3248 && 0.4774 & 0.4774 & 0.4774 \\
				\hline
				\multirow{8}{*}{0.75} & Log-NO & 0.7096 & 0.7096 & 0.7096 && 0.7122 & 0.7122 &	0.7122 && 0.7306 & 0.7306 & 0.7306 \\
				& Log-$t(3)$ & 0.4176 & 0.4176 & 0.4176 && 0.5212 & 0.5212 & 0.5212 && 0.6452 & 0.6452 & 0.6452 \\
				& Log-PE(0.3) & 0.2426 & 0.2426 & 0.2426 && 0.3538 & 0.3538 & 0.3538 && 0.4486 & 0.4486 & 0.4486 \\
				& Log-HP(2) & 0.0648 &	0.0648 & 0.0648 && 0.1252 & 0.1252 & 0.1252 && 0.2256 &	0.2256 & 0.2256 \\
				& Log-SL(4) & 0.0356 & 0.0356 & 0.0356 && 0.0938 & 0.0938 & 0.0938 && 0.2044 &	0.2044 & 0.2044 \\
				& Log-CN(0.1,0.2) & 0.2310 &	0.2310 &	0.2310 && 0.3574 & 0.3574 & 0.3574 && 0.5050 & 0.5050 & 0.5050 \\
				& EBS(0.5) & 0.7376 & 0.7376 & 0.7376 && 0.7294 & 0.7294 & 0.7294 && 0.7266 &	0.7266 & 0.7266 \\
				& EBS-$t(0.5,3)$ & 0.2050 & 0.2050 & 0.2050 && 0.3318 & 0.3318 & 0.3318 && 0.4750 & 0.4750 & 0.4750 \\
				\bottomrule
		\end{tabular}
	}
		\label{tab:11}
	\end{table}


Tables \ref{tab:09}-\ref{tab:10} present the simulation results for the GCS and RQ residuals. The objective here is to verify whether the GCS and RQ residuals behave according to their reference distributions. In this sense, Tables \ref{tab:09}-\ref{tab:10} show the mean, median, SD, CS and CK, whose values are expected to be 1, 0.69, 1, 2 and 6, respectively, for the GCS residual, and 0, 0, 1, 0 and 0, respectively, for the RQ residual. From \ref{tab:09}, we observe that the means, medians and SDs are close to 1, 0.69, 1, respectively, that is, the values of the reference EXP(1) distribution. Moreover, the values of the CS and CK approach, in general, the values of the reference EXP(1) distribution, as the sample size increases. From Table \ref{tab:10}, we note that the mean, median and SD values are very clore to 0, 0, 1, respectively, that is, the reference values of the N(0,1) distribution. In addition, as the sample size increases, the values of the CS and CK approach the values of the reference N(0,1) distribution.

\begin{table}[!ht]
\footnotesize
	\centering
	\caption{\small Summary statistics of the GCS residuals.}
	\adjustbox{max height=\dimexpr\textheight-3.5cm\relax,
		max width=\textwidth}{	
		\begin{tabular}{clcccccccc}
			\toprule
			& & \multicolumn{8}{c}{$n = 50$}\\
			\cline{3-10}
			$q$ & Statistic & Log-NO & Log-$t$ & Log-PE & Log-HP & Log-SL & Log-CN & EBS & EBS-$t$\\
			\hline
			\multirow{5}{*}{0.25}& Mean & 0.99936 & 0.99404 & 0.99644 & 0.99611 & 0.99758 & 1.07521 & 1.00123 & 0.99435\\
			& Median & 0.70226 & 0.69642 & 0.69922 & 0.69828 & 0.70149 & 0.63628 & 0.70325 & 0.69687\\
			& SD & 0.9818 & 0.98300 & 0.98114 & 0.98371 & 0.97898 & 1.26828 & 0.97875 & 0.98342\\
			& CS & 1.52807 & 1.56819 & 1.53842 & 1.55628 & 1.51692 & 1.40280 & 1.48478 & 1.56940\\
			& CK & 2.55102 & 2.75474 & 2.58906 & 2.68969 & 2.48426 & 1.70423 & 2.31197 & 2.76390\\ 
			\hline
			\multirow{5}{*}{0.5}& Mean & 1.00067 & 0.99685 & 0.99964 & 0.99897 & 0.99900 & 1.05825 & 1.00188 & 0.99961 \\
			& Median & 0.69720 & 0.69518 & 0.69591 & 0.69761 & 0.69643 & 0.59712 &  0.69827 &  0.69473\\
			& SD & 0.99083 & 0.98565 & 0.99004 & 0.98957 & 0.98663 & 1.30431 & 0.99038 & 0.99158\\
			& CS & 1.55564 & 1.55944 & 1.55711 & 1.56082 & 1.54231 & 1.83512 &  1.53950 & 1.57030 \\
			& CK & 2.65167 & 2.69073 & 2.66652 & 2.68922 & 2.59905 & 3.94936 & 2.57457 & 2.73967\\ 
			\hline
			\multirow{5}{*}{0.75}& Mean & 1.00208 & 1.00216 & 1.00253 & 1.00273 & 1.00069 & 0.67645 & 1.00189 & 1.00243\\
			& Median & 0.69355 & 0.69318 & 0.69286 & 0.69338 & 0.69325 & 0.24333 & 0.69429 & 0.69370 \\
			& SD & 1.00088 & 0.99531 & 0.99963 & 1.00020 & 0.99712 & 1.03971 & 0.99888 & 0.99773 \\
			& CS & 1.59439 & 1.57219 & 1.58905 & 1.59299 & 1.58222 & 2.08323 & 1.58600 & 1.58787 \\
			& CK & 2.82522 & 2.71662 & 2.79731 & 2.84244 & 2.77479 & 4.70738 & 2.82211 & 2.81158\\
			\bottomrule	
			& & \multicolumn{8}{c}{$n = 100$}\\
			\hline
			\multirow{5}{*}{0.25}& Mean & 0.99987 & 0.99697 & 0.99876 & 0.99849 & 0.99910 & 1.28750 & 1.00067 & 0.99611\\
			& Median & 0.69714 & 0.69533 & 0.69572 & 0.69557 & 0.69651 & 0.78760 & 0.69647 & 0.69531\\
			& SD & 0.99084 & 0.99236 & 0.99127 & 0.99113 & 0.99066 & 1.45686 & 0.98949 & 0.98831\\
			& CS & 1.72120 & 1.75991 & 1.73983 & 1.73707 & 1.73105 & 1.57111 & 1.69862 & 1.74351\\
			& CK & 3.73464 & 4.00959 & 3.85506 & 3.84957 & 3.83319 & 2.76641 & 3.60417 & 3.90985\\ 
			\hline
			\multirow{5}{*}{0.5}& Mean & 1.00059 & 0.99882 & 1.00014 & 0.99915 & 0.99954 & 1.03951 & 1.00105 & 0.99884\\
			& Median & 0.69532 & 0.69495 &  0.69383 & 0.69476 & 0.69447 & 0.52992 & 0.69412 &  0.69401\\
			& SD & 0.99504 & 0.99385 & 0.99500 & 0.99332 & 0.99527 & 1.32493 & 0.99442 & 0.99269\\
			& CS & 1.73446 & 1.75317 & 1.74640 & 1.74562 & 1.74992 & 1.84787 &  1.72423 & 1.74386\\
			& CK & 3.79463 & 3.96013 & 3.88064 & 3.91642 & 3.92758 & 3.90925 & 3.7512 & 3.87944\\
			\hline
			\multirow{5}{*}{0.75}& Mean & 1.00128 & 1.00145 & 1.00136 & 1.00056 & 1.00074 & 0.81053 & 1.00109 & 1.00069 \\
			& Median & 0.69339 & 0.69407 &  0.69192 & 0.69296 & 0.69286 & 0.32290 & 0.69245 & 0.69363\\
			& SD & 1.0008 & 0.99851 & 1.00049 & 0.99889 & 0.99978 & 1.18421 & 1.00061 &  0.99672\\
			& CS & 1.76260 & 1.75754 & 1.77238 & 1.76733 & 1.75966 & 2.24027 & 1.76164 & 1.74963\\
			& CK & 3.95368 & 3.95836 & 4.03496 & 4.03327 & 3.9553 & 6.07540 & 3.96037 & 3.90162\\
			\bottomrule	
			& & \multicolumn{8}{c}{$n = 200$}\\
			\hline
			\multirow{5}{*}{0.25}& Mean & 1.00018 & 0.99908 & 0.99960 & 0.99931 & 0.9992 & 1.28156 & 1.00047 & 0.99891 \\
			& Median & 0.69487 & 0.69412 & 0.69428 & 0.69466 & 0.69592 & 0.79174 & 0.69639 & 0.69376\\
			& SD & 0.99608 & 0.99675 & 0.99596 & 0.99618 & 0.99468 & 1.42813 & 0.99300 & 0.99496\\
			& CS & 1.84419 & 1.86845 & 1.85341 & 1.86476 & 1.85288 & 1.57580 & 1.82405 & 1.85192\\
			& CK & 4.61888 & 4.83472 & 4.69717 & 4.83238 & 4.73117 & 2.73989 & 4.49670 & 4.71112\\ 
			\hline
			\multirow{5}{*}{0.5}& Mean & 1.00044 & 0.99994 & 1.00031 & 0.99955 & 1.00009 & 1.02630 & 1.00051 & 0.99887\\
			& Median & 0.69378 & 0.69361 &  0.69327 & 0.69510 & 0.69403 & 0.52399 & 0.69500 & 0.69414\\
			& SD & 0.99827 & 0.99760 & 0.99773 & 0.99684 & 0.99801 & 1.29699 & 0.99688 & 0.99518\\
			& CS & 1.85139 & 1.86526 & 1.85437 & 1.86244 & 1.85947 & 1.87475 & 1.84482 & 1.85365\\
			& CK & 4.65578 & 4.80970 & 4.68798 & 4.78175 & 4.75484 & 4.06773 & 4.61053 & 4.71592\\
			\hline
			\multirow{5}{*}{0.75}& Media & 1.00060 & 1.00113 & 1.00091 & 1.00036 & 1.00060 & 0.79130 & 1.00037 & 1.00022\\
			& Median & 0.69238 & 0.69302 & 0.69282 & 0.69426 & 0.69341 & 0.30988 & 0.69382 & 0.69362\\
			& SD & 1.00109 & 0.99994 & 1.00088 & 0.99948 & 1.00057 & 1.15141 & 0.99890 & 0.99836\\
			& CS & 1.86760 & 1.86810 & 1.87219 & 1.87235 & 1.87140 & 2.27928 & 1.85969 & 1.86558\\
			& CK & 4.76094 & 4.81148 & 4.81338 & 4.83858 & 4.83023 & 6.37006 & 4.72970 & 4.79141\\
			\bottomrule	
		\end{tabular} 
	}
	\label{tab:09}
\end{table}

\begin{table}[!ht]
\footnotesize
	\centering
	\caption{\small Summary statistics of the RQ residuals.}
	\adjustbox{max height=\dimexpr\textheight-3.5cm\relax,
		max width=\textwidth}{
		\begin{tabular}{clcccccccc}
			\toprule
			& & \multicolumn{8}{c}{$n = 50$}\\
			\cline{3-10}
			$q$ & Estatística & Log-NO & Log-$t$ & Log-PE & Log-HP & Log-SL & Log-CN & EBS & EBS-$t$\\
			\hline
			\multirow{5}{*}{0.25}& Mean & -0.00038 & -0.00448 & -0.00285 & -0.00268 & -0.00067 & 0.18892 & 0.00080 & -0.00444\\
			& Median & 0.00653 & 0.00143 & 0.00479 & 0.00311 & 0.00579 & 0.21746 &  0.00711 & 0.00186\\
			& SD & 1.01013 & 1.00519 & 1.00885 & 1.00740 & 1.00780 & 1.16427 & 1.01208 & 1.00652\\
			& CS & -0.02458 & -0.01694 & -0.02598 & -0.01574 & -0.02771 & -0.08877 & -0.03768 & -0.01799\\
			& CK & -0.31632 & -0.29297 & -0.29107 & -0.29093 & -0.30447 & -0.58239 & -0.33039 & -0.28323\\ 
			\hline
			\multirow{5}{*}{0.5}& Mean & -0.00014 & -0.00099 & -0.00031 & 0.00008 & -0.00011 & -0.00200 & 0.00020 & 0.00108\\
			& Median & 0.00019 & -0.00042 &  0.00067 & 0.00219 & -0.00048 & -0.00075 &  0.00098 & 0.00102\\
			& SD & 1.00998 & 1.00367 & 1.00861 & 1.00662 & 1.00756 & 1.23045 & 1.01191 & 1.00508\\
			& CS & 0.00680 & -0.00422 & 0.00182 & 0.00465 & -0.00058 & 0.00357 & -0.00134 & 0.00789\\
			& CK & -0.33186 & -0.30871 & -0.30544 & -0.30779 & -0.32195 & -0.20582 & -0.32464 & -0.30257\\
			\hline
			\multirow{5}{*}{0.75} & Mean & 0.00035 & 0.00339 & 0.00197 & 0.00309 & 0.00064 & -0.17968 & -0.00062 & 0.00345\\
			& Median & -0.00439 & -0.00268 &  -0.00306 & -0.00301 & -0.00457 & -0.20295 & -0.00423 & -0.00217\\
			& SD & 1.01012 & 1.00532 & 1.00874 &  1.00713 & 1.00779 & 1.17251 & 1.01181 & 1.00606\\
			& CS & 0.03333 & 0.00835 & 0.02420 & 0.02689 & 0.02616 & 0.08275 & 0.02336 & 0.01545\\
			& CK & -0.31318 & -0.29367 & -0.29421 & -0.29202 & -0.30382 & -0.50849 & -0.30819 & -0.28281\\		
			\bottomrule	
			& & \multicolumn{8}{c}{$n = 100$}\\
			\hline
			\multirow{5}{*}{0.25}& Mean & -0.00005 & -0.00250 & -0.00087 & -0.00064 & -0.00012 & 0.23444 & 0.00024 & -0.00291\\
			& Median & 0.00262 & 0.00147 &  0.00193 & 0.00145 & 0.00210 & 0.26566 &  0.00159 & 0.00141\\
			& SD & 1.00501 & 1.00265 & 1.00447 & 1.00323 & 1.00392 & 1.25715 & 1.00610 & 1.00284\\
			& CS & -0.01025 & -0.00522 & -0.01251 & -0.00663 & -0.01102 & -0.10634 & -0.01571 & -0.01513\\
			& CK & -0.16536 & -0.15415 & -0.14267 & -0.15708 & -0.15116 & -0.39488 & -0.17582 & -0.14859\\
			\hline
			\multirow{5}{*}{0.5}& Mean & 0.00016 & -0.00017 & 0.00021 & -0.00024 & 0.00003 & -0.00104 & 0.00012 & -0.00021\\
			& Median & 0.00031 & 0.00092 & -0.00045 & 0.00027 & -0.00054 & -0.00022 &  -0.00142 & -0.00030\\
			& SD & 1.00498 & 1.0017 & 1.00435 & 1.00344 & 1.00357 & 1.26945 & 1.00602 & 1.00241\\
			& CS & 0.00452 & 0.00124 & 0.00027 & -0.00037 & 0.00317 & -0.00300 & 0.00026 & -0.00107\\
			& CK & -0.17381 & -0.16176 & -0.15378 & -0.15974 & -0.14998 & -0.42956 & -0.1760 & -0.16302\\
			\hline
			\multirow{5}{*}{0.75}& Mean & 0.00031 & 0.00199 & 0.00106 & 0.00076 & 0.00056 & -0.23798 & -0.00049 & 0.00134\\
			& Median & -0.00207 & -0.00015 & -0.00283 & -0.00195 & -0.00257 & -0.26895 &  -0.00353 & -0.00074\\
			& SD & 1.0051 & 1.00256 & 1.00443 &  1.00362 & 1.00398 & 1.26510 & 1.00597 & 1.00284\\
			& CS & 0.01961 & 0.00747 & 0.01308 & 0.01236 & 0.01712 & 0.10668 & 0.01872 & 0.00490\\
			& CK & -0.16354 & -0.15548 & -0.14104 & -0.14852 & -0.1537 & -0.36732 & -0.16617 & -0.15229\\
			\bottomrule	
			& & \multicolumn{8}{c}{$n = 200$}\\
			\hline
			\multirow{5}{*}{0.25}& Mean & 0.00016 & -0.00058 & -0.00024 & -0.00035 & -0.0003 & 0.23705 & 0.00052 & -0.00103\\
			& Median & 0.00098 & 0.00060 & 0.00079 & 0.00107 & 0.00243 & 0.26940 &  0.00278 & 0.00103\\
			& SD & 1.00243 & 1.00123 & 1.00221 & 1.00190 & 1.00176 & 1.24485 & 1.00297 & 1.00141\\
			& CS & -0.00194 & -0.00108 & -0.00487 & -0.00403 & -0.00663 & -0.11239 & -0.01314 & -0.00346\\
			& CK & -0.08664 & -0.07952 & -0.07249 & -0.07082 & -0.07804 & -0.40279 & -0.08985 & -0.08044\\ 
			\hline
			\multirow{5}{*}{0.5}& Mean & 0.00009 & 0.00045 & 0.00032 & -0.00013 & 0.00016 & -0.00058 & 0.00006 & -0.00057 \\
			& Median & -0.00038 & -0.00006 & -0.00049 & 0.00168 & 0.00003 & 0.00009 & 0.00104 & 0.00008\\
			& SD & 1.00249 & 1.00087 & 1.00216 & 1.00159 & 1.00183 & 1.25481 & 1.00301 &  1.00127\\
			& CS & 0.00605 & 0.00235 & 0.00152 & 0.00091 & 0.00558 & -0.00009 & -0.00049 & -0.00437\\
			& CK & -0.09175 & -0.08312 & -0.07903 & -0.08080 & -0.08750 & -0.43546 & -0.08946 & -0.08011\\ 
			\hline
			\multirow{5}{*}{0.75}& Mean & -0.00001 & 0.00143 & 0.00070 & 0.00048 & 0.00044 & -0.23952 & -0.00027 & 0.00051\\
			& Median & -0.00214 & -0.00077 & -0.00102 & 0.00063 & -0.00074 & -0.27161 &  -0.00046 & -0.00004\\
			& SD & 1.00249 & 1.00121 & 1.00221 & 1.00173 & 1.00194 & 1.24974 & 1.00294 & 1.00142\\
			& CS & 0.01398 & 0.00592 & 0.00897 & 0.00642 & 0.01168 & 0.11432 & 0.00587 & 0.00316\\
			& CK & -0.08627 & -0.07922 & -0.07190 & -0.07527 & -0.08225 & -0.38922 & -0.08485 & -0.07795\\
			\bottomrule	
		\end{tabular} 
	}
	\label{tab:10}
\end{table}

\subsection{Study 2: size and power of the tests}
We now present a Monte Carlo simulation study to evaluate the performance of the $S_W,S_{LR},S_R$ and $S_T$ tests. Two measures are considered: size (null rejection rate) and power (nonnull rejection rate). The tests nominal levels used are $\alpha = 0.01, 0.05, 0.1$, and the quantiles and sample sizes are the same as in the previous simulation study. 
The data generating model is given by
$$Y_i = (\beta_0 + \beta_1 x_{1i} + \beta_2 x_{2i} + \beta_3 x_{3i})\epsilon_{i}^{\sqrt{\phi}},\quad 1, \ldots, n,$$
where $\epsilon_i\sim\text{QLS}(1,1,g)$, the covariates values are obtained as Bernoulli(0.5) random draws, and the coefficients not fixed under $H_0$ are $\beta_l = 1, \forall \; \beta_l \neq 0$ and $\phi = 3$, that is, the $\beta_l$ coefficients not fixed under $ H_0 $ are equal to 1. Algorithm \ref{alg:htest} describes briefly the Monte Carlo estimation of the size and power of the tests.

\begin{algorithm}
	\floatname{algorithm}{Algorithm}
	\caption{Steps for Monte Carlo estimation of the size and power of the tests.}\label{alg:htest}
	\begin{algorithmic}[1]
		\State Set the number of parameters $r$ to be tested under $H_0$ and the nominal level $\alpha$.
		\State Set the values of the parameters and generate 5,000 samples of size $n$ based on the chosen function $g(\cdot)$ in Table~\ref{tab:gfuncions}, according to the model postulated in $H_0$ (size), or according to the model postulated in $H_1$ (power).
		\State Estimate the model parameters using the maximum likelihood method for each sample.
		\State Compute the test statistics $S_W$, $S_{LR}$, $S_R$ and $S_T$ for each sample.
		\State Compute the critical value $C_{r,1-\alpha}$ for each test and sample.
		\State Obtain the Monte Carlo estimation of the size/power of the test by calculating the proportion of replicates of the test statistic ($S_W$, $S_{LR}$, $S_R$ and $S_T$) larger than the critical value $C_{r,1-\alpha}$.
	\end{algorithmic}
\end{algorithm}

Tables \ref{tab:12}-\ref{tab:19} contain the tests null rejection rates for different log-symmetric quantile regression models. 
The interest here is to test $H_0: \beta_1 = \cdots = \beta_r = 0$, for $r = 1,3$, that is, the proportion of times that each statistic ($S_W$, $S_{LR}$, $S_R$ and $S_T$) is larger than the upper $\alpha$ quantile of the $\chi^2_{r}$ distribution. We consider $r = 1,3$ in order to verify whether there are differences in test performance under these scenarios. In fact, from Tables \ref{tab:12}-\ref{tab:19}, we can observe a difference in the tests null rejection rates of the tests. For example, for $r = 1$, the tests null rejection rates are closer to the nominal levels than when $r= 3$. We also observe that the tests null rejection rates associated with the all the tests approach the corresponding nominal levels as the sample size increases, as expected. In general, the statistics $S_W$ and $S_{LR}$ present null rejection rates closer to the corresponding nominal levels.

Figures \ref{fig:01}-\ref{fig:08} show the power curves of the four tests. We computed the tests nonnull rejection rates considering the hypothesis $H_0: \beta_1 = \cdots = \beta_t$ against $H_1: \beta_1 = \cdots = \beta_r = \delta,$ $|\delta| = 0,1, \ldots, 4$, where the power function is denoted by $\pi(\delta)$. The power results are shown only for $n = 100$ and $\alpha = 0.01$, as similar results were observed for other settings. With the exception of the results based on the EBS $(\vartheta=0.5)$ and EBS-$t$ $(\vartheta_1=0.5,\vartheta_2=3)$ distributions, the power  with $ r = 3 $ is greater than the power with $ r = 1 $, which indicates that the tests studied are more efficient when we test multiple parameters. In general, we observe that the power to test $r = 3$ parameters simultaneously is greater for the $S_R$ test, followed by the $S_T$ test. When $r =1$, the powers of the tests using the $S_R$ and $S_T$ statistics practically coincides. For the quantile regression model with the EBS $(\vartheta=0.5)$ and EBS-$t$ $(\vartheta_1=0.5,\vartheta_2=3)$ distributions, the power curves of the four tests overlapped, except for the power curve of the Wald test (slightly above the others) with 
$r=1$ for the EBS-$t$ $(\vartheta_1=0.5,\vartheta_2=3)$. In general, the power tends to 1 as $|\delta|$ increases for all cases considered in this simulation, as expected.

\begin{table}[!ht]
\footnotesize
	\centering
	\caption{\small Null rejection rates for $H_0: \beta_1 = \cdots = \beta_r = 0$ in the log-NO quantile regression model.}
	\adjustbox{max height=\dimexpr\textheight-3.5cm\relax,
		max width=\textwidth}{
		\begin{tabular}{ccccccccccccccc}
			\toprule
			\multirow{2}{*}{$t$}&\multirow{2}{*}{$q$}& & \multicolumn{3}{c}{$n = 50$} & &\multicolumn{3}{c}{$n = 100$} & &\multicolumn{3}{c}{$n = 200$}\\
			\cline{4-6} \cline{8-10} \cline{12-14}
			&&  & 1\% & 5\% & 10\% & & 1\% & 5\% & 10\% & & 1\% & 5\% & 10\%\\
			\hline
			\multirow{12}{*}{3}&\multirow{4}{*}{0.25}& $S_W$ & 0.0276 & 0.0864 & 0.1272 && 0.0136 & 0.0586 & 0.1078 && 0.015 & 0.0588 & 0.1056\\
			&& $S_{LR}$ & 0.0132 & 0.0644 & 0.1162 && 0.0118 & 0.0538 & 0.1134 && 0.0068 & 0.053 & 0.1046 \\
			&& $S_R$ &   0.0238 &  0.0716 &  0.1172 &&  0.0244 &  0.0698 &  0.1158 &&  0.0240 &  0.0722 &  0.1204\\
			&& $S_T$ & 0.0186 & 0.0704 & 0.1562 && 0.0210 & 0.0674 & 0.1224 && 0.0174 & 0.0712 & 0.1186\\
			\cline{2-14}
			&\multirow{4}{*}{0.5}& $S_W$ & 0.0294 & 0.0848 & 0.1438  && 0.0146 & 0.0596 & 0.1100 && 0.0154 & 0.0594 & 0.1090\\
			&& $S_{LR}$ & 0.017 & 0.0732 & 0.1124 && 0.013 & 0.0538 & 0.1114 && 0.011 & 0.0504 & 0.1064 \\
			&& $S_R$ &  0.0248 &  0.0716 &  0.1172 &&  0.0244 &  0.0698 &  0.1158 &&  0.024 &  0.0722 &  0.1204\\
			&& $S_T$ & 0.0186 & 0.068 & 0.1276 && 0.021 & 0.0674 & 0.1224 && 0.0174 & 0.0712 & 0.1186 \\
			\cline{2-14}
			&\multirow{4}{*}{0.75} & $S_W$ & 0.0296 & 0.0836 & 0.1412 && 0.0140 & 0.0590 & 0.1098 && 0.0158 & 0.0584 & 0.1074\\
			&& $S_{LR}$ & 0.0146 & 0.0668 & 0.1184 && 0.013 & 0.0548 & 0.1148 && 0.0116 & 0.0544 & 0.104 \\
			&& $S_R$ &  0.0248 &  0.0716 &  0.1172 &&  0.0244 &  0.0698 &  0.1158 &&  0.024 &  0.0722 &  0.1162\\
			&& $S_T$ & 0.0186 & 0.0704 & 0.1276 && 0.0212 & 0.0674 & 0.1224 && 0.0174 & 0.0694 & 0.1186 \\
			\bottomrule
			\multirow{12}{*}{1}&\multirow{4}{*}{0.25}& $S_W$ &  0.0172 & 0.0818 & 0.139 && 0.0136 & 0.0588 & 0.1154 && 0.0104 & 0.0554 & 0.1026 \\
			&& $S_{LR}$ &0.042 & 0.1296 & 0.1908 && 0.0254 & 0.0906 & 0.1494 && 0.016 & 0.0482 & 0.0788 \\
			&& $S_R$ & 0.0102 & 0.047 & 0.1024 && 0.0102 & 0.0506 & 0.0954 && 0.0104 & 0.0486 & 0.098 \\
			&& $S_T$ & 0.014 & 0.0722 & 0.1216 && 0.0196 & 0.065 & 0.12 && 0.0178 & 0.067 & 0.128 \\
			\cline{2-14}
			&\multirow{4}{*}{0.5}& $S_W$ & 0.0226 & 0.0824 & 0.1382 && 0.0138 & 0.0606 & 0.1158 && 0.0098 & 0.0552 & 0.1048 \\
			&& $S_{LR}$ & 0.0406 & 0.129 & 0.1984 && 0.0264 & 0.1086 & 0.1344 && 0.014 & 0.0516 & 0.0944 \\
			&& $S_R$ & 0.008 & 0.047 & 0.1024 && 0.0102 & 0.0506 & 0.0954 && 0.0104 & 0.0486 & 0.098 \\
			&& $S_T$ & 0.019 & 0.0722 & 0.1216 && 0.0196 & 0.065 & 0.12 && 0.017 & 0.067 & 0.1282 \\
			\cline{2-14}
			&\multirow{4}{*}{0.75} & $S_W$ & 0.023 & 0.083 & 0.1398 && 0.0138 & 0.0606 & 0.1158 && 0.01 & 0.0556 & 0.1052 \\
			&& $S_{LR}$ & 0.0388 & 0.1308 & 0.1944 && 0.025 & 0.0894 & 0.1482 && 0.0156 & 0.0448 & 0.0766 \\
			&& $S_R$ & 0.008 & 0.047 & 0.1024 && 0.0102 & 0.0506 & 0.0954 && 0.0104 & 0.0486 & 0.098 \\
			&& $S_T$ & 0.0206 & 0.0722 & 0.1216 && 0.0196 & 0.065 & 0.12 && 0.017 & 0.067 & 0.1282 \\
			\bottomrule
	\end{tabular}}
\label{tab:12}
\end{table}

\begin{figure}[!ht]
	\centering
	\subfigure[$r=3$]{
	\includegraphics[scale = 0.5]{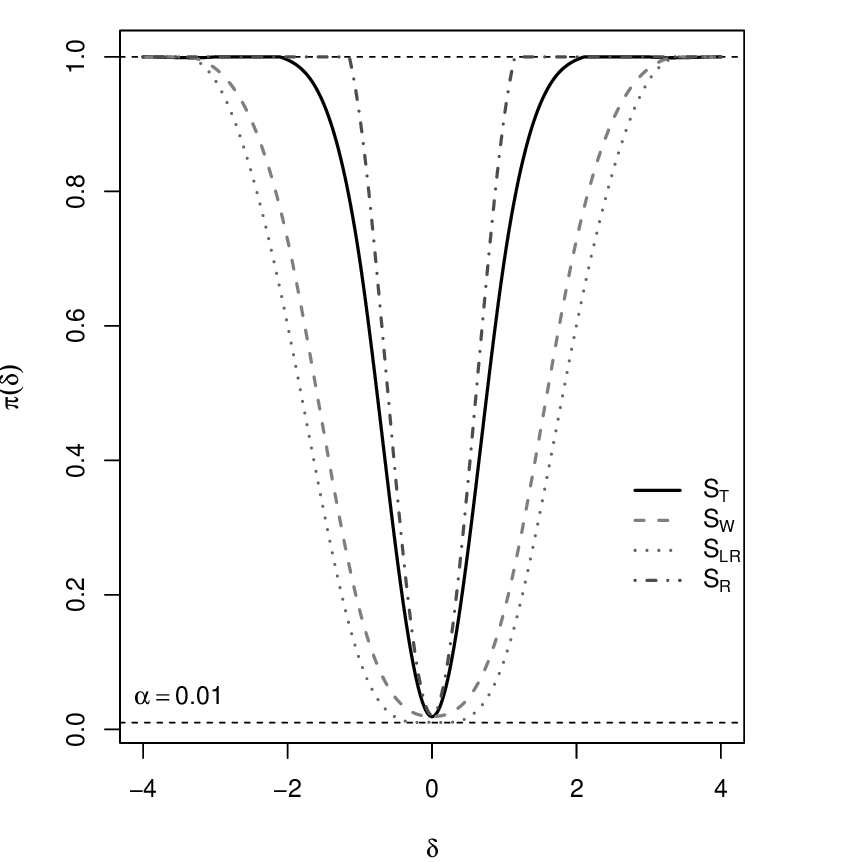}
	}
	\subfigure[$r=1$]{
	\includegraphics[scale = 0.5]{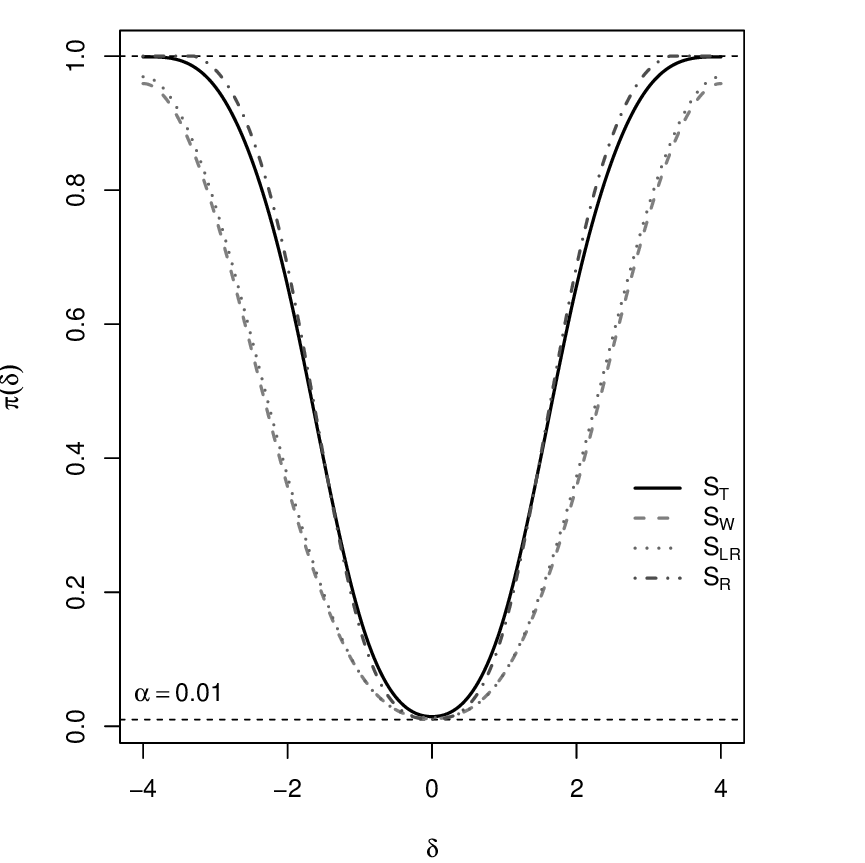}
    }
   \caption{\small Power curve of the tests in the log-NO quantile regression model.} 
\label{fig:01}
\end{figure}

\begin{table}[!ht]
\footnotesize
	\centering
	\caption{\small Null rejection rates for $H_0: \beta_1 = \cdots = \beta_r = 0$ in the log-$t$ quantile regression model ($\vartheta=3$).}
	\adjustbox{max height=\dimexpr\textheight-3.5cm\relax,
		max width=\textwidth}{
		\begin{tabular}{ccccccccccccccc}
			\toprule
			\multirow{2}{*}{$t$} & \multirow{2}{*}{$q$}& & \multicolumn{3}{c}{$n = 50$} & &\multicolumn{3}{c}{$n = 100$} & &\multicolumn{3}{c}{$n = 200$}\\
			\cline{4-6} \cline{8-10} \cline{12-14}
			&& & 1\% & 5\% & 10\% & & 1\% & 5\% & 10\% & & 1\% & 5\% & 10\%\\
			\hline
			\multirow{12}{*}{3}&\multirow{4}{*}{0.25}& $S_W$ & 0.0270 & 0.0920 & 0.1486 && 0.0160 & 0.0696 & 0.1186 && 0.0152 & 0.0562 & 0.1098 \\
			&& $S_{LR}$ & 0.0162 & 0.0644 & 0.128 && 0.0114 & 0.0544 & 0.1108 && 0.0122 & 0.0534 & 0.1082 \\
			&& $S_R$ &  0.0244 &  0.0714 &  0.1178 &&  0.0226 &  0.0654 &  0.1078 &&  0.0232 &  0.0664 &  0.1132 \\
			&& $S_T$ & 0.0204 & 0.0782 & 0.1422 && 0.0188 & 0.0718 & 0.1236 && 0.0192 & 0.0696 & 0.1242 \\
			\cline{2-14}
			&\multirow{4}{*}{0.5}& $S_W$ & 0.0274 & 0.0920 & 0.1486 && 0.0156 & 0.0696 & 0.1186 && 0.0150 & 0.0562 & 0.1098\\
			&& $S_{LR}$ & 0.0146 & 0.0728 & 0.1248 && 0.0134 & 0.0584 & 0.1132 && 0.0124 & 0.0506 & 0.1092 \\
			&& $S_R$ &  0.0244 & 0.0714 &  0.1178 &&  0.0232 
			&  0.0654 &  0.1078 &&  0.0232 &  0.0664 &  0.1132\\
			&& $S_T$ & 0.0214 & 0.0782 & 0.13 && 0.0188 & 0.0718 & 0.1236 && 0.0192 & 0.0696 & 0.1242\\
			\cline{2-14}
			&\multirow{4}{*}{0.75} & $S_W$ & 0.0290 & 0.0912 & 0.1494 && 0.0148 & 0.0676 & 0.1182 && 0.0152 & 0.0582 & 0.1094\\
			&& $S_{LR}$ & 0.0138 & 0.0754 & 0.1212 && 0.0122 & 0.0562 & 0.111 && 0.0096 & 0.057 & 0.1014 \\
			&& $S_R$ &  0.0244 &  0.0714 &  0.1178 &&  0.0226 &  0.0654 & 0.1078 &&  0.0232 &  0.0664 &  0.1132\\
			&& $S_T$ & 0.0214 & 0.0782 & 0.1424 && 0.0188 & 0.0718 & 0.1236 && 0.0192 & 0.0696 & 0.1242 \\
			\bottomrule
			\multirow{12}{*}{1}&\multirow{4}{*}{0.25}& $S_W$ & 0.0176 & 0.0658 & 0.123 && 0.0158 & 0.0702 & 0.1216 && 0.0154 & 0.0652 & 0.113 \\
			&& $S_{LR}$ & 0.0456 & 0.1486 & 0.2258 && 0.0318 & 0.1172 & 0.1826 && 0.0242 & 0.0776 & 0.1232 \\
			&& $S_R$ & 0.0088 & 0.0496 & 0.1006 && 0.0098 & 0.0452 & 0.0928 && 0.0098 & 0.0468 & 0.0968 \\
			&& $S_T$ & 0.0232 & 0.0684 & 0.1208 && 0.0178 & 0.069 & 0.1208 && 0.0184 & 0.0642 & 0.1192 \\
			\cline{2-14}
			&\multirow{4}{*}{0.5}& $S_W$ & 0.017 & 0.07 & 0.1224 && 0.0142 & 0.0646 & 0.1216 && 0.0152 & 0.0692 & 0.1152 \\
			&& $S_{LR}$ & 0.0432 & 0.1486 & 0.2258 && 0.0362 & 0.1172 & 0.1826 && 0.0214 & 0.0776 & 0.123 \\
			&& $S_R$ & 0.0088 & 0.0496 & 0.1006 && 0.0098 & 0.0452 & 0.0928 && 0.0098 & 0.0468 & 0.0968\\
			&& $S_T$ & 0.0224 & 0.0752 & 0.1332 && 0.0178 & 0.069 & 0.1208 && 0.0184 & 0.0642 & 0.1192 \\
			\cline{2-14}
			&\multirow{4}{*}{0.75} & $S_W$ & 0.0178 & 0.0702 & 0.1242 && 0.0152 & 0.0638 & 0.125 && 0.0154 & 0.0684 & 0.1158 \\
			&& $S_{LR}$ & 0.0566 & 0.1486 & 0.2258 && 0.0348 & 0.1172 & 0.1826 && 0.0232 & 0.0776 & 0.123 \\
			&& $S_R$ & 0.0088 & 0.0496 & 0.1006 && 0.0098 & 0.0452 & 0.0928 && 0.0098 & 0.0468 & 0.0968\\
			&& $S_T$ & 0.0224 & 0.0752 & 0.1332 && 0.0178 & 0.069 & 0.1208 && 0.0184 & 0.064 & 0.1192 \\
			\bottomrule
	\end{tabular}}
\label{tab:13}
\end{table}

\begin{figure}[!ht]
	\centering
	\subfigure[$r=3$]{
		\includegraphics[scale = 0.5]{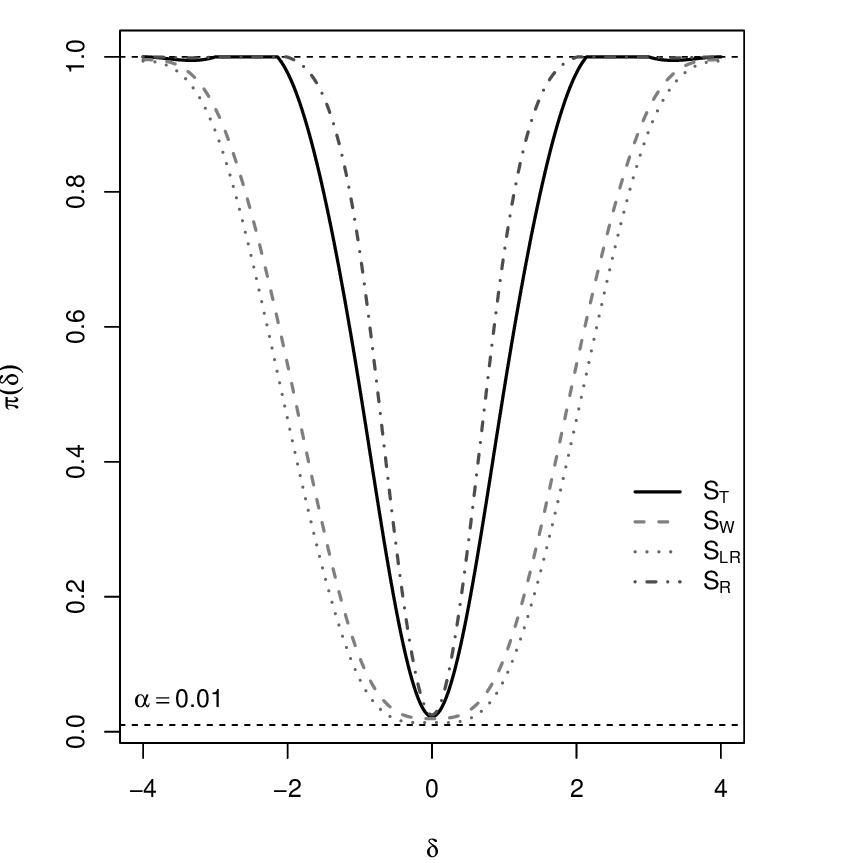}
	}
	\subfigure[$r=1$]{
		\includegraphics[scale = 0.5]{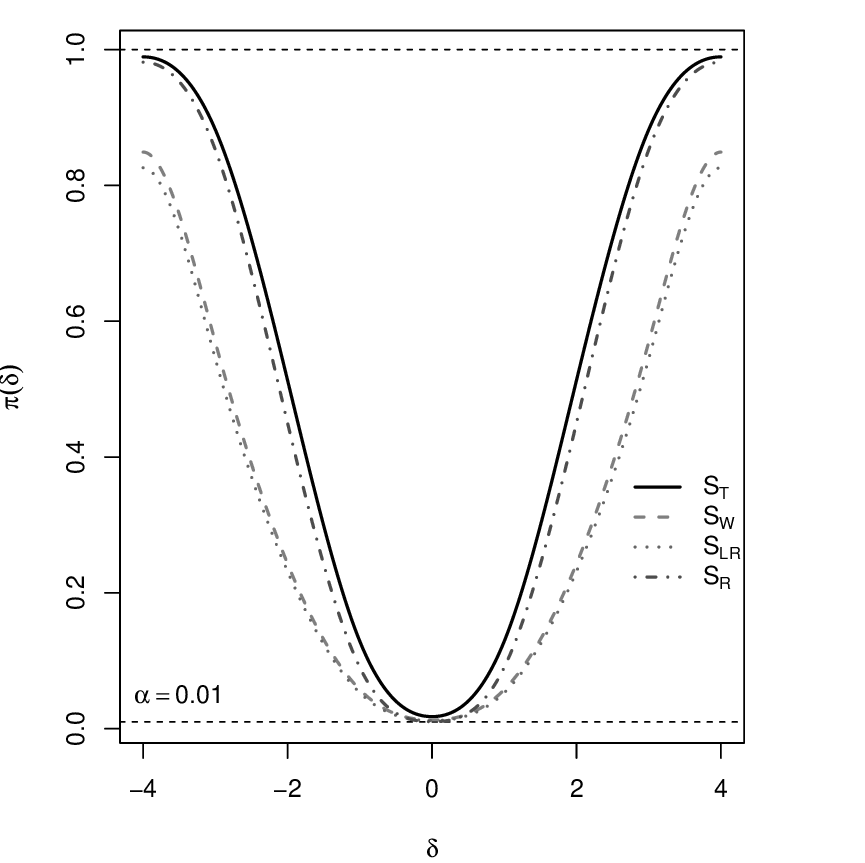}
	}
	\caption{\small Power curve of the tests in the log-$t$ quantile regression model ($\vartheta=3$).}
\label{fig:02}
\end{figure}

\begin{table}[!ht]
\footnotesize
	\centering
	\caption{\small Null rejection rates for $H_0: \beta_1 = \cdots = \beta_r = 0$ in the log-PE quantile regression model ($\vartheta=0.3$).}
	\adjustbox{max height=\dimexpr\textheight-3.5cm\relax,
		max width=\textwidth}{
		\begin{tabular}{ccccccccccccccc}
			\toprule
			\multirow{2}{*}{$t$}&\multirow{2}{*}{$q$}& & \multicolumn{3}{c}{$n = 50$} & &\multicolumn{3}{c}{$n = 100$} & &\multicolumn{3}{c}{$n = 200$}\\
			\cline{4-6} \cline{8-10} \cline{12-14}
			&& & 1\% & 5\% & 10\% & & 1\% & 5\% & 10\% & & 1\% & 5\% & 10\%\\
			\hline
			\multirow{12}{*}{3}& \multirow{4}{*}{0.25}& $S_W$ & 0.0284 & 0.0906 & 0.1508 && 0.0180 & 0.0684 & 0.1180 && 0.0148 & 0.0532 & 0.1066\\
			&& $S_{LR}$ & 0.0114 & 0.063 & 0.1154 && 0.0142 & 0.061 & 0.1158 && 0.011 & 0.0544 &0.1006 \\
			&& $S_R$ & 0.0218 &  0.0644 &  0.1078 && 0.0276 & 0.0714 &  0.1134 &&  0.0222 & 0.0646 & 0.1122\\
			&& $S_T$ & 0.0218 & 0.0772 & 0.127 && 0.0218 & 0.0706 & 0.1246 && 0.017 & 0.0628 & 0.1174 \\
			\cline{2-14}
			&\multirow{4}{*}{0.5}& $S_W$ & 0.0276 & 0.0894 & 0.1548 && 0.0204 & 0.0688 & 0.1196 && 0.0158 & 0.0540 & 0.1060\\
			&& $S_{LR}$ & 0.013 & 0.061 & 0.1184 && 0.013 & 0.0552 & 0.1138 && 0.0104 & 0.0564 & 0.1052 \\
			&& $S_R$ & 0.0228 & 0.0644 & 0.1078 &&  0.0276 & 0.0714 & 0.1134 && 0.0222 & 0.0646 & 0.1122\\
			&& $S_T$ & 0.0192 & 0.0704 & 0.1264 && 0.0218 & 0.0706 & 0.1244 && 0.017 & 0.063 & 0.1172\\
			\cline{2-14}
			&\multirow{4}{*}{0.75} & $S_W$ & 0.0262 & 0.0900 & 0.1534 && 0.0198 & 0.0694 & 0.1286 && 0.0150 & 0.0538 & 0.1176\\
			&& $S_{LR}$ & 0.0144 & 0.0652 & 0.121 && 0.0108 & 0.0542 & 0.117 && 0.0108 & 0.0552 & 0.1048 \\
			&& $S_R$ & 0.0228 & 0.0644 & 0.1078 && 0.0276 & 0.0714 & 0.1134 && 0.0222 & 0.0646 & 0.1122\\
			&& $S_T$ & 0.0192 & 0.0736 & 0.1264 && 0.0218 & 0.0706 & 0.1244 && 0.017 & 0.0628 & 0.1168\\
			\bottomrule
			\multirow{12}{*}{1}& \multirow{4}{*}{0.25}& $S_W$ & 0.0166 & 0.0658 & 0.1196 && 0.0154 & 0.0536 & 0.105 && 0.0108 & 0.0528 & 0.1042 \\
			&& $S_{LR}$ & 0.0436 & 0.158 & 0.2318 && 0.0338 & 0.1062 & 0.1716 && 0.0198 & 0.072 & 0.1124 \\
			&& $S_R$ & 0.0092 & 0.0508 & 0.0984 && 0.0092 & 0.0522 & 0.1014 && 0.0094 & 0.0468 & 0.0936 \\
			&& $S_T$ & 0.0206 & 0.0784 & 0.1336 && 0.0154 & 0.069 & 0.1262 && 0.016 & 0.0584 & 0.1122 \\
			\cline{2-14}
			&\multirow{4}{*}{0.5}& $S_W$ & 0.0204 & 0.068 & 0.1236 && 0.016 & 0.0564 & 0.1032 && 0.01 & 0.0548 & 0.1054 \\
			&& $S_{LR}$ & 0.046 & 0.158 & 0.2318 && 0.0338 & 0.1062 & 0.1716 && 0.0198 & 0.072 & 0.1124 \\
			&& $S_R$ & 0.0068 & 0.0508 & 0.0984 && 0.0092 & 0.0522 & 0.1014 && 0.0094 & 0.0468 & 0.0936 \\
			&& $S_T$ & 0.0194 & 0.0784 & 0.1332 && 0.0184 & 0.069 & 0.1262 && 0.016 & 0.0584 & 0.1122 \\
			\cline{2-14}
			&\multirow{4}{*}{0.75} & $S_W$ & 0.0204 & 0.0674 & 0.1252 && 0.0162 & 0.0564 & 0.1056 && 0.0112 & 0.0556 & 0.1072 \\
			&& $S_{LR}$ & 0.0506 & 0.158 & 0.2318 && 0.0338 & 0.1062 & 0.1716 && 0.0198 & 0.072 & 0.1124 \\
			&& $S_R$ & 0.0068 & 0.0508 & 0.0984 && 0.0092 & 0.0522 & 0.1014 && 0.0094 & 0.0468 & 0.0936 \\
			&& $S_T$ & 0.0206 & 0.0784 & 0.1332 && 0.0184 & 0.069 & 0.1262 && 0.016 & 0.0584 & 0.1122 \\
			\bottomrule
	\end{tabular}}
\label{tab:14}
\end{table}

\begin{figure}[!ht]
	\centering
	\subfigure[$r=3$]{
		\includegraphics[scale = 0.5]{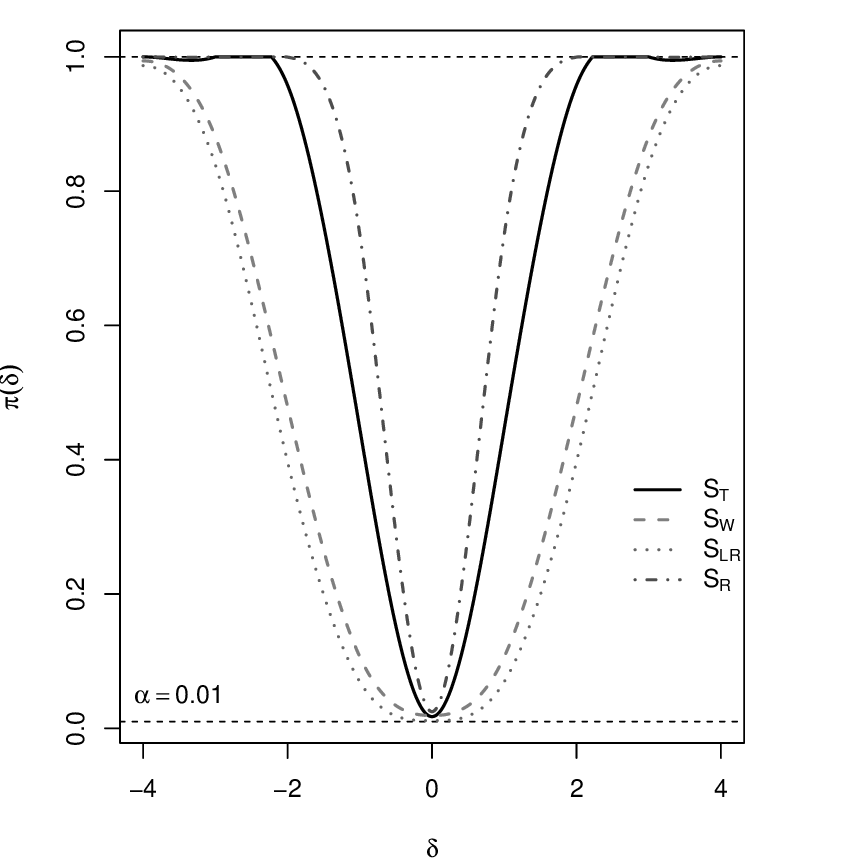}
	}
	\subfigure[$r=1$]{
		\includegraphics[scale = 0.5]{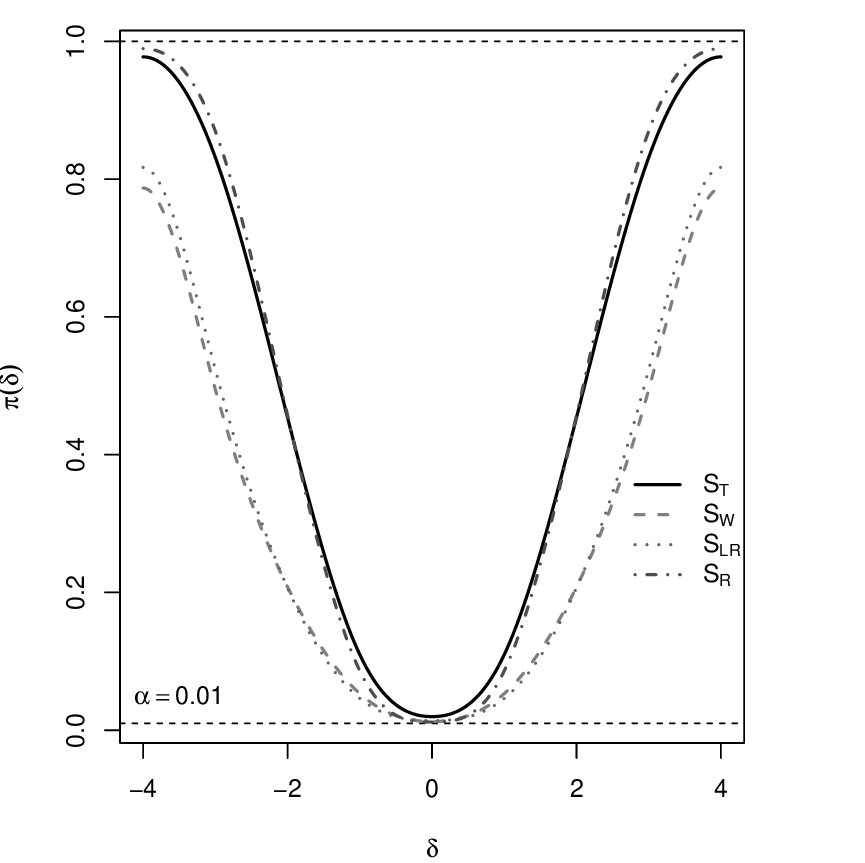}
	}
	\caption{\small Power curve of the tests in the log-PE quantile regression model ($\vartheta=0.3$).}
\label{fig:03}
\end{figure}

\begin{table}[!ht]
\footnotesize
	\centering
	\caption{\small Null rejection rates for $H_0: \beta_1 = \cdots = \beta_r = 0$ in the log-HP quantile regression model ($\vartheta=2$).}
	\adjustbox{max height=\dimexpr\textheight-3.5cm\relax,
		max width=\textwidth}{
		\begin{tabular}{ccccccccccccccc}
			\toprule
			\multirow{2}{*}{$t$}&\multirow{2}{*}{$q$}& & \multicolumn{3}{c}{$n = 50$} & &\multicolumn{3}{c}{$n = 100$} & &\multicolumn{3}{c}{$n = 200$}\\
			\cline{4-6} \cline{8-10} \cline{12-14}
			&& & 1\% & 5\% & 10\% & & 1\% & 5\% & 10\% & & 1\% & 5\% & 10\%\\
			\hline
			\multirow{12}{*}{3} & \multirow{4}{*}{0.25}& $S_W$ & 0.0228 & 0.0784 & 0.1322 && 0.0138 & 0.0676 & 0.1218 && 0.0174 & 0.0602 & 0.1084\\
			&& $S_{LR}$ & 0.013 & 0.063 & 0.1122 && 0.0144 & 0.0596 & 0.113 && 0.0106 & 0.0542 & 0.1028 \\
			&& $S_R$ &  0.028 & 0.0712 & 0.1176 && 0.024 & 0.0698 & 0.111 && 0.0252 & 0.0722 & 0.1124\\
			&& $S_T$ & 0.0196 & 0.0896 & 0.1432 && 0.0196 & 0.0688 & 0.12 && 0.0208 & 0.0698 & 0.1294\\
			\cline{2-14}
			&\multirow{4}{*}{0.5}& $S_W$ & 0.0264 & 0.0782 & 0.1318 && 0.0138 & 0.0676 & 0.1212 && 0.0168 & 0.0594 & 0.1100\\
			&& $S_{LR}$ & 0.015 & 0.063 & 0.1122 && 0.0144 & 0.0596 & 0.113 && 0.0106 & 0.0542 & 0.1028 \\
			&& $S_R$ &  0.024 & 0.0698 & 0.111 && 0.0252 & 0.0722 & 0.1124 && 0.0252 & 0.0722 & 0.1124\\
			&& $S_T$ & 0.0266 & 0.0896 & 0.1432 && 0.0196 & 0.0688 & 0.12 && 0.0208 & 0.0698 & 0.1238 \\
			\cline{2-14}
			&\multirow{4}{*}{0.75} & $S_W$ & 0.0208 & 0.0828 & 0.1320 && 0.0190 & 0.0688 & 0.1254 && 0.0136 & 0.0586 & 0.1002\\
			&& $S_{LR}$ & 0.015 & 0.063 & 0.1122 && 0.0144 & 0.0596 & 0.113 && 0.0106 & 0.0542 & 0.1028 \\
			&& $S_R$ &  0.0214 & 0.0712 & 0.1176 && 0.024 & 0.0698 & 0.111 && 0.0252 & 0.0722 & 0.1124\\
			&& $S_T$ & 0.0196 & 0.0896 & 0.1432 && 0.0196 & 0.0688 & 0.1202 && 0.0194 & 0.078 & 0.1238\\
			\bottomrule
			\multirow{12}{*}{1} & \multirow{4}{*}{0.25}& $S_W$ &  0.0166 & 0.0616 & 0.1172 && 0.0142 & 0.0564 & 0.1096 && 0.011 & 0.049 & 0.099\\
			&& $S_{LR}$ & 0.0372 & 0.1196 & 0.2064 && 0.022 & 0.0782 & 0.1368 && 0.0104 & 0.0378 & 0.0664 \\
			&& $S_R$ & 0.0114 & 0.0492 & 0.0998 && 0.01 & 0.0536 & 0.1008 && 0.0108 & 0.0476 & 0.097 \\
			&& $S_T$ & 0.0156 & 0.0862 & 0.1424 && 0.0184 & 0.0674 & 0.1212 && 0.0176 & 0.0676 & 0.132 \\
			\cline{2-14}
			&\multirow{4}{*}{0.5}& $S_W$ & 0.017 & 0.0628 & 0.1178 && 0.0148 & 0.0576 & 0.1102 && 0.0104 & 0.0486 & 0.099 \\
			&& $S_{LR}$ & 0.0356 & 0.1208 & 0.1902 && 0.0226 & 0.0848 & 0.13 && 0.014 & 0.0428 & 0.0704 \\
			&& $S_R$ & 0.0106 & 0.0492 & 0.0998 && 0.01 & 0.0536 & 0.1008 && 0.0108 & 0.0476 & 0.097 \\
			&& $S_T$ & 0.0238 & 0.086 & 0.1424 && 0.0184 & 0.0674 & 0.1212 && 0.0176 & 0.0676 & 0.132 \\
			\cline{2-14}
			&\multirow{4}{*}{0.75} & $S_W$ & 0.0168 & 0.0634 & 0.1176 && 0.0146 & 0.058 & 0.1116 && 0.0112 & 0.0492 & 0.0994 \\
			&& $S_{LR}$ & 0.0394 & 0.1018 & 0.1806 && 0.0276 & 0.0748 & 0.126 && 0.0126 & 0.0372 & 0.0658 \\
			&& $S_R$ & 0.0106 & 0.0492 & 0.0998 && 0.01 & 0.0536 & 0.1008 && 0.0108 & 0.0476 & 0.097 \\
			&& $S_T$ & 0.0238 & 0.0862 & 0.1424 && 0.0184 & 0.0674 & 0.1212 && 0.0176 & 0.0676 & 0.132 \\
			\bottomrule
	\end{tabular}}
\label{tab:15}
\end{table}

\begin{figure}[!ht]
	\centering
	\subfigure[$r=3$]{
		\includegraphics[scale = 0.5]{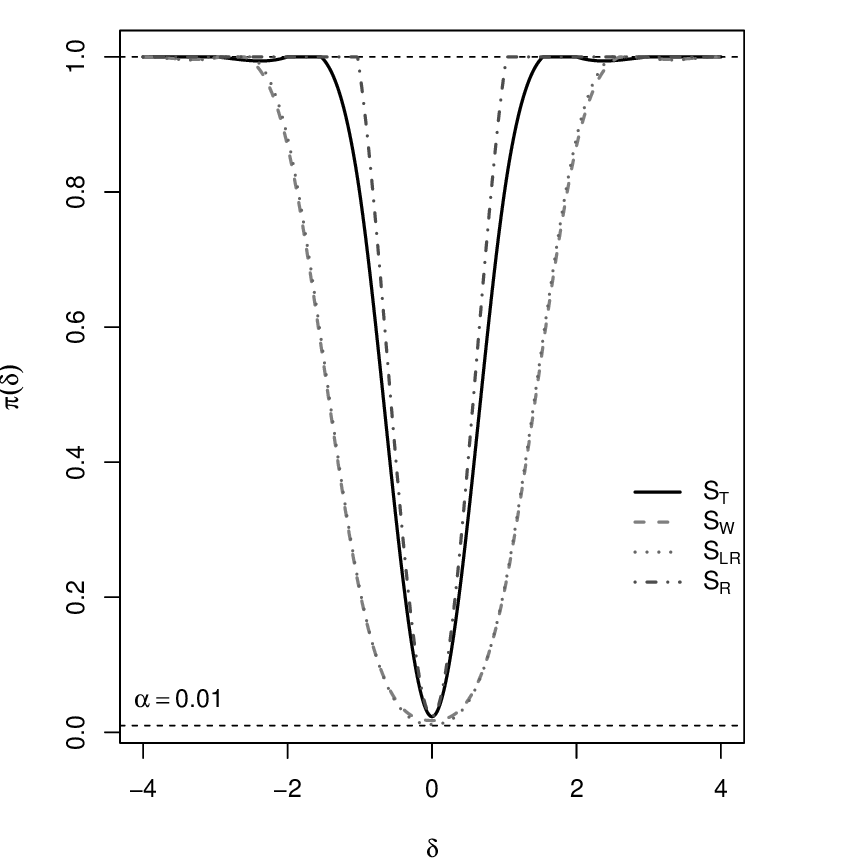}
	}
	\subfigure[$r=1$]{
		\includegraphics[scale = 0.5]{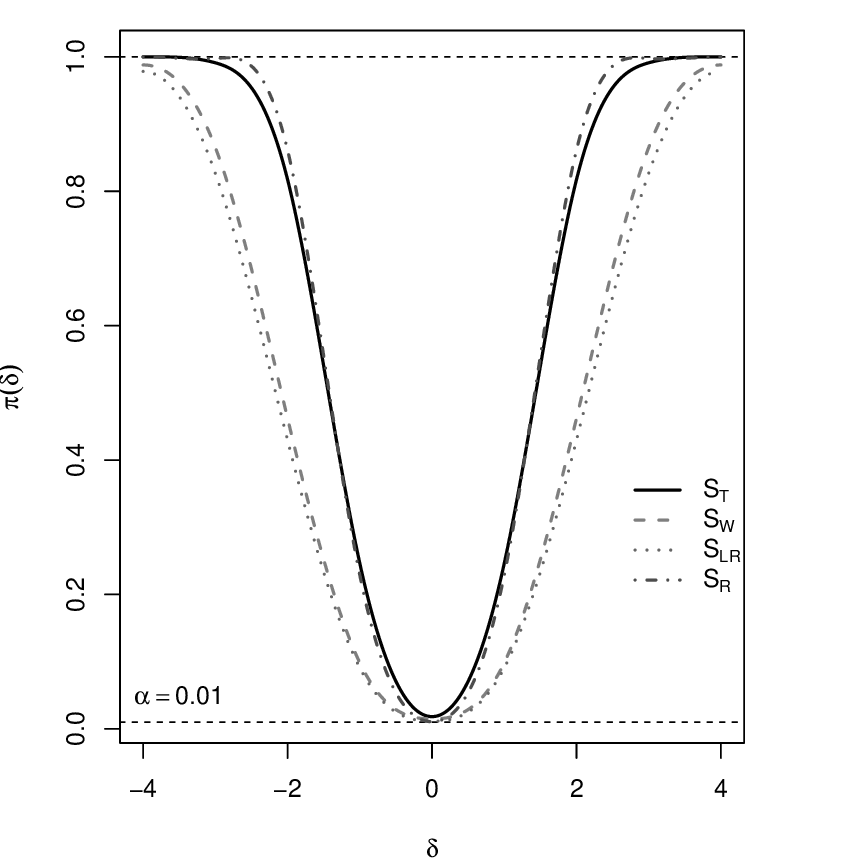}
	}
	\caption{\small Power curve of the tests in the log-HP quantile regression model ($\vartheta=2$).}
	\label{fig:04}
\end{figure}

\begin{table}[!ht]
\footnotesize
	\centering
	\caption{\small Null rejection rates for $H_0: \beta_1 = \cdots = \beta_r = 0$ in the log-SL quantile regression model ($\vartheta=4$).}
	\adjustbox{max height=\dimexpr\textheight-3.5cm\relax,
		max width=\textwidth}{
		\begin{tabular}{ccccccccccccccc}
			\toprule
			\multirow{2}{*}{$t$} & \multirow{2}{*}{$q$}& & \multicolumn{3}{c}{$n = 50$} & &\multicolumn{3}{c}{$n = 100$} & &\multicolumn{3}{c}{$n = 200$}\\
			\cline{4-6} \cline{8-10} \cline{12-14}
			&& & 1\% & 5\% & 10\% & & 1\% & 5\% & 10\% & & 1\% & 5\% & 10\%\\
			\hline
			\multirow{12}{*}{3} & \multirow{4}{*}{0.25}& $S_W$ & 0.0274 & 0.0764 & 0.1482 && 0.0166 & 0.0644 & 0.1224 && 0.0134 & 0.0554 & 0.1026\\
			&& $S_{LR}$ & 0.0136 & 0.062 & 0.1276 && 0.0112 & 0.059 & 0.1138 && 0.0122 & 0.0534 & 0.1094 \\
			&& $S_R$ &  0.0196 & 0.0666 & 0.1134 && 0.0262 & 0.0702 & 0.1186 && 0.024 & 0.0672 & 0.1142 \\
			&& $S_T$ & 0.0226 & 0.0814 & 0.1396 && 0.0258 & 0.075 & 0.133 && 0.0218 & 0.0702 & 0.1282 \\
			\cline{2-14}
			&\multirow{4}{*}{0.5}& $S_W$ & 0.0260 & 0.0766 & 0.1414 && 0.0168 & 0.0638 & 0.1188 && 0.0130 &  0.063 & 0.1146\\
			&& $S_{LR}$ & 0.0134 & 0.0622 & 0.1218 && 0.0096 & 0.0534 & 0.1082 && 0.0102 & 0.0496 & 0.0992 \\
			&& $S_R$ &  0.0196 & 0.0666 & 0.1134 && 0.0262 & 0.0702 & 0.1186 && 0.024 & 0.0672 & 0.1142\\
			&& $S_T$ & 0.0264 & 0.0814 & 0.1396 && 0.0258 & 0.075 & 0.133 && 0.0178 & 0.0702 & 0.1282\\
			\cline{2-14}
			& \multirow{4}{*}{0.75} & $S_W$ & 0.0264 & 0.0822 & 0.1376 && 0.0174 & 0.064 & 0.118 && 0.0126 & 0.052 & 0.113\\
			&& $S_{LR}$ & 0.0154 & 0.065 & 0.1192 && 0.0134 & 0.055 & 0.1036 && 0.009 & 0.0536 & 0.1056 \\
			&& $S_R$ &   0.0196 & 0.0712 & 0.1134 && 0.0262 & 0.0702 
			& 0.1186 && 0.024 & 0.0672 & 0.1142 \\
			&& $S_T$ & 0.0264 & 0.0814 & 0.1396 && 0.0258 & 0.075 & 0.133 && 0.0218 & 0.0742 & 0.1282 \\
			\bottomrule
			\multirow{12}{*}{1} & \multirow{4}{*}{0.25}& $S_W$ &  0.0158 & 0.066 & 0.1188 && 0.0108 & 0.063 & 0.1122 && 0.0124 & 0.0502 & 0.0998 \\
			&& $S_{LR}$ & 0.0478 &0.1416 & 0.215 && 0.0336 & 0.1198 & 0.1548 && 0.0198 & 0.0606 & 0.1112 \\
			&& $S_R$ & 0.0096 & 0.053 & 0.1026 && 0.0112 & 0.0526 & 0.103 && 0.009 & 0.0446 & 0.0966 \\
			&& $S_T$ & 0.024 & 0.0686 & 0.1288 && 0.0188 & 0.0722 & 0.1248 && 0.0156 & 0.0686 & 0.1216 \\
			\cline{2-14}
			&\multirow{4}{*}{0.5}& $S_W$ & 0.0162 & 0.066 & 0.1218 && 0.0106 & 0.0608 & 0.1124 && 0.0116 & 0.0498 & 0.1008\\
			&& $S_{LR}$ & 0.0474 & 0.1376 & 0.198 && 0.0338 & 0.1058 & 0.1644 && 0.0238 & 0.0658 & 0.1038 \\
			&& $S_R$ & 0.009 & 0.053 & 0.1026 && 0.0112 & 0.0526 & 0.103 && 0.009 & 0.0446 & 0.0966 \\
			&& $S_T$ & 0.0196 & 0.0716 & 0.1244 && 0.0188 & 0.072 & 0.1248 && 0.0156 & 0.0686 & 0.1218 \\
			\cline{2-14}
			& \multirow{4}{*}{0.75} & $S_W$ & 0.015 & 0.0682 & 0.123 && 0.011 & 0.0616 & 0.1132 && 0.0124 & 0.05 & 0.0996 \\
			&& $S_{LR}$ & 0.0396 & 0.1378 & 0.2114 && 0.0366 & 0.0904 & 0.1586 && 0.0186 & 0.0716 & 0.1 \\
			&& $S_R$ & 0.009 & 0.053 & 0.1026 && 0.0112 & 0.0526 & 0.103 && 0.009 & 0.0446 & 0.0966 \\
			&& $S_T$ & 0.0196 & 0.0718 & 0.1242 && 0.0188 & 0.072 & 0.1248 && 0.0156 & 0.0684 & 0.1218 \\
			\bottomrule
	\end{tabular}}
\label{tab:16}
\end{table}

\begin{figure}[!ht]
	\centering
	\subfigure[$r=3$]{
		\includegraphics[scale = 0.5]{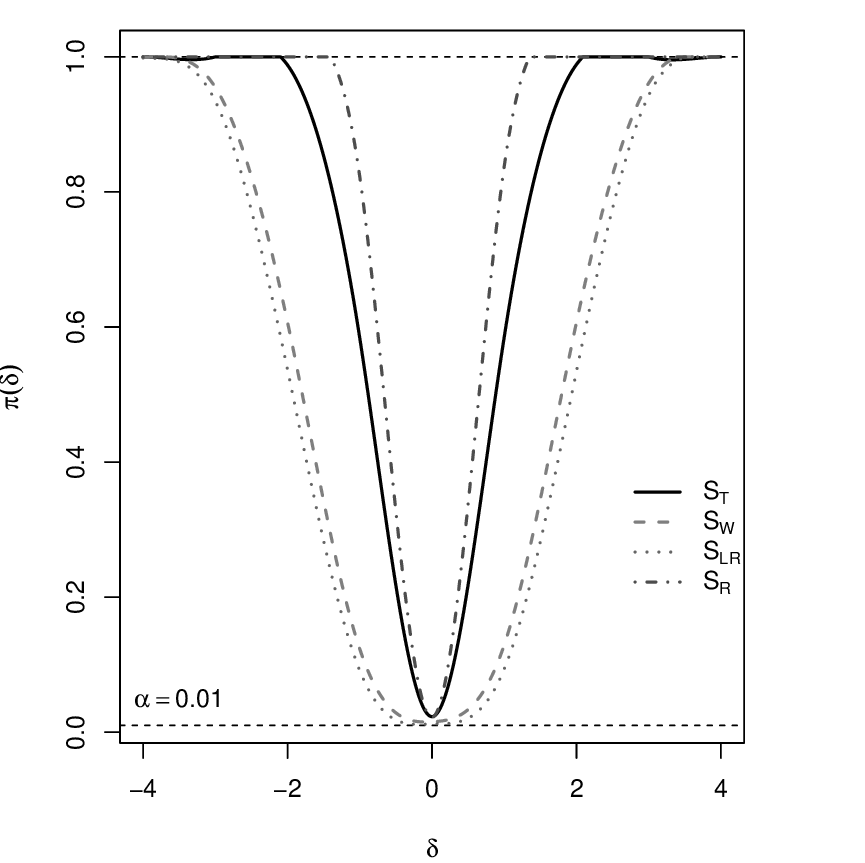}
	}
	\subfigure[$r=1$]{
		\includegraphics[scale = 0.5]{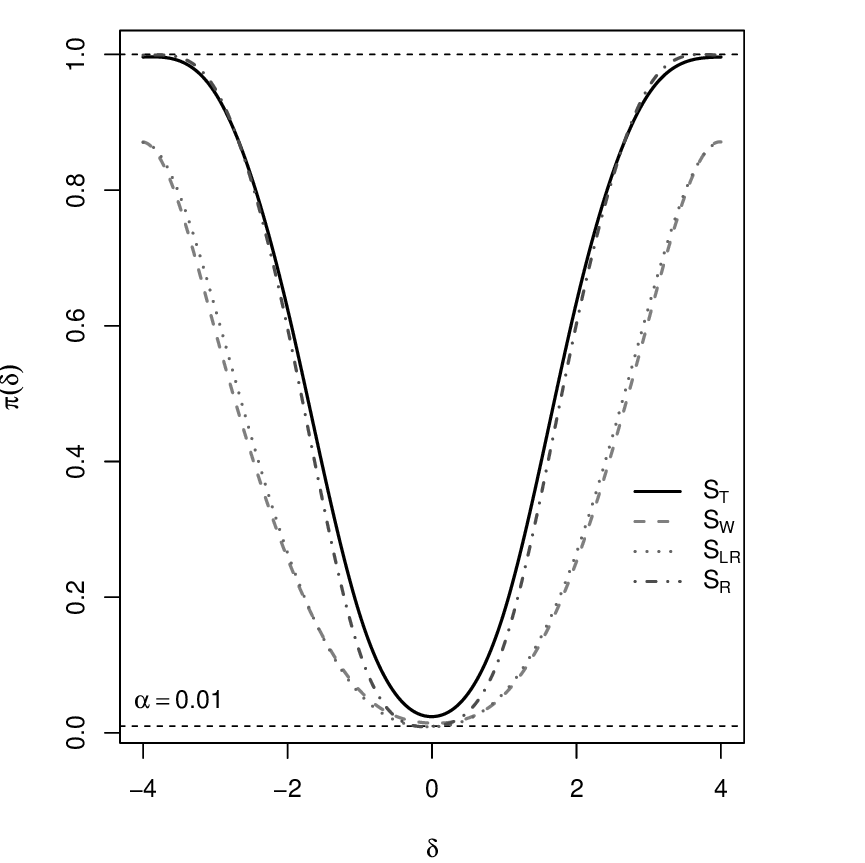}
	}
	\caption{\small Power curve of the tests in the log-SL quantile regression model ($\vartheta=4$).}
	\label{fig:05}
\end{figure}

\begin{table}[!ht]
\footnotesize
	\centering
	\caption{\small Null rejection rates for $H_0: \beta_1 = \cdots = \beta_r = 0$ in the log-CN quantile regression model ($\vartheta_1=0.1,\vartheta_2=0.2$).}
	\adjustbox{max height=\dimexpr\textheight-3.5cm\relax,
		max width=\textwidth}{
		\begin{tabular}{ccccccccccccccc}
			\toprule
			\multirow{2}{*}{$t$}&\multirow{2}{*}{$q$}& & \multicolumn{3}{c}{$n = 50$} & &\multicolumn{3}{c}{$n = 100$} & &\multicolumn{3}{c}{$n = 200$}\\
			\cline{4-6} \cline{8-10} \cline{12-14}
			&& & 1\% & 5\% & 10\% & & 1\% & 5\% & 10\% & & 1\% & 5\% & 10\%\\
			\hline
			\multirow{12}{*}{3} & \multirow{4}{*}{0.25}& $S_W$ & 0.0312 & 0.0906 & 0.1432 && 0.0164 & 0.0674 & 0.1206 && 0.0136 & 0.0628 & 0.1120\\
			&& $S_{LR}$ & 0.0144 & 0.0706 & 0.117 && 0.0122 & 0.0492 & 0.1032 && 0.0102 & 0.0488 & 0.1008 \\
			&& $S_R$ &  0.0274 & 0.0724 & 0.1132 && 0.0218 & 0.0672 & 0.1154 && 0.0288 & 0.0802 & 0.1298 \\
			&& $S_T$ & 0.0118 & 0.083 & 0.138 && 0.0162 & 0.0618 & 0.1178 && 0.026 & 0.0766 & 0.1372\\
			\cline{2-14}
			&\multirow{4}{*}{0.5}& $S_W$ & 0.0264 & 0.0894 & 0.1436 && 0.0158 & 0.0682 & 0.1208 && 0.0134 & 0.0606 & 0.1118\\
			&& $S_{LR}$ & 0.0146 & 0.0646 & 0.122 && 0.0112 & 0.0562 & 0.1134 && 0.0122 & 0.0546 & 0.0998 \\
			&& $S_R$ &  0.0274 & 0.0724 & 0.1166 && 0.0218 & 0.0672 & 0.1154 && 0.0288 & 0.0802 & 0.1298 \\
			&& $S_T$ & 0.0206 & 0.083 & 0.138 && 0.0162 & 0.0618 & 0.1178 && 0.0258 & 0.0766 & 0.1372\\
			\cline{2-14}
			&\multirow{4}{*}{0.75} & $S_W$ & 0.0308 & 0.0894 & 0.1458 && 0.0160 & 0.0666 & 0.1200 && 0.0132 & 0.0622 & 0.1114\\
			&& $S_{LR}$ & 0.0146 & 0.0644 & 0.1222 && 0.0122 & 0.055 & 0.1152 && 0.0128 & 0.0506 & 0.1076 \\
			&& $S_R$ &  0.0274 & 0.0616 & 0.1166 && 0.0218 & 0.0672 & 0.1154 && 0.0288 & 0.0802 & 0.1298 \\
			&& $S_T$ & 0.024 & 0.0832 & 0.1378 && 0.0162 & 0.0618 & 0.1178 && 0.0258 & 0.0766 & 0.1372 \\
			\bottomrule
			\multirow{12}{*}{1} & \multirow{4}{*}{0.25}& $S_W$ & 0.0146 & 0.0620 & 0.1166 && 0.0124 & 0.061 & 0.1102 && 0.0102 & 0.0546 & 0.1038 \\
			&& $S_{LR}$ & 0.0498 & 0.1298 & 0.1982 && 0.024 & 0.0946 & 0.148 && 0.021 & 0.0592 & 0.0988 \\
			&& $S_R$ &  0.01 & 0.0496 & 0.1028 && 0.009 & 0.0496 & 0.0988 && 0.0104 & 0.0572 & 0.1166 \\
			&& $S_T$ & 0.0224 & 0.0748 & 0.1372 && 0.0154 & 0.0682 & 0.1208 && 0.0186 & 0.0756 & 0.135 \\
			\cline{2-14}
			&\multirow{4}{*}{0.5}& $S_W$ & 0.0166 & 0.0614 & 0.1156 && 0.0128 & 0.0618 & 0.1108 && 0.0108 & 0.0546 & 0.1054\\
			&& $S_{LR}$ & 0.0366 & 0.1298 & 0.1982 && 0.024 & 0.0946 & 0.148 && 0.021 & 0.0592 & 0.0988 \\
			&& $S_R$ & 0.0104 & 0.0496 & 0.1028 && 0.009 & 0.0496 & 0.0988 && 0.0104 & 0.0572 & 0.1166 \\
			&& $S_T$ & 0.0194 & 0.0748 & 0.1372 && 0.0154 & 0.0682 & 0.121 && 0.0186 & 0.0756 & 0.1352 \\
			\cline{2-14}
			&\multirow{4}{*}{0.75} & $S_W$ & 0.017 & 0.0632 & 0.1172 && 0.0132 & 0.0634 & 0.1106 && 0.0104 & 0.055 & 0.1064 \\
			&& $S_{LR}$ & 0.0366 & 0.1298 & 0.1982 && 0.024 & 0.0946 & 0.148 && 0.021 & 0.0592 & 0.0988 \\
			&& $S_R$ & 0.0104 & 0.0496 & 0.1028 && 0.009 & 0.0496 & 0.0988 && 0.0104 & 0.0572 & 0.1166 \\
			&& $S_T$ & 0.019 & 0.0748 & 0.1372 && 0.0154 & 0.0682 & 0.121 && 0.0186 & 0.0754 & 0.1352 \\
			\bottomrule
	\end{tabular}}
\label{tab:17}
\end{table}

\begin{figure}[!ht]
	\centering
	\subfigure[$r=3$]{
		\includegraphics[scale = 0.5]{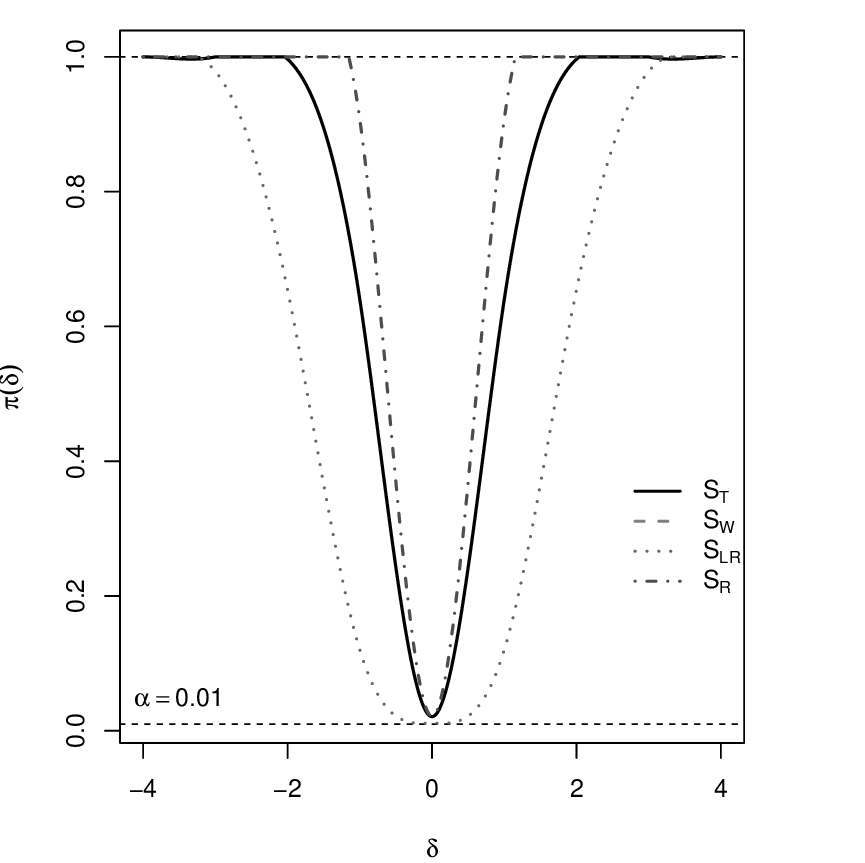}
	}
	\subfigure[$r=1$]{
		\includegraphics[scale = 0.5]{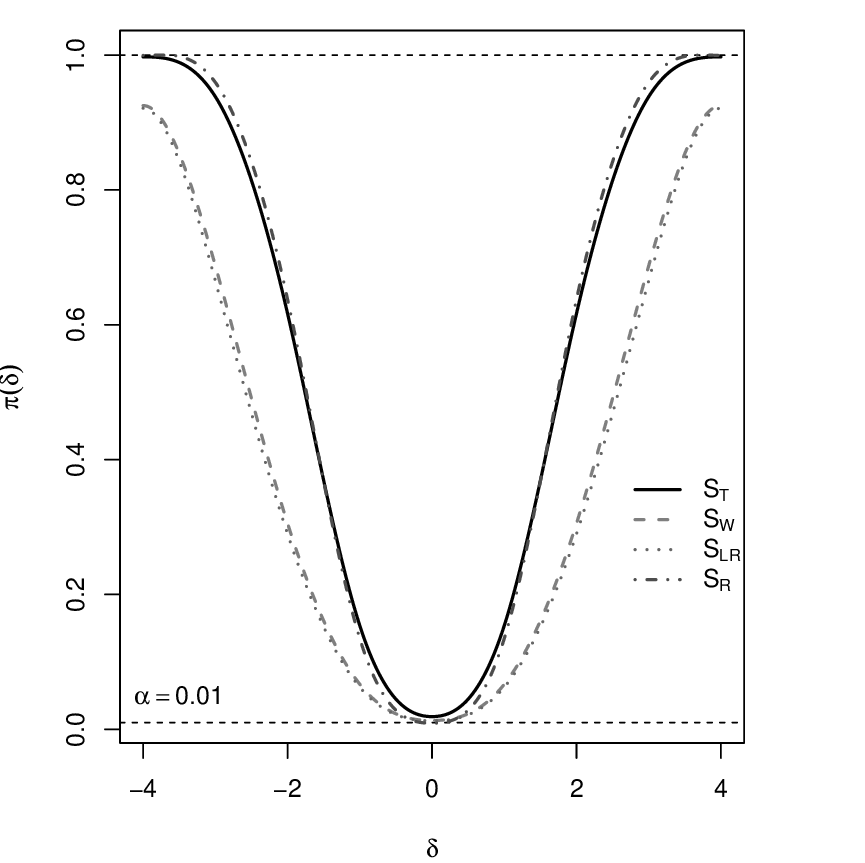}
	}
	\caption{\small Power curve of the tests in the log-CN quantile regression model ($\vartheta_1=0.1,\vartheta_2=0.2$).}
\label{fig:06}
\end{figure}

\begin{table}[!ht]
\footnotesize
	\centering
	\caption{\small Null rejection rates for $H_0: \beta_1 = \cdots = \beta_r = 0$ in the EBS quantile regression model ($\vartheta=0.5$).}
	\adjustbox{max height=\dimexpr\textheight-3.5cm\relax,
		max width=\textwidth}{
		\begin{tabular}{ccccccccccccccc}
			\toprule
			\multirow{2}{*}{$t$}&\multirow{2}{*}{$q$}& & \multicolumn{3}{c}{$n = 50$} & &\multicolumn{3}{c}{$n = 100$} & &\multicolumn{3}{c}{$n = 200$}\\
			\cline{4-6} \cline{8-10} \cline{12-14}
			&&  & 1\% & 5\% & 10\% & & 1\% & 5\% & 10\% & & 1\% & 5\% & 10\%\\
			\hline
			\multirow{12}{*}{3}&\multirow{4}{*}{0.25}& $S_W$ &  0.0222 & 0.0756 & 0.1384 && 0.015 & 0.063 & 0.118 && 0.0126 & 0.0576 & 0.106 \\
			&& $S_{LR}$ & 0.0118 & 0.0612 & 0.1194 && 0.0114 & 0.0576 & 0.111 && 0.011 & 0.0532 & 0.1012 \\
			&& $S_R$ & 0.0272 & 0.0728 & 0.1208 && 0.0266 & 0.072 & 0.114 && 0.0252 & 0.0694 & 0.1202 \\
			&& $S_T$ & 0.0192 & 0.0684 & 0.1246 && 0.0222 & 0.0696 & 0.1226 && 0.019 & 0.072 & 0.127 \\
			\cline{2-14}
			&\multirow{4}{*}{0.5}& $S_W$ & 0.0222 & 0.0756 & 0.1386 && 0.015 & 0.0632 & 0.1176 && 0.0126 & 0.0576 & 0.106\\
			&& $S_{LR}$ & 0.0162 & 0.0612 & 0.1194 && 0.0114 & 0.0576 & 0.111 && 0.011 & 0.0532 & 0.1012 \\
			&& $S_R$ & 0.0262 & 0.0728 & 0.1208 && 0.0266 & 0.072 & 0.114 && 0.0252 & 0.0694 & 0.1202 \\
			&& $S_T$ & 0.0206 & 0.0684 & 0.1246 && 0.0222 & 0.0696 & 0.1226 && 0.019 & 0.072 & 0.1272 \\
			\cline{2-14}
			&\multirow{4}{*}{0.75} & $S_W$ & 0.0222 & 0.0756 & 0.1386 && 0.015 & 0.0632 & 0.1178 && 0.0126 & 0.0576 & 0.1062 \\
			&& $S_{LR}$ & 0.0142 & 0.0612 & 0.1194 && 0.0114 & 0.0576 & 0.111 && 0.011 & 0.0532 & 0.1012 \\
			&& $S_R$ & 0.0262 & 0.0728 & 0.1208 && 0.0266 & 0.072 & 0.114 && 0.0252 & 0.0694 & 0.1202 \\
			&& $S_T$ & 0.0206 & 0.0684 & 0.1246 && 0.0222 & 0.0696 & 0.1226 && 0.019 & 0.072 & 0.1272 \\
			\bottomrule
			\multirow{12}{*}{1}&\multirow{4}{*}{0.25}& $S_W$ & 0.0242 & 0.0232 & 0.0538 && 0.0132 & 0.0404 & 0.0846 && 0.0068 & 0.0448 & 0.0926 \\
			&& $S_{LR}$ & 0.0114 & 0.0602 & 0.1096 && 0.0092 & 0.0526 & 0.097 && 0.008 & 0.052 & 0.0978 \\
			&& $S_R$ &  0.0106 & 0.0474 & 0.0948 && 0.0118 & 0.0536 & 0.1014 && 0.0118 & 0.0522 & 0.0988 \\
			&& $S_T$ & 0.0306 & 0.074 & 0.1216 && 0.0256 & 0.0818 & 0.1386 && 0.0162 & 0.0718 & 0.1292 \\
			\cline{2-14}
			&\multirow{4}{*}{0.5}& $S_W$ & 0.0046 & 0.0232 & 0.0536 && 0.0134 & 0.0644 & 0.0848 && 0.0036 &  0.045 & 0.0926\\
			&& $S_{LR}$ & 0.012 & 0.0604 & 0.1096 && 0.0112 & 0.0524 & 0.098 && 0.0082 & 0.0516 & 0.0988 \\
			&& $S_R$ & 0.0116 & 0.0474 & 0.0948 && 0.0118 & 0.0536 & 0.1014 && 0.0118 & 0.0522 & 0.0988 \\
			&& $S_T$ & 0.0206 & 0.074 & 0.1216 && 0.0256 & 0.0818 & 0.1386 && 0.0162 & 0.072 & 0.1292 \\
			\cline{2-14}
			&\multirow{4}{*}{0.75} & $S_W$ & 0.0046 & 0.0232 & 0.0536 && 0.0134 & 0.0644 & 0.1216 && 0.0036 & 0.0446 & 0.0928 \\
			&& $S_{LR}$ & 0.015 & 0.0608 & 0.1108 && 0.011 & 0.0532 & 0.0958 && 0.008 & 0.0516 & 0.1 \\
			&& $S_R$ & 0.0116 & 0.0474 & 0.0948 && 0.0118 & 0.0536 & 0.1014 && 0.0118 & 0.0522 & 0.0988 \\
			&& $S_T$ & 0.0206 & 0.074 & 0.1216 && 0.0256 & 0.0818 & 0.1386 && 0.0162 & 0.0722 & 0.1292 \\
			\bottomrule
	\end{tabular}}
\label{tab:18}
\end{table}

\begin{figure}[!ht]
	\centering
	\subfigure[$r=3$]{
		\includegraphics[scale = 0.5]{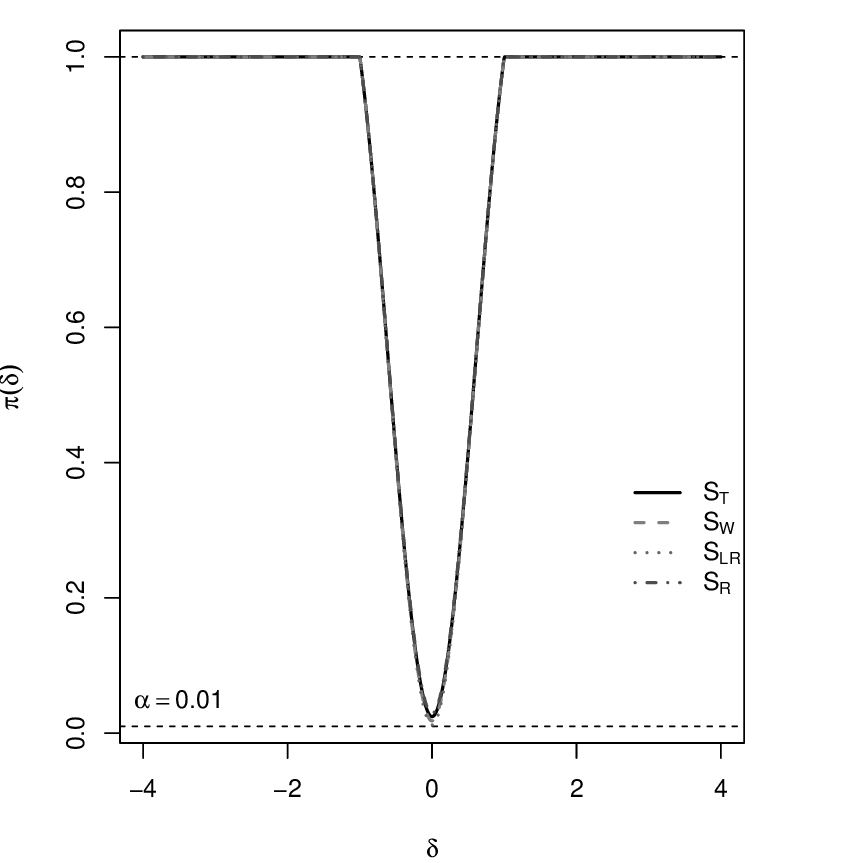}
	}
	\subfigure[$r=1$]{
		\includegraphics[scale = 0.5]{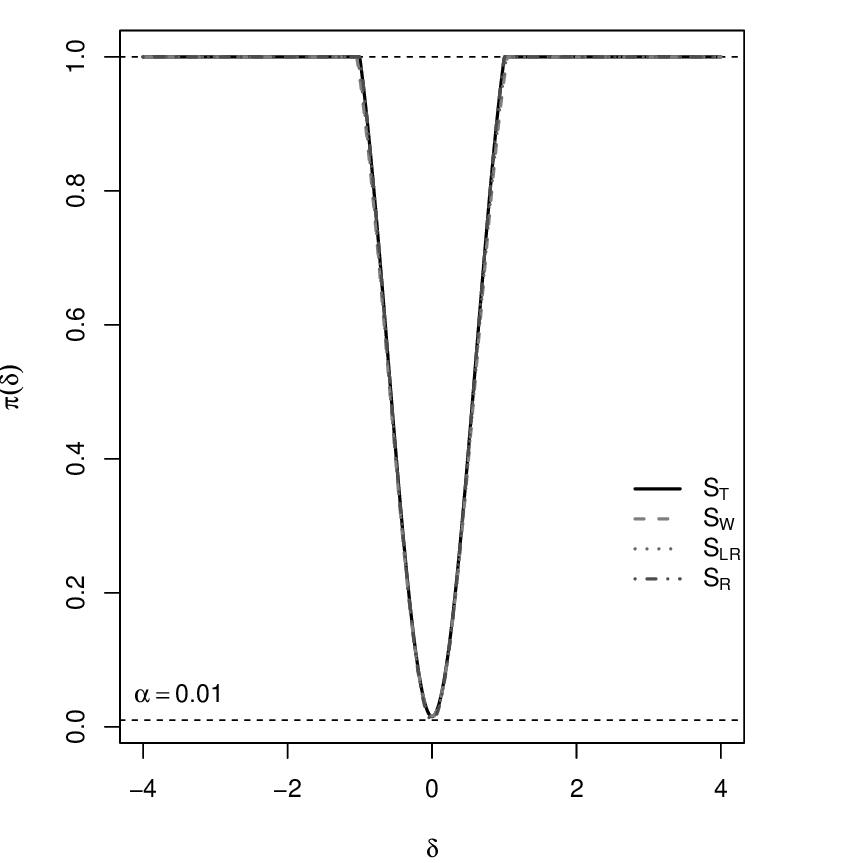}
	}
	\caption{\small Power curve of the tests in the EBS quantile regression model ($\vartheta=0.5$).}	
\label{fig:07}
\end{figure}

\begin{table}[!ht]
\footnotesize
	\centering
	\caption{\small Null rejection rates for $H_0: \beta_1 = \cdots = \beta_r = 0$ in the EBS-$t$ quantile regression model ($\vartheta_1=0.5,\vartheta_2=3$).}
	\adjustbox{max height=\dimexpr\textheight-3.5cm\relax,
		max width=\textwidth}{
		\begin{tabular}{ccccccccccccccc}
			\toprule
			\multirow{2}{*}{$t$} & \multirow{2}{*}{$q$}& & \multicolumn{3}{c}{$n = 50$} & &\multicolumn{3}{c}{$n = 100$} & &\multicolumn{3}{c}{$n = 200$}\\
			\cline{4-6} \cline{8-10} \cline{12-14}
			&&  & 1\% & 5\% & 10\% & & 1\% & 5\% & 10\% & & 1\% & 5\% & 10\%\\
			\hline
			\multirow{12}{*}{3} & \multirow{4}{*}{0.25}& $S_W$ & 0.0256 & 0.0806 & 0.1382 && 0.0226 & 0.0718 & 0.1292 && 0.0114 & 0.0578 & 0.1102 \\
			&& $S_{LR}$ & 0.0142 & 0.0712 & 0.1288 && 0.0122 & 0.0586 & 0.1108 && 0.01 & 0.0556 & 0.0992 \\
			&& $S_R$ & 0.0264 & 0.0708 & 0.1146 && 0.0262 & 0.0624 & 0.105 && 0.0296 & 0.0746 & 0.1178 \\
			&& $S_T$ & 0.0194 & 0.0642 & 0.12 && 0.0234 & 0.068 &  0.1262 && 0.0204 & 0.067 & 0.1192 \\
			\cline{2-14}
			&\multirow{4}{*}{0.5}& $S_W$ & 0.0234 & 0.0806 & 0.138 && 0.0228 & 0.0718 & 0.1288 && 0.011 & 0.0582 & 0.1106 \\
			&& $S_{LR}$ & 0.0138 & 0.0674 & 0.1144 && 0.0126 & 0.056 & 0.1064 && 0.0104 & 0.0534 & 0.0946 \\
			&& $S_R$ & 0.0264 & 0.0708 & 0.1146 && 0.0262 & 0.0624 & 0.105 && 0.0296 & 0.0746 & 0.1178 \\
			&& $S_T$ & 0.0142 & 0.0644 & 0.12 && 0.0234 & 0.068 &  0.1262 && 0.0202 & 0.067 & 0.1192 \\
			\cline{2-14}
			&\multirow{4}{*}{0.75} & $S_W$ & 0.0236 & 0.0806 & 0.1382 && 0.0228 & 0.0718 & 0.1292 && 0.0112 & 0.058 & 0.1108 \\
			&& $S_{LR}$ & 0.0144 & 0.0608 & 0.1144 && 0.0142 & 0.0556 & 0.11 && 0.0128 & 0.051 & 0.104 \\
			&& $S_R$ & 0.0264 & 0.0708 & 0.1146 && 0.0262 & 0.0624 & 0.105 && 0.0296 & 0.0746 & 0.1178 \\
			&& $S_T$ & 0.0142 & 0.0642 & 0.12 && 0.0234 & 0.068 &  0.1244 && 0.0202 & 0.067 & 0.1192 \\
			\bottomrule
			\multirow{12}{*}{1} & \multirow{4}{*}{0.25}& $S_W$ & 0.0124 & 0.053 & 0.106 && 0.0192 & 0.0434 & 0.0906 && 0.0074 & 0.038 & 0.0816 \\
			&& $S_{LR}$ & 0.0148 & 0.0624 & 0.117 && 0.0122 & 0.057 & 0.1088 && 0.0108 & 0.0554 & 0.101 \\
			&& $S_R$ &  0.012 & 0.0496 & 0.1026 && 0.011 & 0.047 & 0.0958 && 0.011 & 0.0498 & 0.098\\
			&& $S_T$ & 0.0244 & 0.0742 & 0.1274 && 0.027 & 0.0788 &  0.1294 && 0.0174 & 0.0698 & 0.1234\\
			\cline{2-14}
			&\multirow{4}{*}{0.5}& $S_W$ & 0.0146 & 0.0566 & 0.1062 && 0.0192 & 0.0434 & 0.0906 && 0.0072 & 0.038 & 0.0814 \\
			&& $S_{LR}$ & 0.013 & 0.0624 & 0.117 && 0.0122 & 0.0572 & 0.1088 && 0.0108 & 0.0554 & 0.101 \\
			&& $S_R$ & 0.012 & 0.0496 & 0.1026 && 0.0104 & 0.047 & 0.0958 && 0.011 & 0.0498 & 0.098 \\
			&& $S_T$ & 0.0186 & 0.0742 & 0.1274 && 0.0272 & 0.0788 
			&  0.1294 && 0.0174 & 0.067 & 0.1234 \\
			\cline{2-14}
			&\multirow{4}{*}{0.75} & $S_W$ & 0.015 & 0.0568 & 0.1066 && 0.0192 & 0.0432 & 0.091 && 0.0074 & 0.0378 & 0.0814 \\
			&& $S_{LR}$ & 0.013 & 0.0624 & 0.117 && 0.0122 & 0.057 & 0.1088 && 0.0108 & 0.0554 & 0.101 \\
			&& $S_R$ & 0.012 & 0.0496 & 0.1026 && 0.0104 & 0.047 & 0.0958 && 0.011 & 0.0498 & 0.098 \\
			&& $S_T$ & 0.0186 & 0.0742 & 0.1272 && 0.0272 & 0.0788 
			& 0.1294 && 0.0174 & 0.0698 & 0.1236 \\
			\bottomrule
	\end{tabular}}
\label{tab:19}
\end{table}

\begin{figure}[!ht]
	\centering
	\subfigure[$r=3$]{
		\includegraphics[scale = 0.5]{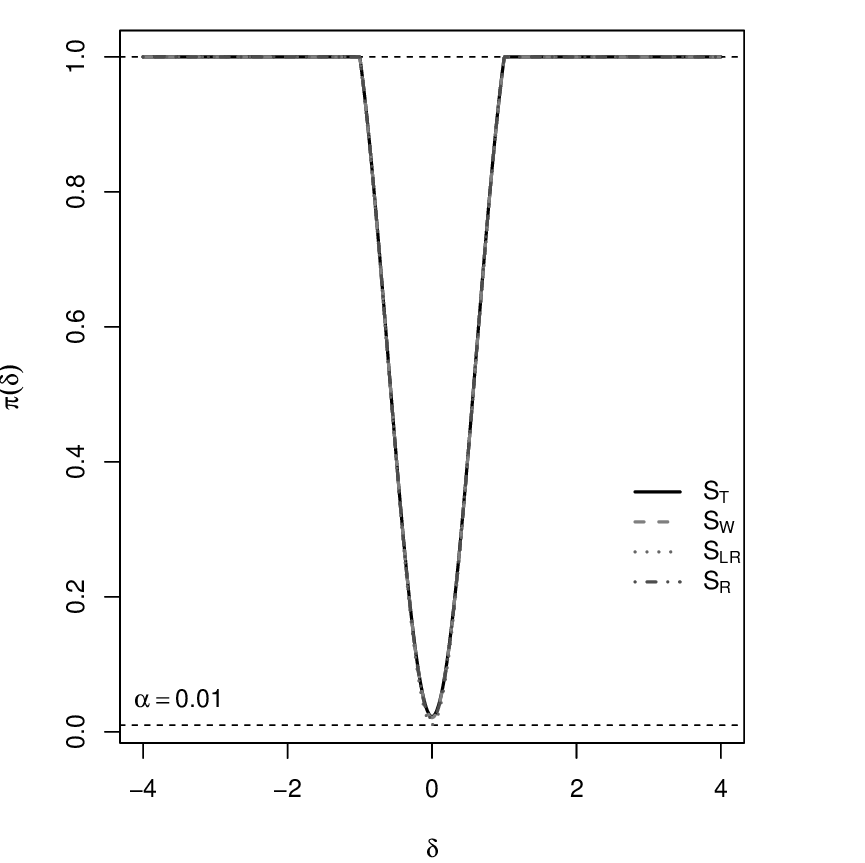}
	}
	\subfigure[$r=1$]{
		\includegraphics[scale = 0.5]{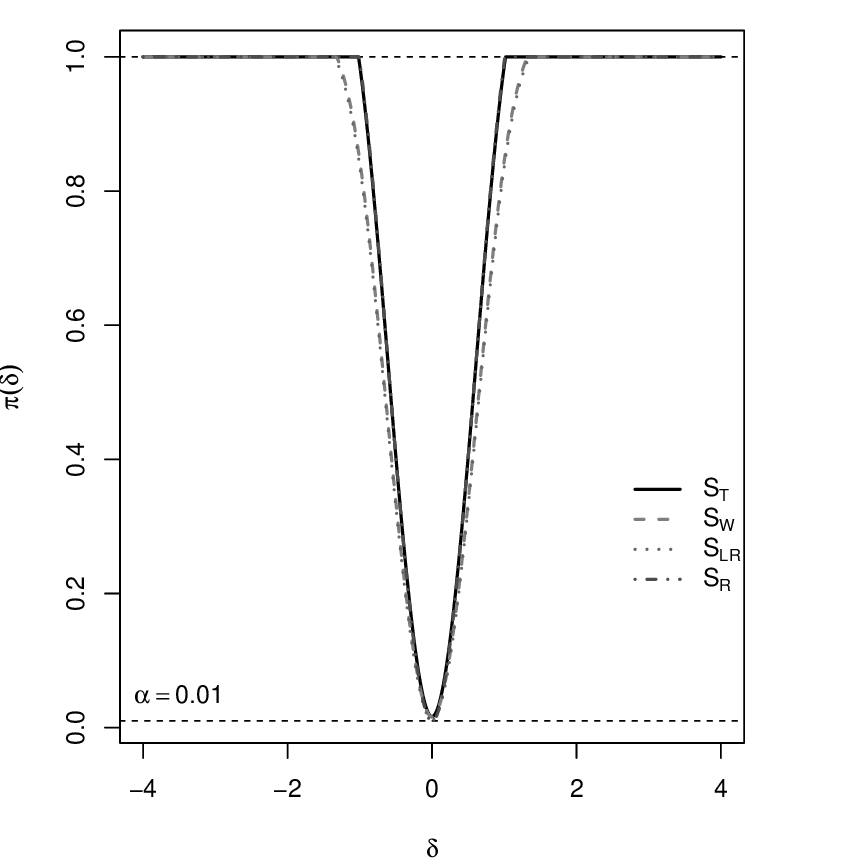}
	}
	\caption{\small Power curve of the tests in the EBS-$t$ quantile regression model ($\vartheta_1=0.5,\vartheta=3$).}	
\label{fig:08}
\end{figure}

\clearpage

\section{Application to real data}\label{sec:5}

The proposed log-symmetric quantile regression models are now used to analyze a web-scraped data set on movies from 1929 to 2016. This data set was constructed by \cite{vslm:18} from the internet movie database (IMDb) [\texttt{https://www.imdb.com}], and contains 250 cataloged movies, however, due to the fact that some old movies not produced in the USA do not have their respective financial information, the analysis of the data set proceeded with 155 movies. The variables considered in the study were: $Y$, box office, in millions of US\$; $x_1$, number of votes, in millions; $ x_2 $, expense, in thousands of US\$ ; and $ x_3 $, rating, according to users. Here, $Y$ is the response and $x_1$, $x_2$ and $x_3$ are the respective covariates.

Table \ref{tab:desc} reports descriptive statistics of the observed box offices (in millions of US\$), including the mean, median, minimum, maximum, SD, CS, CK and coefficient of variation (CV) values. From this table, we observe that the median and mean of the box offices are respectively 60.995 and 127.915, namely, the mean is greater than the median which indicates a positively skewed feature in the data. Moreover, CV is 126.29\%, which indicates a high level of dispersion around the mean. We also observe that the CS value confirms the skewed nature and the CK value indicates a high kurtosis feature in the data.

\begin{table}[H]
\footnotesize
	\caption{\small Summary statistics for the box office data, in millions of US\$.}
	\centering
	\begin{tabular}{ccccccccc}
	 \toprule
	  $n$ & Min. &  Median & Mean & Max. & SD & CS & CK & CV\\
	  \hline
	  155 & 0.1149 &  60.9947 & 127.3915 & 936.6274 & 160.8849 & 1.9370 & 4.4831 & 126.2916\\
     \bottomrule
	\end{tabular}
\label{tab:desc}
\end{table}

\cite{vslm:18} analyzed the box office data using log-symmetric regression models \citep{vp:16a}, which according to the authors is justified due to the asymmetric nature of the data. However, as explained earlier, quantile regression provides a richer characterization of the effects of covariates on the response. Thus, we consider the proposed log-symmetric quantile regression model as a more general alternative to {the log-symmetric regression models by \cite{vp:16a}}. We then analyze the box office data using the proposed model, expressed as 
\begin{equation}
Y_i = (\beta_0 + \beta_1 x_{1i} + \beta_2 x_{2i} + \beta_3 x_{3i})\varepsilon_{i}^{\sqrt{\tau_0 + \tau_1 x_{1i} + \tau_2 x_{2i} + \tau_3 x_{3i}}}, \quad i = 1,2, \ldots, 155, \nonumber
\end{equation}
where $\epsilon_i\sim\text{QLS}(1,1,g)$. 

Initially, we have to find the best model amongst the quantile regression models based the log-NO,
log-$t$,
log-PE,
log-HP,
log-SL,
log-CN,
EBS and 
EBS-$t$ distributions. In this sense, the information criteria (AIC, BIC and AICc) and the root of the mean square error (RMSE) of the prediction, are used. The idea is to fit the log-symmetric regression models for each $q = 0.01, 0.02, \ldots, 0.98, 0.99$, and then to compute the mean of the corresponding AIC, BIC, AICc and RMSE values obtained from the $q$s. Table \ref{tab:criterions} reports the results and we initially observe that the model with a log-CN distribution has the lowest values for the information criteria. However, for the log-$t$, log-PE, log-HP and log-SL distributions, the information criteria values are very close to the log-CN ones. Moreover, we observe that the models based on the log-NO, EBS and log-HP distributions have the lowest RMSE values, indicating that they are the most accurate models among those analyzed, in terms of prediction. Therefore, considering the results of the information criteria and the RMSE, the best log-symmetric quantile regression model is the one based on the log-HP distribution.

Figure \ref{fig:estimates} plots the estimated parameters in the log-HP quantile regression model across $q$. Results suggest that the box office data display asymmetric dynamics: the estimates of $\widehat{\beta}_0$, $\widehat{\beta}_1$ and $\widehat{\beta}_2$ ($\widehat{\beta}_3$) tend to decrease (increase) as $q$ increases, with a change of sign in $\widehat{\beta}_0$, $\widehat{\beta}_1$ and $\widehat{\beta}_3$. On the other hand, the estimates of $\widehat{\tau}_0$ and $\widehat{\tau}_1$ ($\widehat{\tau}_3$) tend initially to decrease (increase) as $q$ increases and from $q=0.9$ these estimates tend to increase (decrease). The {pattern} for the estimates of $\widehat{\tau}_2$ is similar to $\widehat{\tau}_0$ and $\widehat{\tau}_1$, but the change in the trend is close to $q=0.99$. Table \ref{tab:estimates} presents the maximum likelihood estimates and standard errors for the log-HP quantile regression model parameters with $q = 0.25, 0.5, 0.75$. The estimate of $\widehat{\vartheta}$ was equal to 1 for all values of $q$. {We can interpret the estimated coefficients in terms of their effect on the response $Y_i$ (box offices, in millions of US\$); see \citet[p.82]{weisberg:14} for similar interpretation. For example, an increase in the number of votes by 1 million, increases the $25^{\circ}$ percentile ($q=0.25$) of the box office by $(\exp(2.3891)-1)\times 100\%=990.37\%$, whereas the increase in the $75^{\circ}$ percentile ($q=0.75$) is of $(\exp(1.2510)-1)\times 100\%=249.39\%$. Therefore, the effect of the number of votes on the box office is smaller for movies with bigger box offices (larger quantiles). We also observe that the budget behaves similarly to the number of votes, that is, the effect on the box office is smaller for movies with bigger budgets. Finally, we note that the effect of the rating on the box office is negative, where movies with smaller box offices (smaller quantiles) have greater negative effects.}

\begin{table}[!ht]
\footnotesize
	\centering
	\caption{\small Means of AIC, BIC, AICc and RMSE values based on $q = 0.01,0.02, \ldots, 0.98, 0.99$ for different log-symmetric quantile regression models.}
	\adjustbox{max height=\dimexpr\textheight-3.5cm\relax,
		max width=\textwidth}{
		\begin{tabular}{lcccccccc}
			\toprule
			Criterion & Log-NO & Log-$t$ & Log-PE & Log-HP & Log-SL & Log-CN & EBS & EBS-$t$ \\
			\hline
			AIC & 514.0027 & 499.4266 & 500.5683 & 501.1267 & 499.3116 & 497.9390 & 514.1874 & 499.4132 \\
			BIC & 538.3501 & 523.7740 & 524.9157 & 525.4741 &  523.6590 & 522.2864 & 538.5348 & 523.7606 \\ 
			AICc & 514.9890 & 500.4129 & 501.5546 & 502.1130 &  500.2979 & 498.9253 & 515.1737 & 500.3995 \\
			RMSE & 191.7217 & 225.3356 & 212.2050 & 198.2404 &  239.7783 & 230.8775 & 192.0232 & 233.8590 \\
			\bottomrule
	\end{tabular}}
		\label{tab:criterions}
\end{table}

\begin{figure}[!ht]
	\centering
	\subfigure[$\widehat{\beta}_0$]{\includegraphics[scale = 0.26]{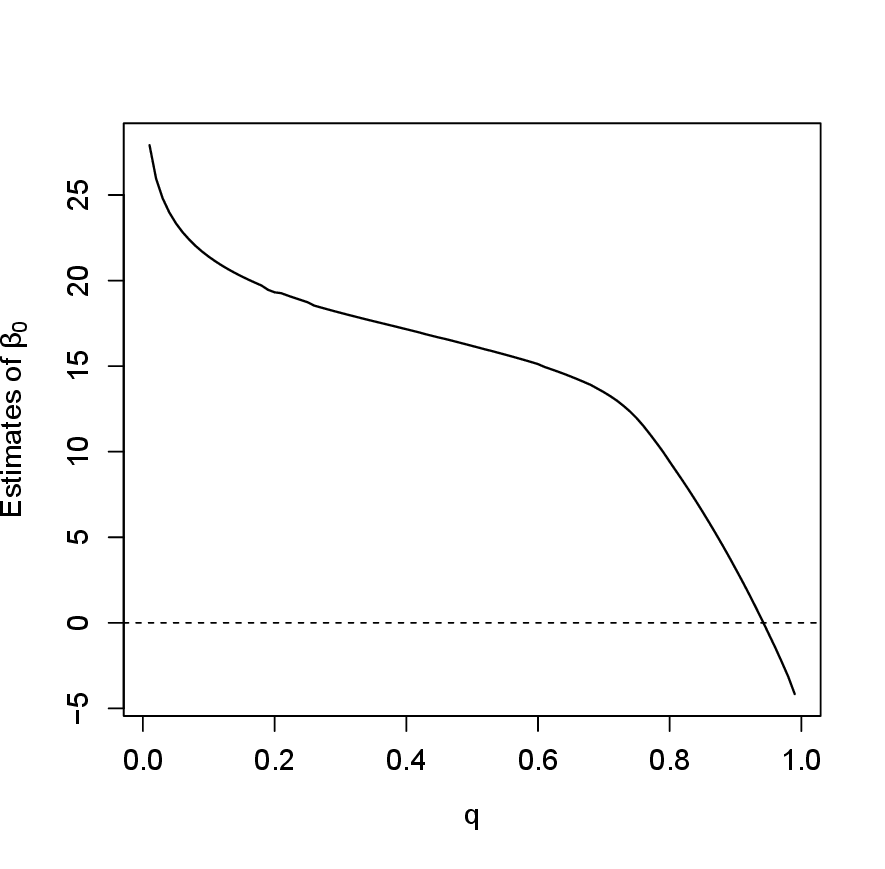}}
	\subfigure[$\widehat{\beta}_1$]{\includegraphics[scale = 0.26]{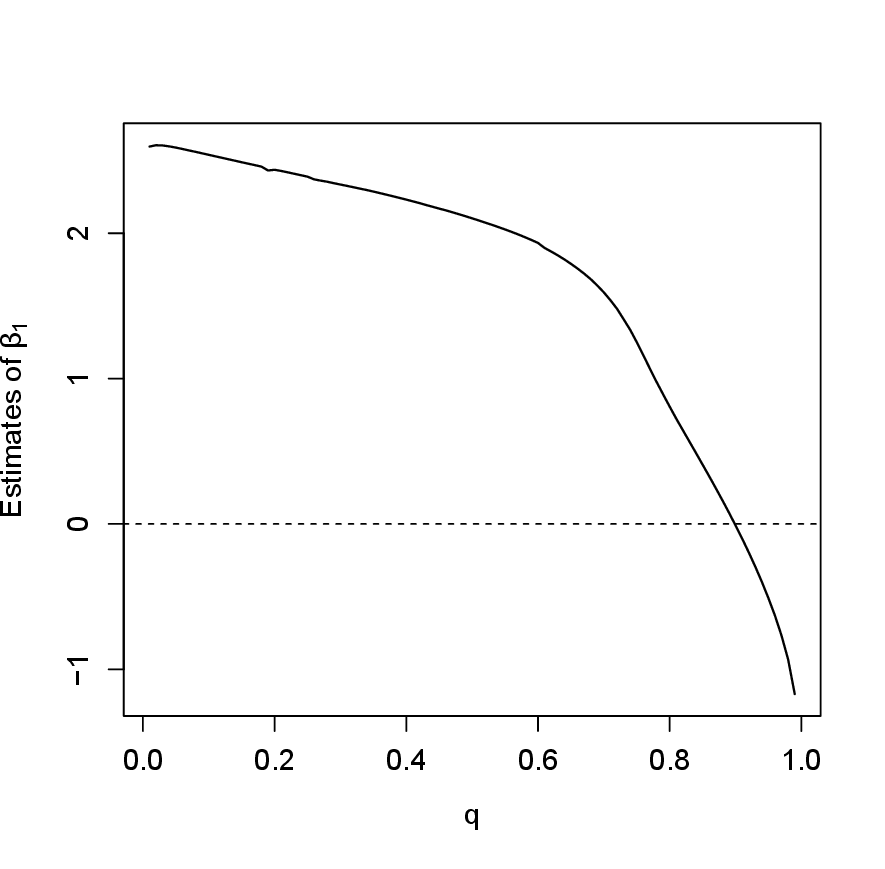}}
	\subfigure[$\widehat{\beta}_2$]{\includegraphics[scale = 0.26]{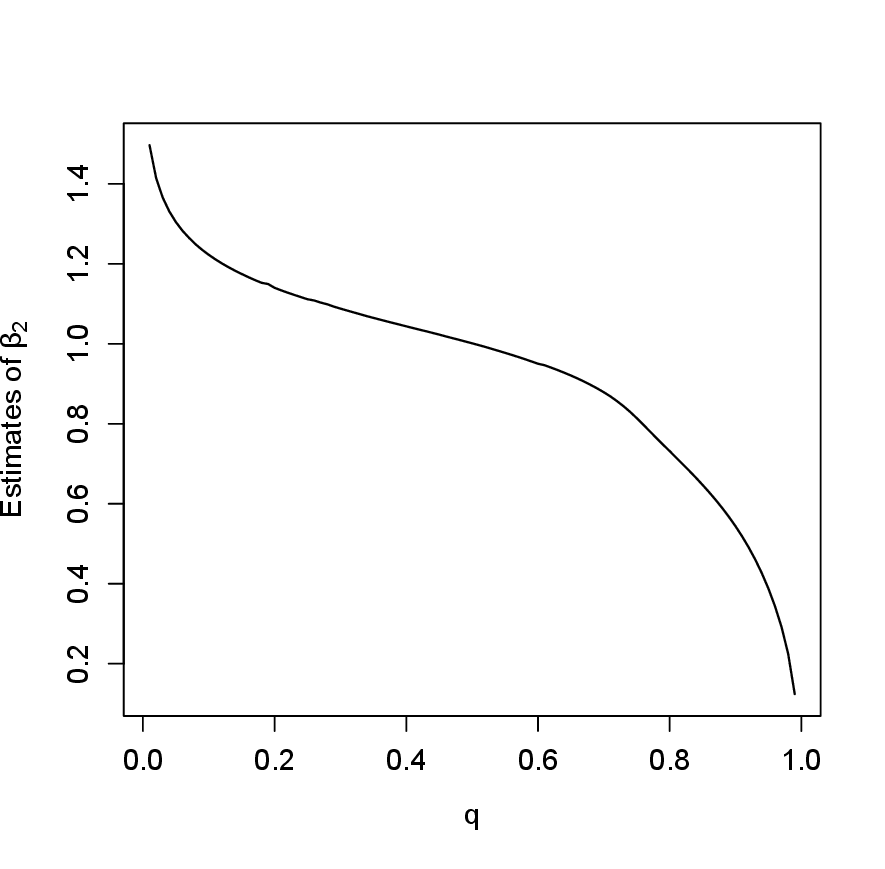}}
	\subfigure[$\widehat{\beta}_3$]{\includegraphics[scale = 0.26]{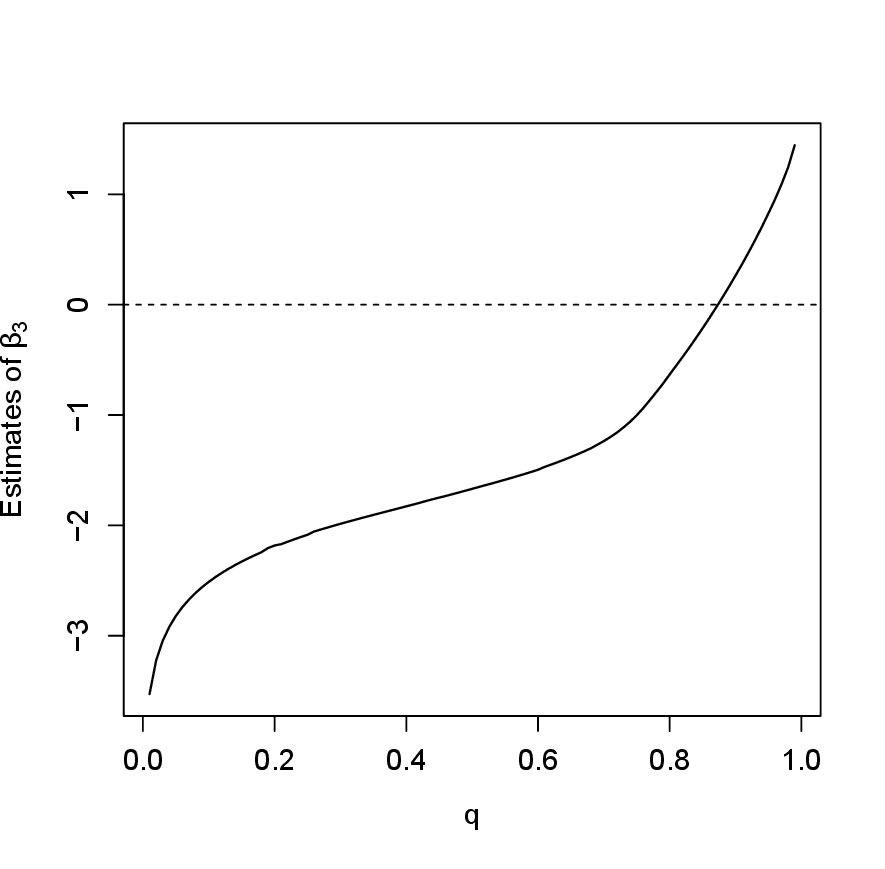}}\\
	\subfigure[$\widehat{\tau}_0$]{\includegraphics[scale = 0.26]{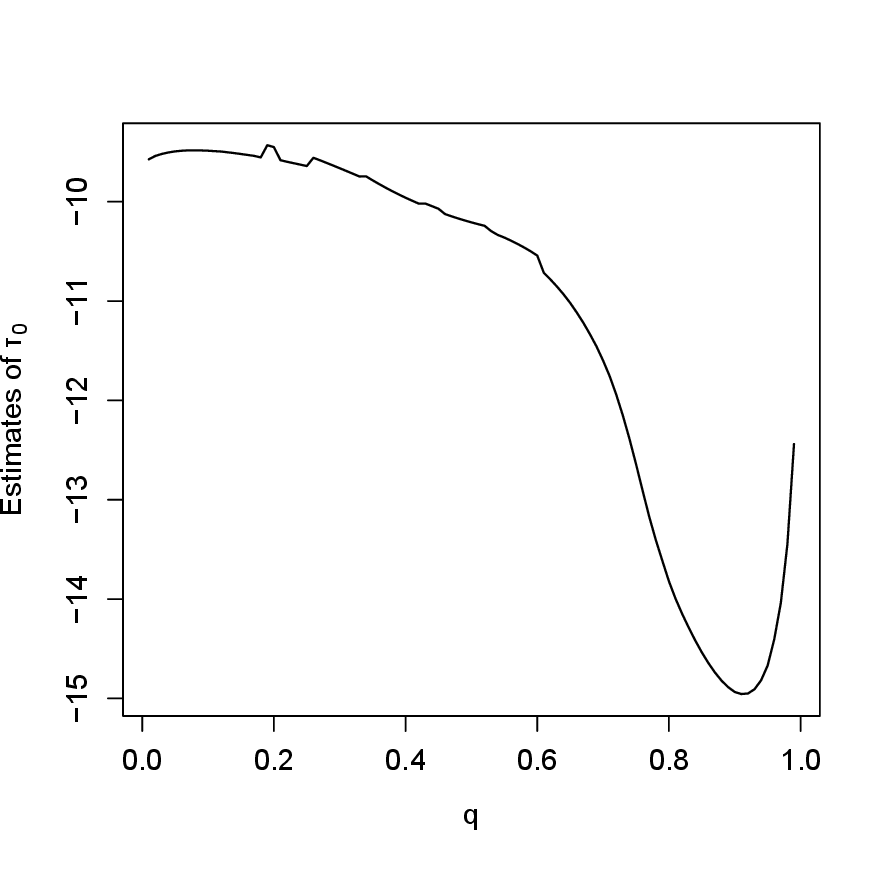}}
	\subfigure[$\widehat{\tau}_1$]{\includegraphics[scale = 0.26]{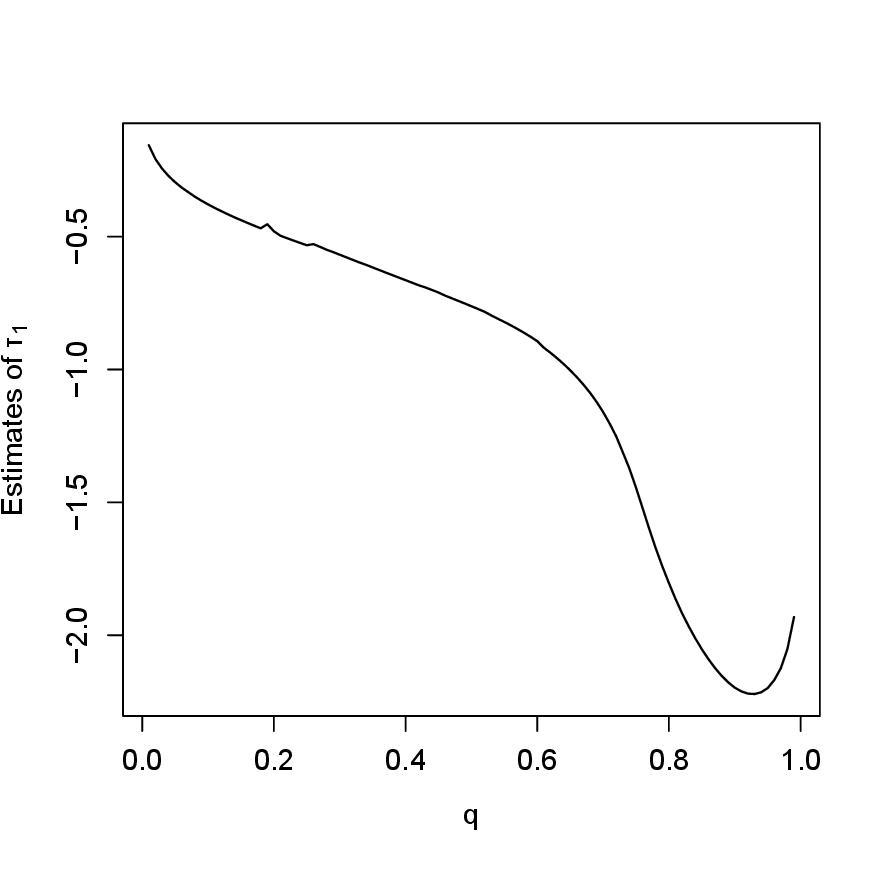}}
	\subfigure[$\widehat{\tau}_2$]{\includegraphics[scale = 0.26]{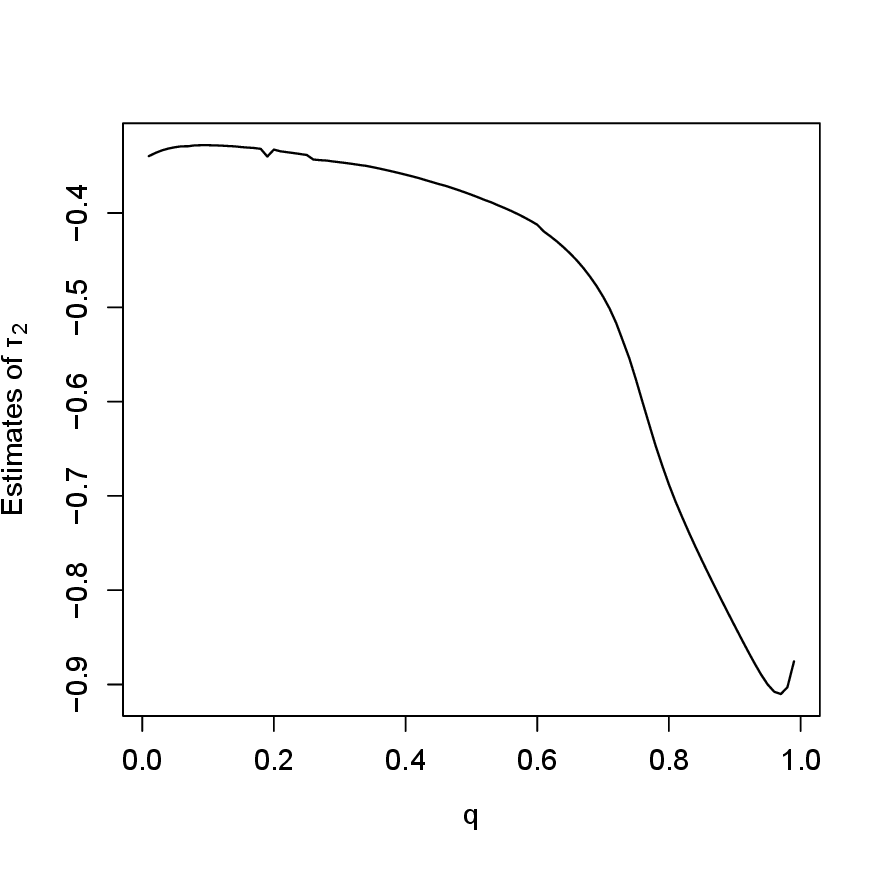}}
	\subfigure[$\widehat{\tau}_3$]{\includegraphics[scale = 0.26]{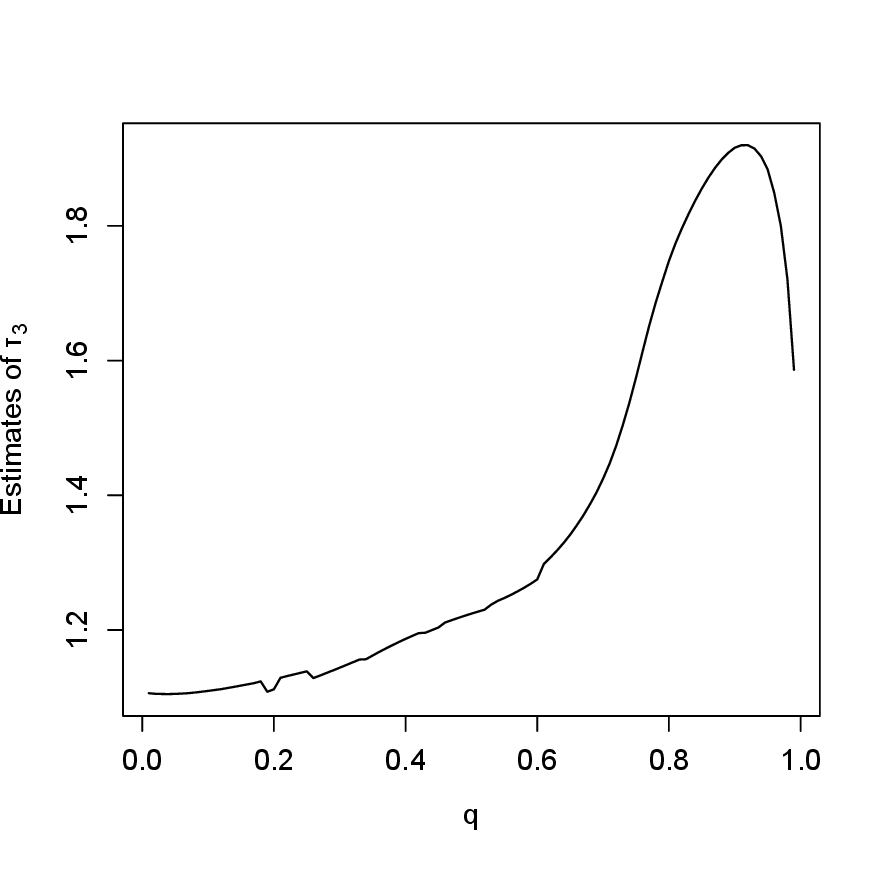}}
	\caption{\small Estimated parameters in the log-HP quantile regression model across $q$ for the box office data.}	
	\label{fig:estimates}
\end{figure}

	\begin{table}[!ht]
	\footnotesize
		\centering
		\caption{\small Maximum likelihood estimates (with standard errors in parentheses) for the log-HP quantile regression model ($\widehat{\vartheta} = 1$).}
		\adjustbox{max height=\dimexpr\textheight-3.5cm\relax,
		max width=\textwidth}{
		\begin{tabular}{cccccc}
			\toprule
			\multirow{2}{*}{$q$} & & \multicolumn{4}{c}{Estimate (Standard error)} \\
			\cline{3-6}
			& & Intercept ($\widehat{\beta}_0/\widehat{\tau}_0$) & Votes ($\widehat{\beta}_1/\widehat{\tau}_1$) & Budget ($\widehat{\beta}_2/\widehat{\tau}_2$) & Rating ($\widehat{\beta}_3/\widehat{\tau}_3$) \\
			\hline
			\multirow{2}{*}{0.25} & $\widehat{\beta}$ & 18.7352 (4.4784) & 2.3891 (0.4168) & 1.1115 (0.1716) & -2.0848 (0.5508) \\
			& $\widehat{\tau}$ & -9.6409 (8.5539) & -0.5325 (0.7200) & -0.3385 (0.2954) & 1.1387 (1.0594) \\
			\cline{2-6}
			\multirow{2}{*}{0.5} & $\widehat{\beta}$ & 16.1816 (3.9019) & 2.1025 (0.4026) & 1.0013 (0.1499) & -1.6689 (0.4814) \\
			& $\widehat{\tau}$ & -10.2079 (3.8779) & -0.7619 (0.5187) & -0.3806 (0.2923) & 1.2243 (0.4798) \\
			\cline{2-6}
			\multirow{2}{*}{0.75} & $\widehat{\beta}$ & 11.9536 (6.5034) & 1.2510 (0.8749) & 0.8142 (0.1644) & -1.0023 (0.8438) \\
			& $\widehat{\tau}$ & -12.6406 (6.3288) & -1.4446 (0.9279) & -0.5768 (0.3268) & 1.5746 (0.8140)\\
			\bottomrule
		\end{tabular}}
		\label{tab:estimates}	
	\end{table}

Next, we test the null hypotheses $H_0: \beta_1 = 0$, $H_0: \beta_2 = 0$, $H_0: \beta_3 = 0$ and $H_0: \tau_1 = 0$, $H_0: \tau_2 = 0$, $H_0: \tau_3 = 0$, using {the $S_W$, $S_{LR}$, $S_R$ and $S_T$ tests. The observed values of the different test statistics and the corresponding $p$-values, based on the the log-HP quantile regression model, are given in Table \ref{tab:tests}. Considering a 5\% significance level, we do not reject the hypothesis $H_0: \beta_1 = 0$ for the statistics $S_W$ and $S_R$, with $q = 0.75$. For $H_0: \beta_2 = 0$, we do not reject the null hypothesis for any of the tests. Testing $H_0: \beta_3 = 0$, at a 5\% significance level, we do not reject the null hypothesis for $S_R$, with $q = 0.25, 0.5, 0.75$ and also for the $S_W$ and $S_T$ tests, with $q= 0.75 $. Furthermore, for the coefficients associated with the skewness (or relative dispersion), that is, $\tau_1, \tau_2, \tau_3$, we do not reject $H_0: \tau_1 = 0$ only for the $S_T$ test and $H_0: \tau_3 = 0$ for the $S_T$ and $ S_{LR}$ tests, with $q = 0.75$,  at a 5\% significance level. A relevant aspect that can be noted is that the observed values of the statistics, used to test the parameters $\beta_1$, $\beta_2$ and $\beta_3$, decrease as $q$ increases.}



\begin{table}[!ht]
\footnotesize
	\centering
	\caption{\small Observed values of the $S_W$, $S_{LR}$, $S_R$ and $S_T$ test statistics and the corresponding $p$-values for the log-HP quantile regression model.}
	\adjustbox{max height=\dimexpr\textheight-3.5cm\relax,
		max width=\textwidth}{
		\begin{tabular}{cccccccccc}
			\toprule
			\multirow{2}{*}{$q$} && \multicolumn{2}{c}{$H_0: \beta_1 = 0$} && \multicolumn{2}{c}{$H_0: \beta_2 = 0$} && \multicolumn{2}{c}{$H_0: \beta_3 = 0$}\\
			\cline{3-4} \cline{6-7} \cline{9-10} 
			&& Statistics & $p$-value && Statistics & $p$-value && Statistics & $p$-value\\
			\hline
			\multirow{4}{*}{0.25} & $S_W$ & 32.855 & $<0.0001$ &&  41.958 & $<0.0001$ && 14.327 & 0.0002\\ 
			& $S_{LR}$ & 34.839 & $<0.0001$ && 26.123 & $<0.0001$ && 14.921 & 0.0001 \\
			& $S_R$ & 4.0034 & 0.0454 && 7.8686 & 0.0050 && 0.0061 & 0.9378\\
			& $S_T$ & 31.674 & $<0.0001$ && 19.982 & $<0.0001$ && 13.798 & 0.0002 \\
			\hline
			\multirow{4}{*}{0.5} & $S_W$ & 27.272 & $<0.0001$ && 44.641 & $<0.0001$ && 12.019 & 0.0005\\ 
			& $S_{LR}$ & 26.62 & $<0.0001$ && 30.3810 & $<0.0001$ && 13.223 & 0.0003 \\
			& $S_R$ & 1.6362 & 0.2008 && 10.6870 & 0.0011 && 0.0080 & 0.9289 \\
			& $S_T$ & 27.796 & $<0.0001$ && 23.819 & $<0.0001$ && 12.889 & 0.0003\\
			\hline
			\multirow{4}{*}{0.75} & $S_W$ & 2.0448 & 0.1527 && 24.532 & $<0.0001$ && 1.4109 & 0.2349 \\ 
			& $S_{LR}$ & 5.6346 & 0.0176 && 25.224 & $<0.0001$ && 3.0571 & 0.0804 \\
			& $S_R$ & 0.1559 & 0.6929 && 4.8372 & 0.0279 && 0.0009 & 0.9757 \\
			& $S_T$ & 6.5448 & 0.0105 && 21.455 & $<0.0001$ && 3.0775 & 0.0794 \\
			\bottomrule
			\multirow{2}{*}{$q$} && \multicolumn{2}{c}{$H_0: \tau_1 = 0$} && \multicolumn{2}{c}{$H_0: \tau_2 = 0$} && \multicolumn{2}{c}{$H_0: \tau_3 = 0$}\\
			\cline{3-4} \cline{6-7} \cline{9-10}
			&& Statistics & $p$-value && Statistics & $p$-value && Statistics & $p$-value\\
			\hline
			\multirow{4}{*}{0.25} & $S_W$ & 0.5470 & 0.4595 && 1.3128 & 0.2519 && 1.1553 & 0.2824 \\ 
			& $S_{LR}$ & 0.8809 & 0.3480 && 1.3403 & 0.2470 && 2.6462 & 0.1038 \\
			& $S_R$ & 0.1296 & 0.7189 && 0.5341 & 0.4649 && 0.0014 & 0.9700\\
			& $S_T$ & 0.8795 & 0.3483 && 1.3290 & 0.2490 && 2.6321 & 0.1047 \\
			\hline
			\multirow{4}{*}{0.5} & $S_W$ & 2.1577 & 0.1419 && 1.6959 & 0.1928 && 6.5109 & 0.0107 \\ 
			& $S_{LR}$ & 1.5583 & 0.2119 && 1.6458 & 0.1995 && 2.9014 & 0.0885 \\
			& $S_R$ & 0.2368 & 0.6265 && 0.7194 & 0.3963 && 0.0018 & 0.9663 \\
			& $S_T$ & 1.5671 & 0.2106 && 1.6174 & 0.2034 && 2.9183 & 0.0876 \\
			\hline
			\multirow{4}{*}{0.75} & $S_W$ & 2.4239 & 0.1195 && 3.1166 & 0.0775 && 3.7418 & 0.0531\\ 
			& $S_{LR}$ & 3.5582 & 0.0593 && 3.0260 & 0.0819 && 4.1351 & 0.0420 \\
			& $S_R$ & 0.2980 & 0.5851 && 0.7495 & 0.3866 && 0.0017 & 0.9673 \\
			& $S_T$ & 3.8473 & 0.0498 && 2.9802 & 0.0843 && 4.2370 & 0.0396 \\
			\bottomrule
	\end{tabular}}
	\label{tab:tests}
\end{table}

Figure \ref{fig:resid} displays the quantile versus quantile (QQ) plots with simulated envelope of the GCS and RQ residuals for the reduced log-HP quantile regression model with $q = 0.25, 0.5, 0.75$. This reduced model considers only significant predictors according to Table \ref{tab:tests}, such that $\widehat{Q}_i = \exp(\widehat{\beta}_0 + \widehat{\beta_1} x_{1i} + \widehat{\beta_2} x_{2i} + \widehat{\beta}_3 x_{3i})$ and $\widehat{\phi}=\exp(\widehat{\tau}_0)$, $i = 1,2, \ldots, 155$. From Figure \ref{fig:resid}, we observe that most of the points are within the bands, indicating that the GCS and RQ residuals in the reduced log-HP quantile regression model show good agreements with the expected EXP(1) and N(0,1) distributions, respectively.

\begin{figure}[H]
	\centering
	\subfigure[$q = 0.25$]{\includegraphics[scale = 0.2]{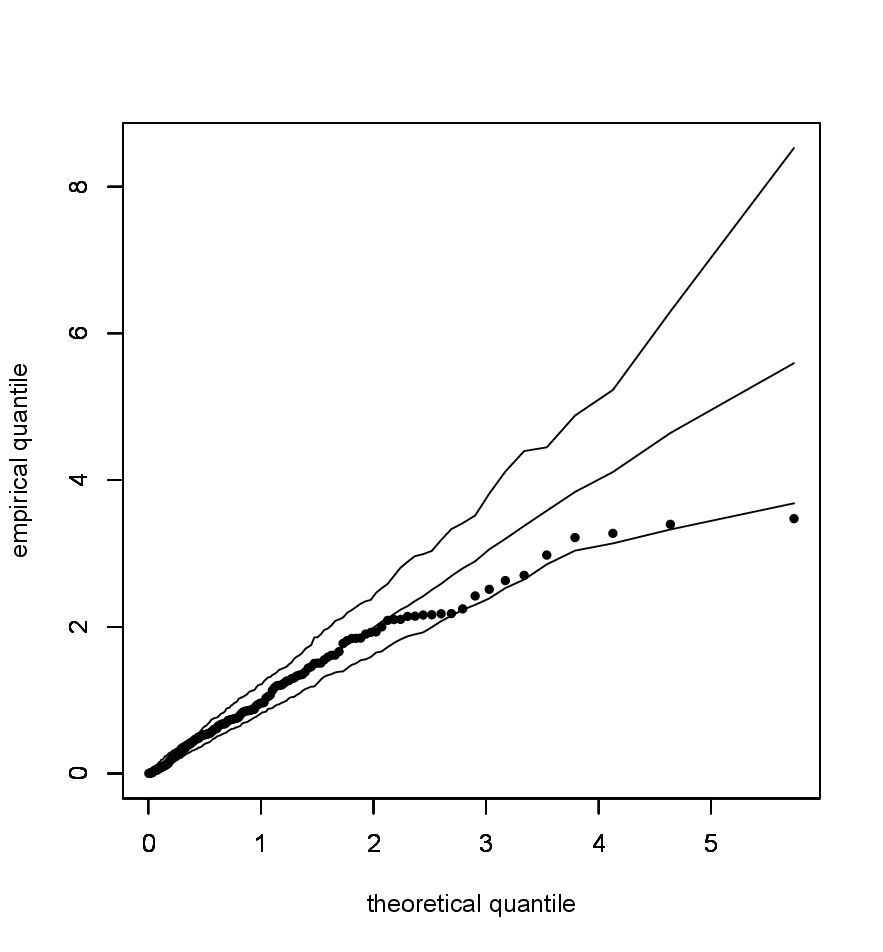}}
	\subfigure[$q = 0.5$]{\includegraphics[scale = 0.2]{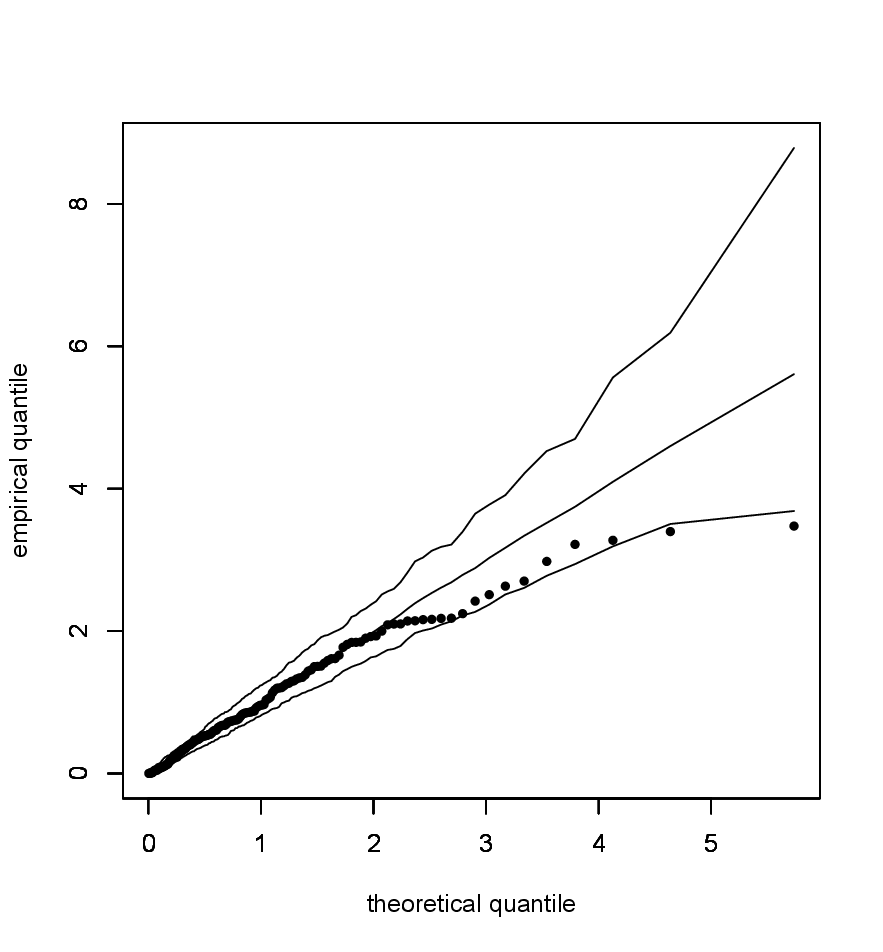}}
	\subfigure[$q = 0.75$]{\includegraphics[scale = 0.2]{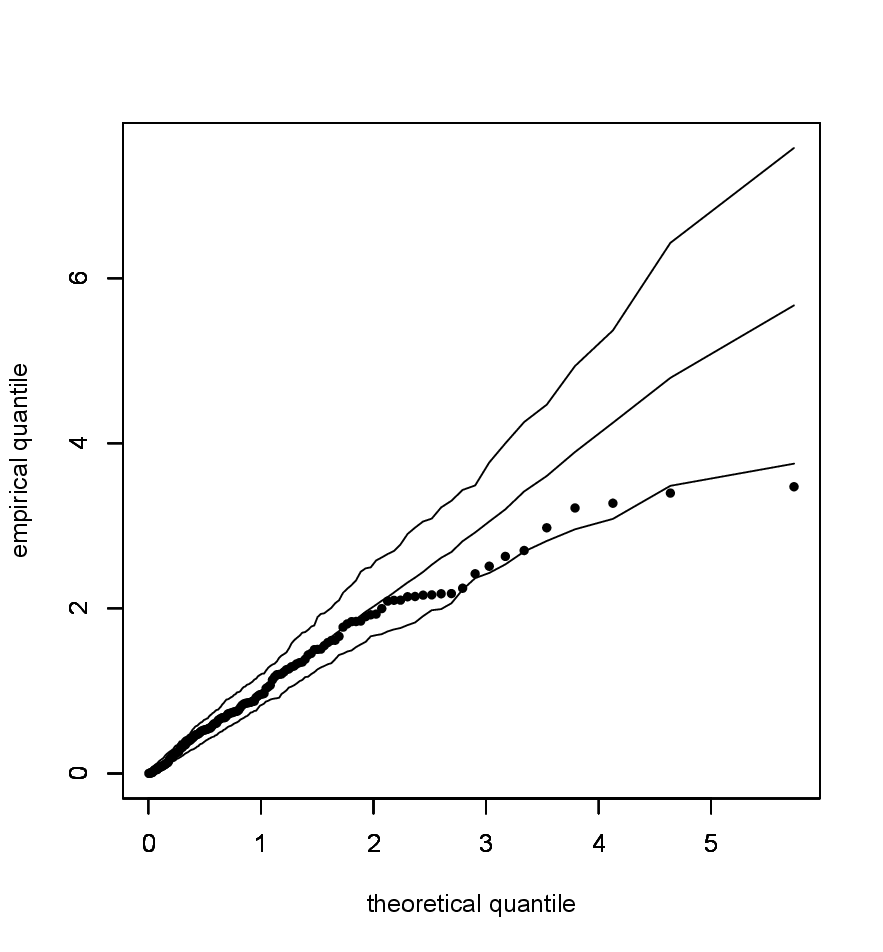}}\\
	\subfigure[$q = 0.25$]{\includegraphics[scale = 0.2]{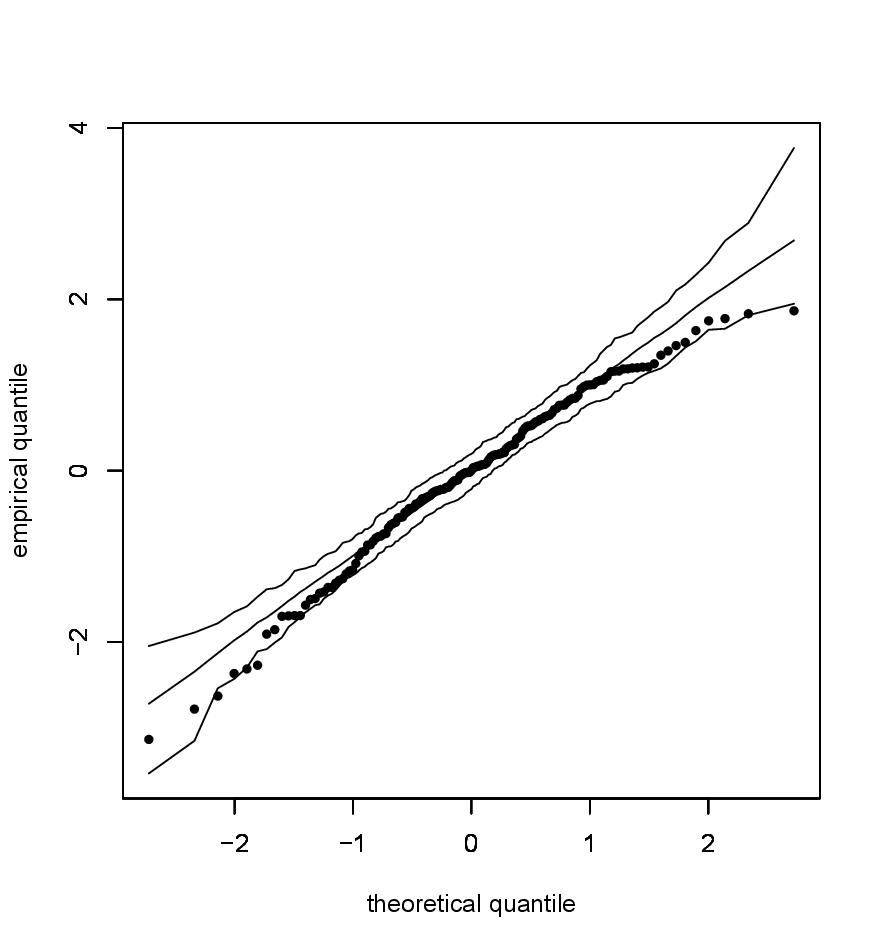}}
	\subfigure[$q = 0.5$]{\includegraphics[scale = 0.2]{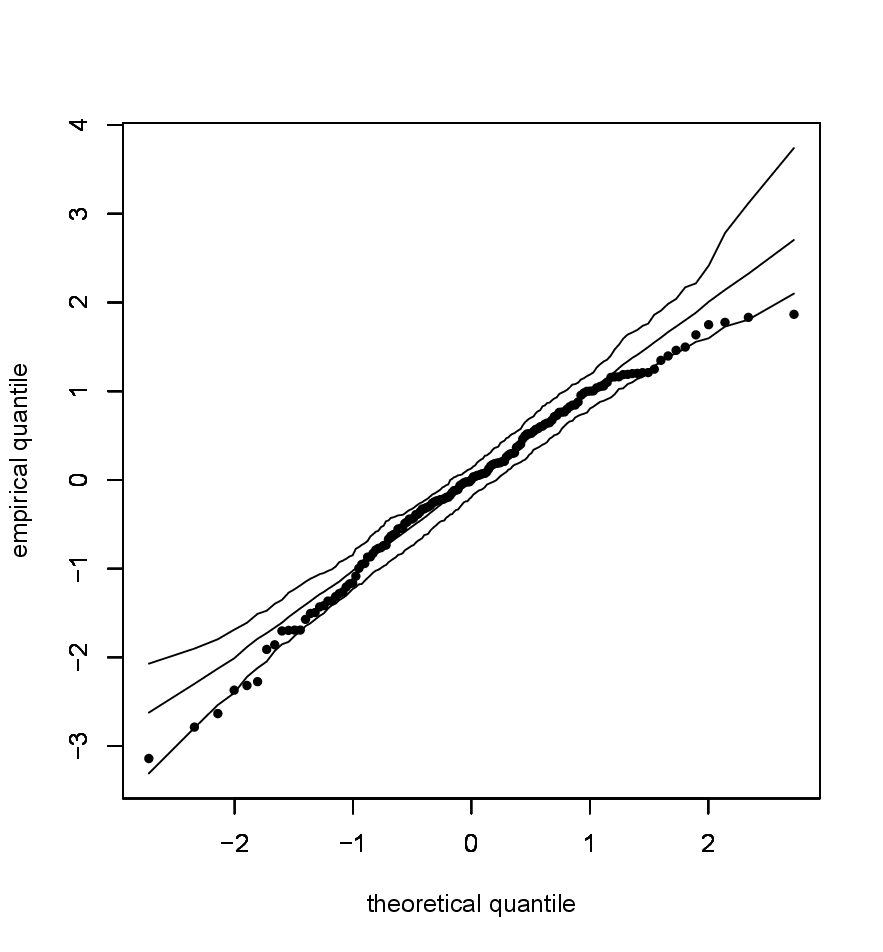}}
	\subfigure[$q = 0.75$]{\includegraphics[scale = 0.2]{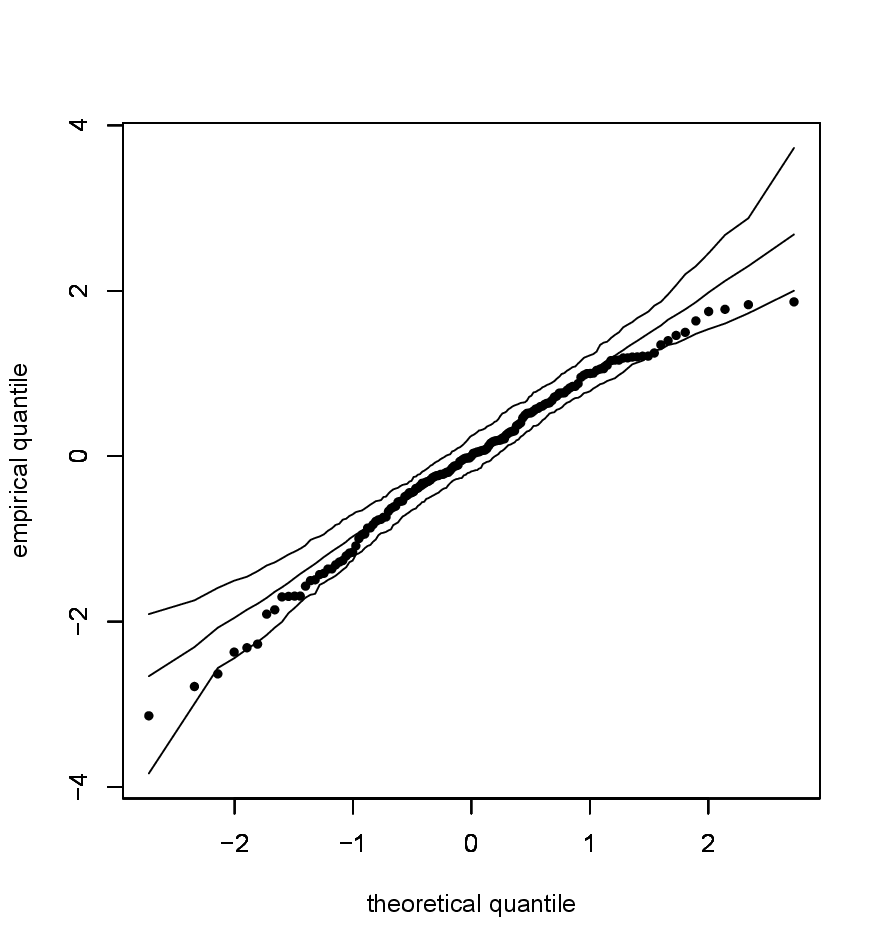}}
 \caption{\small {QQ plot and its envelope for the GCS {((a), (b) and (c))} and RQ { ((d), (e) and (f))} residuals for the reduced log-HP quantile regression model with $q = 0.25, 0.5, 0.75$.}}
	\label{fig:resid}
\end{figure}

\section{Concluding remarks}\label{sec:6}
  
We have introduce and analyzes a new class of quantile regression models based on a proposed reparameterization of log-symmetric distributions. The proposed models is a flexible alternative in the modeling of positive asymmetric data, in addition to being more informative, since it shows different effects of the covariates on the response along the quantiles of the response. Two Monte Carlo simulations were carried out to evaluate the behaviour of the maximum likelihood estimates, popular information criteria, generalized Cox-Snell and randomized quantile residuals, and Wald, score, likelihood ratio and gradient tests. The simulation results (a) have shown good performaces of the maximum likelihood estimates; (b) indicated that the success rates of AIC, BIC and AICc tend to increase with the increase in the sample size $n$, and that the rates are larger for the quantile regression models based on the log-normal and extended Birnbaum-Saunders distribution; (c) indicated that the generalized Cox-Snell and randomized quantile residuals conform well with their respective reference distributions; and (d) indicated that the Wald and likelihood ratio tests present null rejection rates closer to the corresponding nominal levels. Moreover, we observe that the power for testing three parameters simultaneously is greater for the score test, followed by the gradient test, and that the power for testing only one parameter with the score and gradient tests practically coincides. We have applied the proposed models to a real data set on movie industry. The application has shown the flexibility of the proposed models as different effects of the covariates on the response across the quantiles of the response can be studied.

\normalsize


\begin{thebibliography}{}

\bibitem[Akaike, 1974]{akaike1974}
Akaike, H. (1974).
\newblock A new look at the statistical model identification.
\newblock {\em IEEE transactions on automatic control}, 19(6):716--723.

\bibitem[Bhatti, 2010]{b:10}
Bhatti, C. (2010).
\newblock {The Birnbaum-Saunders autoregressive conditional duration model}.
\newblock {\em Mathematics and Computers in Simulation}, 80:2062--2078.

\bibitem[Bozdogan, 1987]{bozdogan1987}
Bozdogan, H. (1987).
\newblock Model selection and akaike's information criterion (aic): The general
  theory and its analytical extensions.
\newblock {\em Psychometrika}, 52(3):345--370.

\bibitem[Cox and Snell, 1968]{cs:68}
Cox, D.~R. and Snell, E. (1968).
\newblock {A general definition of residuals}.
\newblock {\em Journal of the Royal Statistical Society B}, 2:248--275.

\bibitem[Dunn and Smyth, 1996]{ds:96}
Dunn, P. and Smyth, G. (1996).
\newblock {Randomized quantile residuals}.
\newblock {\em Journal of Computational and Graphical Statistics}, 5:236--244.

\bibitem[Efron and Hinkley, 1978]{eh:78}
Efron, B. and Hinkley, D.~V. (1978).
\newblock {Assessing the accuracy of the maximum likelihood estimator:
  {O}bserved vs. expected Fisher information}.
\newblock {\em Biometrika}, 65:457--487.

\bibitem[Fang and Anderson, 1990]{fa:90}
Fang, K.~T. and Anderson, T.~W. (1990).
\newblock {\em {Statistical Inference in Elliptically Contoured and Related
  Distributions}}.
\newblock Allerton Press, New York, US.

\bibitem[Hao and Naiman, 2007]{hn:07}
Hao, L. and Naiman, D. (2007).
\newblock {\em Quantile Regression}.
\newblock Sage Publications, California , US.

\bibitem[Johnson et~al., 1994]{jkb:94}
Johnson, N.~L., Kotz, S., and Balakrishnan, N. (1994).
\newblock {\em {Continuous Univariate Distributions}}, volume~1.
\newblock Wiley, New York, US.

\bibitem[Johnson et~al., 1995]{jkb:95}
Johnson, N.~L., Kotz, S., and Balakrishnan, N. (1995).
\newblock {\em {Continuous Univariate Distributions}}, volume~2.
\newblock Wiley, New York, US.

\bibitem[Kano et~al., 1993]{kanoetal93}
Kano, Y., Berkane, M., and Bentler, P.~M. (1993).
\newblock Statistical inference based on pseudo-maximum likelihood estimators
  in elliptical populations.
\newblock {\em Journal of the American Statistical Association},
  88(421):135--143.

\bibitem[Koenker, 2005]{k:05}
Koenker, R. (2005).
\newblock {\em Quantile Regression}.
\newblock Cambridge University Press, Cambridge, UK.

\bibitem[Lange, 2010]{lange2010}
Lange, K. (2010).
\newblock {\em Numerical analysis for statisticians}.
\newblock Springer Science \& Business Media.

\bibitem[Leiva et~al., 2015]{Leiva-2015a}
Leiva, V., Ferreira, M., Gomes, M., and Lillo, C. (2015).
\newblock {Extreme value Birnbaum-Saunders regression models applied to
  environmental data}.
\newblock {\em Stochastic Environmental Research and Risk Assessment}.

\bibitem[Leiva et~al., 2020]{slgs:2020c}
Leiva, V., S\'anchez, L., Galea, M., and Saulo, H. (2020).
\newblock Global and local diagnostic analytics for a geostatistical model
  based on a new approach to quantile regression.
\newblock {\em Stochastic Environmental Research and Risk Assessment},
  34:1457--1471.

\bibitem[Lemonte and Cordeiro, 2009]{Lemonte-2009}
Lemonte, A. and Cordeiro, G. (2009).
\newblock {Birnbaum-saunders nonlinear regression models}.
\newblock {\em Computational Statistics and Data Analysis}, 53:4441--4452.

\bibitem[Lucas, 1997]{l:97}
Lucas, A. (1997).
\newblock {Robustness of the student $t$ based M-estimator}.
\newblock {\em Communications in Statistics: Theory and Methods},
  41:1165--1182.

\bibitem[McCullagh and Nelder, 1983]{nm:83}
McCullagh, P. and Nelder, J.~A. (1983).
\newblock {\em {Generalized Linear Models}}.
\newblock Chapman and Hall, London, UK.

\bibitem[Nocedal and Wright, 2006]{nw:06}
Nocedal, J. and Wright, S. (2006).
\newblock {\em {Numerical Optimization}}.
\newblock Springer-Verlag, New York, US.

\bibitem[{R Core Team}, 2020]{rcodeteam:20}
{R Core Team} (2020).
\newblock {\em R: A Language and Environment for Statistical Computing}.
\newblock R Foundation for Statistical Computing, Vienna, Austria.

\bibitem[Rao, 2005]{rao2005}
Rao, C. (2005).
\newblock Score test: historical review and recent developments.
\newblock In {\em Advances in ranking and selection, multiple comparisons, and
  reliability}, pages 3--20. Springer.

\bibitem[Rao, 1948]{rao:48}
Rao, C.~R. (1948).
\newblock Large sample tests of statistical hypotheses concerning several
  parameters with applications to problems of estimation.
\newblock In {\em Mathematical Proceedings of the Cambridge Philosophical
  Society}, volume~44, pages 50--57. Cambridge University Press.

\bibitem[Sakamoto et~al., 1986]{sakamoto1986}
Sakamoto, Y., Ishiguro, M., and Kitagawa, G. (1986).
\newblock Akaike information criterion statistics.
\newblock {\em Dordrecht, The Netherlands: D. Reidel}, 81.

\bibitem[S\'anchez et~al., 2020a]{slgs:2020a}
S\'anchez, L., Leiva, V., Galea, M., and Saulo, H. (2020a).
\newblock Birnbaum-saunders quantile regression and its diagnostics with
  application to economic data.
\newblock {\em Applied Stochastic Models in Business and Industry}, page pages
  in press available at http://doi.org/10.1002/asmb.2556.

\bibitem[S\'anchez et~al., 2020b]{slgs:2020b}
S\'anchez, L., Leiva, V., Galea, M., and Saulo, H. (2020b).
\newblock Birnbaum-saunders quantile regression models with application to
  spatial data.
\newblock {\em Mathematics}, 8:1000.

\bibitem[Saulo and Le\~ao, 2017]{saulo2017log}
Saulo, H. and Le\~ao, J. (2017).
\newblock On log-symmetric duration models applied to high frequency financial
  data.
\newblock {\em Economics Bulletin}, 37:1089--1097.

\bibitem[Schwarz, 1978]{schwarz1978}
Schwarz, G. (1978).
\newblock Estimating the dimension of a model.
\newblock {\em The annals of statistics}, 6(2):461--464.

\bibitem[Sugiura, 1978]{sugiura1978}
Sugiura, N. (1978).
\newblock Further analysts of the data by akaike's information criterion and
  the finite corrections: Further analysts of the data by akaike's.
\newblock {\em Communications in Statistics-Theory and Methods}, 7(1):13--26.

\bibitem[Terrell, 2002]{te:02}
Terrell, G.~R. (2002).
\newblock The gradient statistic.
\newblock {\em Computing Science and Statistics}, 34(34):206--215.

\bibitem[Vanegas and Paula, 2017]{vanegas2017log}
Vanegas, L. and Paula, G.~A. (2017).
\newblock Log-symmetric regression models under the presence of non-informative
  left-or right-censored observations.
\newblock {\em Test}, 26:405--428.

\bibitem[Vanegas and Paula, 2015]{vanegas2015}
Vanegas, L.~H. and Paula, G.~A. (2015).
\newblock A semiparametric approach for joint modeling of median and skewness.
\newblock {\em Test}, 24(1):110--135.

\bibitem[Vanegas and Paula, 2016a]{vp:16a}
Vanegas, L.~H. and Paula, G.~A. (2016a).
\newblock An extension of log-symmetric regression models: {R} codes and
  applications.
\newblock {\em Journal of Statistical Simulation and Computation},
  86:1709--1735.

\bibitem[Vanegas and Paula, 2016b]{vanegasp:16a}
Vanegas, L.~H. and Paula, G.~A. (2016b).
\newblock Log-symmetric distributions: statistical properties and parameter
  estimation.
\newblock {\em Brazilian Journal of Probability and Statistics}, 30:196--220.

\bibitem[Ventura et~al., 2019]{vslm:18}
Ventura, M., Saulo, H., Leiva, V., and Monsueto, S.~E. (2019).
\newblock {Log-symmetric regression models: information criteria and
  application to movie business and industry data}.
\newblock {\em Applied Stochastic Models in Business and Industry},
  35:963--977.

\bibitem[Wald, 1947]{w:47}
Wald, A. (1947).
\newblock {\em {Sequential Analysis}}.
\newblock Wiley, New York, US.

\bibitem[Weisberg, 2014]{weisberg:14}
Weisberg, S. (2014).
\newblock {\em Applied Linear Regression}.
\newblock John Wiley \& Sons, Hoboken, New Jersey, fourth edition edition.

\bibitem[Wilks, 1938]{wk:38}
Wilks, S.~S. (1938).
\newblock The large-sample distribution of the likelihood ratio for testing
  composite hypotheses.
\newblock {\em The annals of mathematical statistics}, 9(1):60--62.

\end{thebibliography}

\end{document}